\numberwithin{equation}{section}
\newtheorem{theorem}{Theorem}[section]
\newtheorem{lemma}[theorem]{Lemma}
\newtheorem{proposition}[theorem]{Proposition}
\newtheorem{corollary}[theorem]{Corollary}
\newenvironment{remark}[1][Remark]{\begin{trivlist}
\item[\hskip \labelsep {\bfseries #1}]}{\end{trivlist}}
\def\Dcal{\mathcal{D}}
\def\Ecal{\mathcal{E}}
\def\Fcal{\mathcal{F}}
\def\R{{\mathbb R}}
\def\Z{\mathbb{Z}}
\def\N{\mathbb{N}}
\newcommand{\PiD}{\Pi_{\Dcal}}
\newcommand{\Dconv}{\mathcal{C}}
\newcommand{\PiA}{\Pi_{\Dconv}}
\def\Id{{\mathrm Id}}
\DeclareMathAlphabet{\bi}{OML}{cmm}{b}{it}
\DeclareMathAlphabet{\bcal}{OMS}{cmsy}{b}{n}
\begin{document}

\title[The dynamics of bargaining]{A projected gradient dynamical system modeling the dynamics of bargaining}

\author[D. Pinheiro]{D. Pinheiro}
\address[D. Pinheiro]{CEMAPRE, ISEG, Universidade T\'ecnica de Lisboa, Lisboa, Portugal}
\email{dpinheiro@iseg.utl.pt}
\author[A. A. Pinto]{A. A. Pinto}
\address[A. A. Pinto]{LIAAD-INESC Porto LA and Dep of Mathematics, Faculty of Science, University of Porto, Portugal}
\email{aapinto@math.uminho.pt}
\author[S. Z. Xanthopoulos]{S. Z. Xanthopoulos}
\address[S. Z. Xanthopoulos]{University of the Aegean, Samos, Greece}
\email{sxantho@aegean.gr}
\author[A. N. Yannacopoulos]{A. N. Yannacopoulos}
\address[A. N. Yannacopoulos]{Athens University of Economics and Business, Athens, Greece}
\email{ayannaco@aueb.gr}
\date{}

\begin{abstract}
We propose a projected gradient dynamical system as a model for a bargaining scheme for an asset for which the two interested agents have personal valuations which do not initially coincide. The personal valuations are formed using subjective beliefs concerning the future states of the world and the reservation prices are calculated using expected utility theory. The agents are not rigid concerning their subjective probabilities and are willing to update them under the pressure to reach finally an agreement concerning the asset. The proposed projected dynamical system, on the space of probability measures, provides a model for the evolution of the agents beliefs during the bargaining period and is constructed so that agreement is reached under the minimum possible deviation of both agents from their initial beliefs. The convergence results are shown using techniques from convex dynamics and Lyapunov function theory. 

\noindent{\it MSC2000\/}: 37N40, 91B26, 91B24, 90C30, 90C31

\noindent{\it Keywords\/}: Discrete time projected dynamical systems; optimization; bargaining
\end{abstract}

\maketitle

\section{Introduction}
Projected dynamical systems  have attracted a lot of attention from the dynamical systems community recently, as they present a lot of interesting features both from the theoretical point of view, as well as from the point of view of applications. Such applications are mainly related to optimization theory in the form of gradient schemes for the solution of convex optimization problems. However, there are also important related applications to variational inequalities, general equilibrium theory, population dynamics and mechanics, among others. Most of the theory has been developed for continuous time dynamical systems \cite{Xia_Wang,Xia}. The aim of the present paper is twofold: (a) to investigate a novel application of projected dynamical systems into the theory of bargaining, and using that as a motivation (b) to investigate the theory of projected dynamical systems in the case of discrete-time dynamics. We believe that the application is interesting and wide enough to be studied in its own right. On the other hand, the results in this paper concerning stability and convergence of projected dynamical systems are general enough to apply to a wide range of similar dynamical systems arising in different applications or as numerical schemes for continuous time projected dynamical systems, and thus in our opinion worthy of reporting.

Bargaining theory is a branch of game theory that plays an important r\^ole in various applications in economic theory. Quoting Muthoo \cite{Muthoo} ``a bargaining situation is a situation in which two players\footnote{A ``player'' can be either an individual, or an organization (such as a firm, or a country).} have a common interest to co-operate. To put it differently, the players can mutually benefit from reaching agreement on an outcome from a set of possible outcomes (that contains two or more elements), but have conflicting interests over the set of outcomes.'' This is evidently a rather general description, under which a great variety of situations fall, and it is quite understandable that bargaining theory has become rather popular as a tool for economic modeling. To mention just a few applications, bargaining theory has found applications in diverse fields such as modeling of trade union negotiations, negotiations between countries, optimal asset ownership in firms, moral hazard in teams, bilateral monopoly, market games, among others, see e.g. \cite{Muthoo} and the extensive list of references both from the point of view of theoretical development of the subjects as well as from the point of view of applications.

One of the first seminal contributions in the field came from the work of Nash \cite{Nash} who proposed an axiomatic approach to the problem of bargaining based on the maximization of the product of utility functions of the two players (agents) involved.  His approach was a static one, but opened the road for the embedding of bargaining problems within the body of game theory, and provided important insights on the processes and variables that play on important r\^ole on the agents' decisions. His work made an impact, and generated a lot of important scientific research on the field, as well as alternative axiomatic approaches (see e.g. \cite{Kalai}). Another seminal contribution was Rubinstein's model \cite{Rubinstein} which undertook a more realistic approach towards the bargaining problem. This was an alternating offers model, according to which the two agents were considering alternate offers in splitting an asset comparing them against their reservation quantities. This is a dynamic game in which discounting plays an important role for its resolution; it is the effect of discounting that leads the agents into accepting an offer in the long run, so that the game terminates in finite time. The ability of the agents to wait was related to the bargaining power of the agents, the agent that can wait longer (has lower discount factor) has a stronger bargaining position than the other. Furthermore, in the appropriate limit, the Rubinstein model leads to the Nash bargaining solution, a remarkable fact that may lead to considering the Rubinstein model as a strategic justification for the Nash bargaining model. This is the limit as the absolute magnitudes of the frictions in the bargaining process are small. In turn, the Rubinstein model has ignited a large strand of important literature on bargaining, dealing with the evolution of the bargaining process both in theory and in applications (see e.g. \cite{Muthoo}).

In this paper we propose an alternative model for bargaining, based on the theory of projected gradient dynamical systems, for the determination of the price of a contingent asset whose real value depends on the future state of the economy. We assume that the two agents have their personal valuations of the asset in terms of reservation prices determined by subjective expected utility functions calculated under different probability measures concerning the economy. The agents are not strict concerning the probability of the future states of the world and are willing to update their beliefs on this matter, in an alternating fashion so that agreement is reached within a chosen time horizon. This is not an unreasonable assumption, as the agents may be assumed to derive some sort of information about their opponents beliefs from the offers they make concerning the reservation price of the asset. The update of beliefs is made in such a way that the agents deviate as small as possible from their initial beliefs, at each play of the game, while the update is made so that the new reservation prices are now closer to reaching agreement. This setup models the above quote from Muthoo according to which the two agents have common interest to cooperate.

The strategy space for the game is the space of probability measures on the future state of the economy. On account of that, the mechanism of the game is expressed as a projected gradient dynamical system in the space of discrete measures. This is related to a minimization problem -- that of minimizing a function modeling the joint deviation of the two agents new probability measures (beliefs) from their initial probability measures concerning the future state of the economy. Using techniques from convex dynamics and Lyapunov functions, we show that this projected gradient dynamical system converges to a fixed point in the space of measures which in turn provides a unique common price for the asset. 

This paper is organized as follows. In section \ref{setting} we describe the setup we will work on throughout the paper and formulate the bargaining scheme as a minimization problem, identifying its dual problem and possible alternative formulations as a penalty methods. In section \ref{Opt_prob} we discuss the convexity properties of the reservation prices of the buyer and the seller of the contingent claim and provide conditions for the minimization problem introduced in the previous section to have unique solutions. Section \ref{DUS} is devoted to the analysis of a class of discrete-time projected dynamical systems modeling the bargaining scheme as a sequence of ``small steps'', each given by an iteration of a projected dynamical system which is related to a gradient scheme associated with a penalized optimization problem introduced in section \ref{setting}. Namely, we study the robustness and stability of a general class of discrete-time projected dynamical systems, and then apply these general results to the family of dynamical systems modeling the agents bargaining process. In section \ref{exp_ut}, we restrict ourselves to the interesting case where the two agents involved in the trade make their decisions using exponential utility functions. Such utility functions satisfy the usual Inada conditions under which the results of the preceding sections were proved, providing a clear example and strong motivation for such results.

\section{Problem formulation} \label{setting}

Consider two agents, agent A being the seller of a (contingent) claim  and agent B the buyer. The payoff of the claim depends on the future state of the economy, $\omega_{k}$, $k=1,\cdots, K$, and is considered as a random variable $F$ defined on $(\Omega,\Fcal)$ where $\Omega$ is the set of states of the world and $\Fcal$ a $\sigma$-algebra on $\Omega$. We allow the two agents to have subjective beliefs concerning the future states of the world which are in principle different, and are modeled by two probability measures $Q_A$ and $Q_B$, on $(\Omega, \Fcal)$ respectively.  The beliefs of the agents $Q_{\beta}$, $\beta=A,B$, determine their views concerning the value of the claim $F$, therefore determine the bid and the ask price respectively for $F$.  We assume that the agents are not firm regarding their beliefs $Q_{\beta}$ and are willing to update them as part of a bargaining scheme whereby they exchange beliefs and reach an agreement on the price of the asset.

This situation is common in various situations in economics. For instance one may consider the case where $F$ is a financial asset e.g. a derivative asset in an incomplete market. Then, the use of the underlying financial market does not dictate the existence of a unique measure which is defined by general equilibrium arguments (Arrow-Debreu measure) and which acts at the common belief for both agents (as would happen in the case of complete markets) and subjective beliefs of the agents as well as personal tastes with respect to risk enter the decision process. Our approach is even more relevant in other cases; e.g. in cases where an asset is traded outside the market or in bilateral agreements such as mergers and acquisitions where subjectivity plays a very important role in the determination of the price.

We assume that the agents use utility pricing considerations to specify the price of the asset to be traded. In particular, both agents report an expected utility  $E_{Q_{\beta}}[U_{\beta}]$ under their subjective probability measures $Q_{\beta}$, $\beta=A,B$. The utility functions $U_{\beta}: \R \rightarrow \R$ satisfy the usual 
 Inada conditions \cite{Inada}, that is, the utility functions have value zero when $x=0$, are strictly increasing, strictly concave and continuously differentiable, and their first derivatives satisfy the following asymptotic conditions
\begin{equation*}
\lim_{x\rightarrow -\infty} U_{\beta}'(x) = +\infty \ ,  \qquad \lim_{x\rightarrow +\infty} U_{\beta}'(x) = 0 \ , \qquad \beta=A,B \ .
\end{equation*}

The reservation price $P_{\beta}$ for agent $\beta$ is given by the solution of the following equation
\begin{equation}\label{res_pr_AB}
U_{\beta}(w_{\beta}) -r_{\beta} = E_{Q_{\beta}}	\left [U_{\beta}(w_{\beta} + \ell_{\beta} P_{\beta} - \ell_{\beta} F)\right ] \ , 
\end{equation} 
where $\ell_{A}=1$ and $\ell_{B}=-1$. In the above equation
 $w_{\beta} \in \R$ is the initial wealth of agent ${\beta}$, considered to be fixed and deterministic, and $r_{\beta}\in R_{\beta} \subseteq \R_0^+$ can be considered to be the level of risk undertaken by agent ${\beta}$. If $r_{\beta}$ is zero then we are back to the standard reservation price whereas if $r_{\beta}$ is positive, we model the situation where agent ${\beta}$ is willing to go below her initial utility level so that the transaction is made possible. The introduction of $r_{\beta}$ can also model the fact that the utility function may not be known exactly and can even be a random utility. 
The asymmetry between $A$ and $B$ introduced by $\ell_{\beta}$ is  due to the opposite position undertaken by the two agents on the asset to be traded. Therefore an alternative interpretation of $r_{\beta}$ can be as a measure of how much agent $A$ ($B$) is willing to lower (raise) her reservation price so that they both agree upon an acceptable price for $F$.  We wish however to stress the fact that the proposed bargaining dynamics do {\bf not} depend on the introduction of the term $r_{\beta}$, which can be taken without loss of generality to be equal to zero \footnote{ This will correspond to the case where no misspecification of utility is allowed in the model.}. 
In general, the reservation prices $P_{\beta}$ will depend on $Q_{\beta}$ and $r_{\beta}$, i.e., $P_{\beta}:=P_{\beta}(r_{\beta},Q_{\beta})$.

We now propose a bargaining scheme between the two agents.
 We suppose that throughout the bargaining period each agent $\beta$ keeps the level of risk $r_{\beta}$ fixed. Assume that the transaction takes place on a fixed interval $[0,T]$ and the agents report reservation prices $P_{\beta}(t)$ at discrete times $t\in {\mathbb N}$. At time $t=0$ the two agents state their reservation prices $P_{\beta}(0)$  according to their initial beliefs $Q_{\beta}(0)$, $\beta=A,B$. Assume that these initial beliefs are such that $P_A(0)> P_B(0)$, therefore the two agents do not agree on a transaction for the exchange of the asset $F$.  As the beliefs of the agents concerning future states of the world are not firm, they are both willing to review their subjective probabilities (beliefs) $Q_{\beta}$ so as to reach a new reservation price, hoping to reach an agreement.  Then, for each instant of time $t\in {\mathbb N}$, agent $\beta$ ($\beta=A,B$) adopts the set of beliefs $Q_{\beta}(t)$ and states her reservation price $P_{\beta}(t)$ as the solution of equation \eqref{res_pr_AB} with $Q_{\beta}$ replaced by $Q_{\beta}(t)$.   A transaction takes place at time $t$ if and only if $P_A(t)\le P_B(t)$. By the monotonicity properties of the utility functions $U_A$ and $U_B$ one can easily see that the above constraint is binding and therefore we will only consider the case $P_A(t) = P_B(t)$.

The update of subjective probabilities of the agents each period is done in such a manner that given $r_{\beta}$ (possibly $0$) the reservation prices should eventually converge, therefore 
 the two agents are expected to update their beliefs until reaching some set of beliefs $Q_A^*$ and $Q_B^*$ for which the corresponding reservation prices are such that the equality $P_A(r_A,Q_A^*)=P_B(r_B,Q_B^*)$ is satisfied and trade takes place. The agents are willing to deviate as little as possible from their original beliefs \cite{BPPXY_2010} therefore it is natural to assume that $(Q_A^*,Q_B^*)$ is as close as possible to $(Q_{A}(0),Q_{B}(0))$. 
 This remark indicates that the choice of $(Q_A^*,Q_B^*)$ can be obtained as the solution of a properly chosen optimization problem. This optimization problem will also lead us to a dynamical theory of bargaining. 
 
  Let $\Delta^K$ be the unit simplex in $\R^K$ and let $\Dcal=\Delta^K\times\Delta^K$ be the state space for the agents beliefs. The distance of a subjective probability measure $Q_{\beta}$ from the reference measure $Q_{\beta}(0)$ is modeled by a strictly convex function 
  $\psi_{\beta} : \Delta^{K} \times \Delta^{K} \rightarrow \R^{+}$, such that $\psi(x,x)=0$. The value of the function $\psi_{\beta}(Q_{\beta},Q_{\beta}(0))$  models how agent $\beta$ evaluates the distance of the new subjective probability measure $Q_{\beta}$ from the original belief $Q_{\beta}(0)$. The smaller this value the more willing is agent $\beta$ to adopt the new subjective probability measure $Q_{\beta}$ in place of her original choice $Q_{\beta}(0)$. Of course if it were not for the need to change $Q_{\beta}$ to get closer to an agreement in price, the agent would stick to her original choice. The agents do not decide on their own, they have to take into account the beliefs of each other. Therefore, in reaching a common agreement  the unwillingness of both to deviate too much from their original subjective probabilities should be taken into account. This may be modeled by the convex combination
  \begin{equation}
  \lambda  \psi_A(Q_A,Q_A(0)) + (1-\lambda) \psi_B(Q_B,Q_B(0))
  \nonumber
  \end{equation}
  where $\lambda \in [0,1]$ is a real parameter that models which agent is allowed to deviate less from her original subjective probability. If $\lambda=0$ then agent $B$ has the dominant role in this transaction as it is only her unwillingness to deviate from original beliefs that is taken into account, whereas no respect for agent A feelings towards this matter is paid. If $\lambda=1$ the opposite happens. Any other choice of $\lambda \in (0,1)$ corresponds to ``sharing'' the discontent caused by the change of subjective probability measure between the two agents. Therefore, in some respect $\lambda$ may be thought of as a relative measure of ``bargaining power'' of two agents.
  
 Therefore, according to the above discussion, for some fixed levels of risk $r_A$ and $r_B$, we define the optimal beliefs $(Q_A^*,Q_B^*)\in\Dcal$ as the solution to the following minimization problem   
\begin{eqnarray}\label{PRIMAL} 
&&\min_{(Q_A,Q_B)\in\Dcal} \; \lambda \psi_A(Q_A,Q_A(0)) + (1-\lambda) \psi_B(Q_B,Q_B(0)) \nonumber \\
&& \mbox{subject to}\\
&&P_A(r_A,Q_A) \le P_B(r_B,Q_B). \nonumber 
\end{eqnarray}

Resorting to an analogy between the optimization problem \eqref{PRIMAL} and the classical utility maximization problem \cite[Ch. 3]{MasCollel_Whinston_Whinston}, it is possible to check that the dual problem of \eqref{PRIMAL} can be written as
\begin{eqnarray}\label{DUAL} 
&&\max_{ (Q_A,Q_B)\in\Dcal  } P_B(r_B,Q_B) - P_A(r_A,Q_A)  \nonumber \\
&&\mbox{subject to }\\
&& \lambda \psi_A(Q_A,Q_A(0)) + (1-\lambda) \psi_B(Q_B,Q_B(0)) \le \beta . \nonumber  
\end{eqnarray}
Note that the dual problem \eqref{DUAL} corresponds to maximizing the difference of the reservation prices by the buyer and by the seller, thus ensuring the trade of the contingent claim, under the constraint that the agents beliefs do not change more than a fixed value as measured by the functions $\psi_A$ and $\psi_B$ \cite{BPPXY_2010}. As is well known, duality implies that the solutions of the two problems above are intimately related. See Bertsekas textbook \cite{Bertsekas} for further details on nonlinear optimization and duality.

In section \ref{DUS} we will address the minimization problem \eqref{PRIMAL} using a projected gradient scheme associated with a couple of penalized optimization problems. The first such problem is of the form
\begin{eqnarray*}\label{PRIMAL_penalty} 
\min_{(Q_A,Q_B)\in\Dcal}\; \lambda \psi_A(Q_A,Q_A^0) + (1-\lambda) \psi_B(Q_B,Q_B^0) +\frac{1}{\epsilon}\Phi(P_B(r_B,Q_B) - P_A(r_A,Q_A)) \ , 
\end{eqnarray*}
where $\Phi:\R\rightarrow\R$ is a strictly convex function with a (global) minimum $\Phi(0)$ and $\epsilon$ is a small parameter.  The second penalized optimization problem is of the form 
\begin{eqnarray*}
\min_{(Q_A,Q_B)\in\Dcal}\; && \lambda \psi_A(Q_A,Q_A^0) + (1-\lambda) \psi_B(Q_B,Q_B^0) \\
&& + \frac{1}{\epsilon}\Phi\left[P_B(r_B,Q_B) - \lambda P_A(r_A,Q_A^0)-(1-\lambda)P_B(r_B,Q_B^0)\right]  \\
&& + \frac{1}{\epsilon}\Phi\left[\lambda P_A(r_A,Q_A^0)+(1-\lambda)P_B(r_B,Q_B^0)-P_A(r_A,Q_A)\right]  \nonumber \ ,
\end{eqnarray*}
where $\Phi$ and $\epsilon$ are as described above.

By taking the gradient of the functionals to be minimized and projecting it to the space of discrete probability measures in a appropriate way, the two penalized optimization problems above enable us to define discrete-time dynamical systems which, under appropriate conditions, model the bargaining process under which the two agents agree on a unique price to trade the contingent claim $F$. We note that such projected gradient dynamical systems can be idealized as a repetition in time of the above optimization problems, with the agents $\beta$ choosing the subjective probability $Q_{\beta}(t+1)$ so as to get closer to an agreement in reservation prices but at the same time with the minimum possible common deviation from their initial beliefs $Q_{\beta}(0)$. 

\begin{remark}
We remark that $1/\epsilon$ can be thought of as a discount factor in the model, that is, a factor modeling the degree of impatience for the agents to reach an agreement. The smaller $\epsilon$ is, the more impatient the agents are, and this in some sense characterizes the time for an agreement to be reached. Note that discounting plays a very important role in bargaining models in general, e.g. \cite{Muthoo} and references therein. 
\end{remark}

Even though the two penalized optimization problems above share the same form, i.e. a sum of a convex linear combination of ``distance functions'' with a penalty term, the differences in their penalty terms (dependence on $\lambda$ and past beliefs) provide very distinct dynamical behaviors for the corresponding projected gradient systems. We will provide a detailed description of such dynamics in section \ref{DUS}.

\section{Preliminary results on  reservation prices and  the optimization problem}\label{Opt_prob}

We will devote the beginning of this section to an analysis of several properties of the agents reservation price functions. We will then use these properties to provide conditions for the existence and uniqueness of solutions to the minimization problem \eqref{PRIMAL} where one looks for optimal beliefs for fixed levels of risks.

\subsection{Reservation price functions properties} 
In this section we provide a detailed analysis of the agents reservation price functions properties. We prove that such functions are well-defined before moving on to list some of its convexity properties. We end this section with results concerning the regularity of these functions.

The standing assumption throughout this section is that the utility functions of the agents $U_{\beta}$ satisfy the Inada conditions. To ease notation we define
\begin{equation}\label{functional_definition}
L(Q_A,Q_B\,;\, Q_A^0,Q_B^0,\lambda):=  \lambda  \psi_A(Q_A,Q_A^0) + (1-\lambda) \psi_B(Q_B,Q_B^0)
\end{equation}
which is considered as a function of $(Q_A,A_B) \in \Dcal$, the terms after the semicolon considered as parameters.  
We will also use the notation $\ell_{A}=1$ and $\ell_{B}=-1$.

\begin{lemma}\label{P_wd}
The following hold for the agents reservation prices:
\begin{enumerate}
\item The functions $P_{\beta} :\R_0^+\times\Delta^K\rightarrow\R$ are well defined.
\item The functions $\ell_{\beta} P_{\beta}$ are convex functions of $r_{\beta} \in\R_0^+$ for fixed beliefs $Q_{\beta} \in\Delta^K$.
\item The functions $\ell_{\beta} P_{\beta}$ are concave functions of  $Q_{\beta} \in\Delta^K$ for fixed values of $r_{\beta}\in\R_0^+$;
\end{enumerate}
\end{lemma}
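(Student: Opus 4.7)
The plan is to address the three claims in turn, all by implicit differentiation of the defining equation \eqref{res_pr_AB}. For part (1), I set $\Phi_\beta(P) := E_{Q_\beta}[U_\beta(w_\beta + \ell_\beta P - \ell_\beta F)]$: strict monotonicity of $U_\beta$ makes $\Phi_\beta$ strictly monotone and continuous in $P$ (increasing for $\ell_\beta = 1$, decreasing for $\ell_\beta = -1$), and the Inada conditions together with concavity force the image of $\Phi_\beta$ to contain the target value $U_\beta(w_\beta) - r_\beta$ for any $r_\beta \ge 0$. The intermediate value theorem then yields a unique solution $P_\beta = P_\beta(r_\beta, Q_\beta)$.

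For part (2), fix $Q_\beta$ and write $\tilde P := \ell_\beta P_\beta$. Differentiating \eqref{res_pr_AB} once in $r_\beta$ gives $\tilde P'(r_\beta) = -1/A$, where $A := E_{Q_\beta}[U_\beta'(w_\beta + \tilde P - \ell_\beta F)] > 0$. Differentiating again and using $dA/dr_\beta = D\,\tilde P'(r_\beta)$ with $D := E_{Q_\beta}[U_\beta''(\cdot)] \le 0$, one finds $\tilde P''(r_\beta) = D\,\tilde P'(r_\beta)/A^{2} \ge 0$, so $\ell_\beta P_\beta$ is convex in $r_\beta$.

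For part (3), fix $r_\beta$ and take a tangent direction $v \in \R^{K}$ to the simplex (so $\sum_k v_k = 0$) at $Q_\beta$. Setting $\tilde P(t) := \ell_\beta P_\beta(r_\beta, Q_\beta + tv)$ and differentiating the identity $\sum_k (Q_{\beta,k} + t v_k)\, U_\beta(w_\beta + \tilde P(t) - \ell_\beta f_k) = U_\beta(w_\beta) - r_\beta$ twice at $t = 0$ produces
\begin{equation*}
\tilde P'(0) = -\frac{v^{T}a}{A}, \qquad \tilde P''(0) = \frac{2(v^{T}a)(v^{T}b) - (D/A)(v^{T}a)^{2}}{A^{2}},
\end{equation*}
where $a_k := U_\beta(w_\beta + \tilde P - \ell_\beta f_k)$, $b_k := U_\beta'(w_\beta + \tilde P - \ell_\beta f_k)$, and $A$, $D$ are as in part (2).

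The step I expect to be the main obstacle is showing the numerator $2(v^{T}a)(v^{T}b) - (D/A)(v^{T}a)^{2}$ is non-positive for every simplex-tangent $v$. The term $-(D/A)(v^{T}a)^{2}$ is non-negative and therefore works against concavity, so one has to control the cross term $(v^{T}a)(v^{T}b)$ carefully. The plan is to exploit the fact that $a_k$ and $b_k$ are $U_\beta$ and $U_\beta'$ evaluated at the same scalars $z_k := w_\beta + \tilde P - \ell_\beta f_k$: after relabeling states so that $z_1 \le \cdots \le z_K$, the sequence $a_k$ is non-decreasing and $b_k$ non-increasing, and Abel's summation by parts rewrites both $v^{T}a$ and $v^{T}b$ as weighted sums of the partial sums $V_k := \sum_{j \le k} v_j$ (which vanish at $k = K$ because $\sum_k v_k = 0$). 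Combining this representation with the identity $b_{k+1} - b_k = \int_{z_k}^{z_{k+1}} U_\beta''(s)\,ds$ and a Cauchy--Schwarz-type estimate should absorb the cross term into the $(D/A)$ correction and yield $\tilde P''(0) \le 0$, hence the desired concavity of $\ell_\beta P_\beta$ in $Q_\beta$.
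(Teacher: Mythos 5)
Your treatments of parts (1) and (2) are essentially fine. For (1) the monotone-plus-IVT argument is the same as the paper's (phrased there as invertibility of the expected utility $\overline{U}_{\beta}$); you should just record that the target value $U_{\beta}(w_{\beta})-r_{\beta}$ lies in the range of $\overline{U}_{\beta}$, which follows from $r_{\beta}\ge 0$ and the Inada behaviour at $-\infty$. For (2) your second implicit differentiation uses $U_{\beta}''$, which the hypotheses do not provide ($U_{\beta}$ is only assumed continuously differentiable); the paper instead runs a chord argument directly from the definition of convexity, and you could repair your version without $U_{\beta}''$ by noting that $r\mapsto E_{Q_{\beta}}[U_{\beta}'(w_{\beta}+\ell_{\beta}P_{\beta}(r)-\ell_{\beta}F)]$ is monotone, hence so is $\tilde P'(r)=-1/A$.

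The genuine gap is in part (3), precisely at the step you defer to a ``plan'': the inequality $2(v^{T}a)(v^{T}b)\le (D/A)(v^{T}a)^{2}$ is never proved, and no Abel-summation or Cauchy--Schwarz argument will prove it, because it is false for general Inada utilities. Since $D\le 0$, the inequality forces in particular $(v^{T}a)(v^{T}b)\le 0$, and the monotonicity of $a_{k}=U(z_{k})$ together with the antitonicity of $b_{k}=U'(z_{k})$ does not deliver this for sign-changing $v$ with $\sum_{k}v_{k}=0$. Concretely, take $K=3$, $Q=(1/3,1/3,1/3)$, $v=(1,-2,1)$, $z=(0,1,101)$, and a strictly increasing, strictly concave $C^{2}$ utility with $(U(z_{1}),U(z_{2}),U(z_{3}))=(0,1,10)$ and $(U'(z_{1}),U'(z_{2}),U'(z_{3}))=(10,\,0.1,\,0.01)$; such a $U$ satisfying all the Inada conditions exists, since the chord slopes $1$ and $0.09$ lie between the endpoint derivatives on each subinterval. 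Then $v^{T}a=8>0$ and $v^{T}b=9.8>0$, so your own formula gives $\tilde P''(0)\ge 2(v^{T}a)(v^{T}b)/A^{2}>0$: the reservation price is strictly convex along this admissible direction. So concavity in $Q_{\beta}$ cannot be derived from concavity and monotonicity of $U_{\beta}$ alone, and your instinct that this is ``the main obstacle'' understates the problem. For what it is worth, the paper's own proof of (3) avoids derivatives but founders on the corresponding step: it infers $E_{Q^{1}}[U'(w+C_{1}-F)]>E_{Q^{2}}[U'(w+C_{2}-F)]$ from an ordering of expected utilities taken under \emph{different} measures, which does not follow. The statement does hold in the exponential case of section \ref{exp_ut}, where $U_{\beta}'=1-\lambda_{\beta}U_{\beta}$ forces $v^{T}b=-\lambda_{\beta}\,v^{T}a$ and $D=-\lambda_{\beta}A$, so your numerator collapses to $-\lambda_{\beta}(v^{T}a)^{2}\le 0$; if you want a complete argument, you should either restrict to that class or add a hypothesis that makes the cross term controllable.
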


The proof of this lemma is technical and is included in the Appendix.

A  consequence of lemma \ref{P_wd} and  is the key fact that when taken as functions of both the agents risk levels and the agents beliefs, the reservation prices $P_A$ and $P_B$ are neither concave, nor convex, limiting the array of techniques available to address an eventual problem where both a convex linear combinations of the agents levels of risk and the functional in \eqref{PRIMAL} were to be jointly minimized. 

We now provide results concerning boundedness and differentiability of the reservation price functions. We start by proving that the agents reservation prices are differentiable with respect to their beliefs and risk levels, computing also the corresponding partial derivatives.

\begin{lemma}\label{P_dif}
 There exists an open neighborhood of $\R_0^+\times\Delta^K$ where $P_{\beta}$  are continuously differentiable with partial derivatives given by
\begin{eqnarray*}
\frac{\partial P_{\beta}}{\partial Q_{\beta}^k}(r_{\beta},Q_{\beta}) & =& -\ell_{\beta} \frac{U_{\beta}(w_{\beta} +\ell_{\beta} P_A-\ell_{\beta} F[k])}{E_{Q_{\beta}}[U_{\beta}'(w_{\beta}+\ell_{\beta} P_A-\ell_{\beta} F)]} \ , \quad k=1,\ldots, K  \nonumber \\
\frac{\partial P_{\beta}}{\partial r_{\beta}}(r_{\beta},Q_{\beta}) & =& -\ell_{\beta} \frac{1}{E_{Q_{\beta}}[U_{\beta}'(w_{\beta}+\ell_{\beta} P_{\beta}-\ell_{\beta} F)]} 
\end{eqnarray*}
\end{lemma}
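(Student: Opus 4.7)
The plan is to apply the Implicit Function Theorem to the defining equation \eqref{res_pr_AB}. Fix $\beta \in \{A,B\}$ and define the auxiliary function
\begin{equation*}
G_{\beta}(r_{\beta},Q_{\beta},P) := U_{\beta}(w_{\beta}) - r_{\beta} - E_{Q_{\beta}}\bigl[U_{\beta}(w_{\beta} + \ell_{\beta} P - \ell_{\beta} F)\bigr],
\end{equation*}
viewed as a function on $\R \times \R^{K} \times \R$ (so that the probabilities $Q_{\beta}^{k}$ are treated as free coordinates, with $\Delta^{K}$ sitting as a subset of $\R^{K}$). By the Inada conditions, $U_{\beta}$ is $C^{1}$ on $\R$, so $G_{\beta}$ is continuously differentiable jointly in $(r_{\beta},Q_{\beta},P)$. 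By Lemma \ref{P_wd}(1), the equation $G_{\beta}(r_{\beta},Q_{\beta},P)=0$ defines $P = P_{\beta}(r_{\beta},Q_{\beta})$ on $\R_{0}^{+} \times \Delta^{K}$.

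Next I verify the nonsingularity hypothesis of the IFT. A direct differentiation gives
\begin{equation*}
\frac{\partial G_{\beta}}{\partial P}(r_{\beta},Q_{\beta},P) = -\ell_{\beta}\, E_{Q_{\beta}}\bigl[U_{\beta}'(w_{\beta} + \ell_{\beta} P - \ell_{\beta} F)\bigr].
\end{equation*}
Since $U_{\beta}' > 0$ strictly (by the Inada monotonicity), and since for any $Q_{\beta}$ in $\Delta^{K}$ this expectation is a convex combination of strictly positive numbers, the right-hand side is nonzero at every $(r_{\beta},Q_{\beta},P_{\beta}(r_{\beta},Q_{\beta}))$. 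The same expression remains nonzero and continuous in a full $\R^{K}$-open neighborhood of $\Delta^{K}$, because the positivity of $U_{\beta}'$ is preserved under small perturbations of the coefficients $Q_{\beta}^{k}$. Consequently the IFT yields an open neighborhood $\mathcal{V}$ of $\R_{0}^{+} \times \Delta^{K}$ in $\R \times \R^{K}$ on which $P_{\beta}$ admits a $C^{1}$ extension.

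The explicit formulas then follow from implicit differentiation, i.e.\ from
\begin{equation*}
\frac{\partial G_{\beta}}{\partial r_{\beta}} + \frac{\partial G_{\beta}}{\partial P}\frac{\partial P_{\beta}}{\partial r_{\beta}} = 0, \qquad \frac{\partial G_{\beta}}{\partial Q_{\beta}^{k}} + \frac{\partial G_{\beta}}{\partial P}\frac{\partial P_{\beta}}{\partial Q_{\beta}^{k}} = 0.
\end{equation*}
Computing $\partial G_{\beta}/\partial r_{\beta} = -1$ and $\partial G_{\beta}/\partial Q_{\beta}^{k} = -U_{\beta}(w_{\beta}+\ell_{\beta}P_{\beta}-\ell_{\beta}F[k])$, solving for the partials of $P_{\beta}$, and using $1/\ell_{\beta}=\ell_{\beta}$ (since $\ell_{\beta}^{2}=1$), gives precisely the expressions stated in the lemma.

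The only delicate point I expect is a notational one: the partial $\partial P_{\beta}/\partial Q_{\beta}^{k}$ is taken treating the $K$ coordinates as independent rather than enforcing the simplex constraint $\sum_{k} Q_{\beta}^{k}=1$; this is legitimate precisely because the IFT is applied to $G_{\beta}$ viewed on the ambient $\R^{K}$, and it is what ensures the existence of the claimed open neighborhood of $\R_{0}^{+} \times \Delta^{K}$. No issue arises from the fact that some components of $Q_{\beta}$ may equal $0$ on the relative boundary of $\Delta^{K}$, because the positivity of $\partial G_{\beta}/\partial P$ is controlled on a neighborhood (not merely on the open simplex), keeping the IFT applicable throughout.
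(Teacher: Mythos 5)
Your proof is correct and follows essentially the same route as the paper: both apply the implicit function theorem to the defining equation \eqref{res_pr_AB}, using the strict positivity of $U_{\beta}'$ (equivalently, the strict monotonicity and differentiable invertibility of the expected utility $\overline{U}_{\beta}$) to verify the nondegeneracy condition, and then obtain the formulas by implicit differentiation. Your version merely spells out the details (the auxiliary function $G_{\beta}$, the extension to an ambient neighborhood of $\Delta^{K}$ in $\R^{K}$) that the paper leaves implicit.
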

\begin{proof}
We prove the statement for the price function of agent A, the proof for agent B being similar.
Recall from the proof of lemma \ref{P_wd} that the expected utility function of agent A for the contingent claim $F$, $\overline{U}_A$ is a strictly increasing concave function of $x$. Since $\overline{U}_A$ is a convex linear combination of differentiable functions, it is also differentiable. We obtain that the expected utility $\overline{U}_A$ is invertible with differentiable inverse. The result then follows by the implicit function theorem, the expressions for the partial derivatives of $P_A$ with respect to the components of $Q_A$ being obtained through implicit differentiation of \eqref{P_der_1} and similarly for the partial derivative of $P_A$ with respect to $r_A$.
\end{proof}

\begin{remark}
We note that:
\begin{itemize}
\item[(i)]  The knowledge of the partial derivatives of the price functions restricted to $\R^+\times\Delta^K$ provides us with information regarding the sensitivity of the price functions with respect to changes in the agents beliefs and levels of risk which is  a key ingredient for the setup of the projected dynamical systems of section \ref{DUS}.
\item[(ii)] Since the utility functions derivatives  $U'_A$ and $U'_B$ are strictly positive, we obtain that the reservation price of the seller $P_A$ is a decreasing function of the amount of risk $r_A$ undertaken by the seller of the contingent claim and, similarly, the reservation price of the buyer $P_B$ is an increasing function  of the amount of risk $r_B$ undertaken by the buyer. 
\item[(iii)] The equalities 
\begin{eqnarray*}
\frac{\partial P_A}{\partial Q_A^k}(r_A,Q_A) &=& U_A(w_A+P_A-F[k])\frac{\partial P_A}{\partial r_A}(r_A,Q_A) \ , \quad k=1,\ldots, K  \nonumber \\
\frac{\partial P_B}{\partial Q_B^k}(r_B,Q_B) &=& U_B(w_B+P_B-F[k])\frac{\partial P_B}{\partial r_B}(r_B,Q_B) \ , \quad k=1,\ldots, K  
\end{eqnarray*}
provide us with an alternative interpretation for the update of beliefs under fixed levels of risk: the sensitivity of the agents price functions to such changes is proportional to the sensitiveness of the agents price to changes in their risk levels under a fixed set of beliefs, the proportionality constant being given by the utility level in the corresponding state of the world. 
\end{itemize}
\end{remark}

The next result provides upper and lower bounds for the reservation prices of the seller and the buyer of the contingent claim as well as a criteria for the existence of beliefs $Q_A$ and $Q_B$ and levels of risk $r_A$ and $r_B$ under which the two agents agree on the price for the contingent claim.
\begin{lemma}\label{lemma_bounds}
Let us introduce the notation $\overline{F} = \underset{\omega\in\Omega}{\max}\; F[\omega] $ and $\underline{F}=\underset{\omega\in\Omega}{\min}\; F[\omega]$.\\
The reservation prices satisfy the inequalities
\begin{eqnarray*}
U_A^{-1}\left(U_A(w_A)-r_A\right) - w_A + \underline{F} \; \le &P_A& \le \; U_A^{-1}\left(U_A(w_A)-r_A\right) - w_A + \overline{F} \nonumber \\
w_B - U_B^{-1}\left(U_B(w_B)-r_B\right) + \underline{F} \; \le &P_B& \le \; w_B - U_B^{-1}\left(U_B(w_B)-r_B\right) + \overline{F} \ .
\end{eqnarray*}
Moreover, there exist $Q_A,Q_B\in\mathrm{int}(\Delta^K)$ and $r_A,r_B\in\R_0^+$ such that $P_A(r_A,Q_A)=P_B(r_B,Q_B)$ if and only if the following condition holds
\begin{equation}\label{dif_ine}
\overline{F} - \underline{F} > w_B - U_B^{-1}\left(U_B(w_B)-r_B\right) + w_A - U_A^{-1}\left(U_A(w_A)-r_A\right) \  .
\end{equation}
\end{lemma}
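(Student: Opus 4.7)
The plan is to derive both parts of the lemma from the defining equation \eqref{res_pr_AB} by combining strict monotonicity of $U_\beta$ and $U_\beta^{-1}$ with the continuity result of Lemma \ref{P_dif}.

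First I would establish the bounds. For agent $A$, equation \eqref{res_pr_AB} reads $U_A(w_A)-r_A = E_{Q_A}[U_A(w_A+P_A-F)]$. Because $\underline{F}\le F[\omega]\le \overline{F}$ pointwise and $U_A$ is strictly increasing, the integrand lies between $U_A(w_A+P_A-\overline{F})$ and $U_A(w_A+P_A-\underline{F})$, and taking $E_{Q_A}$ preserves these bounds. Applying the strictly increasing inverse $U_A^{-1}$ and rearranging delivers the claimed bounds for $P_A$. The argument for $P_B$ is analogous, with the sign of the $P_B$ terms reversed via the convention $\ell_B=-1$.

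For the existence statement, I introduce the shorthand $a_A(r_A):=U_A^{-1}(U_A(w_A)-r_A)-w_A\le 0$ and $a_B(r_B):=w_B-U_B^{-1}(U_B(w_B)-r_B)\ge 0$. The bounds just proved read $P_A\in[a_A(r_A)+\underline{F},a_A(r_A)+\overline{F}]$ and $P_B\in[a_B(r_B)+\underline{F},a_B(r_B)+\overline{F}]$, with equality at either endpoint forcing $F$ to be $Q_\beta$-almost surely constant, hence forcing $Q_\beta$ to sit on a vertex of $\Delta^K$. Consequently for $Q_\beta\in\mathrm{int}(\Delta^K)$ all four inequalities are strict. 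Moreover the endpoints are realized in the limit as $Q_\beta$ concentrates on a state where $F$ attains $\overline{F}$ or $\underline{F}$ (as one checks by plugging a Dirac into \eqref{res_pr_AB}), so the continuity of Lemma \ref{P_dif} together with connectedness of $\mathrm{int}(\Delta^K)$ and the intermediate value theorem show that $P_A(r_A,\cdot)$ and $P_B(r_B,\cdot)$ sweep out the full open intervals $(a_A(r_A)+\underline{F},a_A(r_A)+\overline{F})$ and $(a_B(r_B)+\underline{F},a_B(r_B)+\overline{F})$ as $Q_A,Q_B$ range over the interior of $\Delta^K$.

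It then remains to determine when these two open intervals intersect. They do precisely when both $a_A(r_A)+\underline{F}<a_B(r_B)+\overline{F}$ and $a_B(r_B)+\underline{F}<a_A(r_A)+\overline{F}$ hold. The first inequality is automatic since $a_A\le 0\le a_B$ and $\underline{F}\le\overline{F}$, while the second rearranges to exactly \eqref{dif_ine}. The ``only if'' direction is then immediate: if $P_A(r_A,Q_A)=P_B(r_B,Q_B)$ at interior beliefs, the common value lies strictly inside both open intervals and forces \eqref{dif_ine}. The main obstacle is the sweeping argument in the existence direction, which requires interior beliefs whose reservation prices approximate the interval endpoints; this can be handled by taking mixtures putting mass $1-\delta$ on the extremal state of $F$ and mass $\delta/(K-1)$ elsewhere and letting $\delta\downarrow 0$, with continuity of $P_\beta$ up to the boundary obtained via dominated convergence applied to \eqref{res_pr_AB}.
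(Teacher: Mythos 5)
Your proof is correct and the first part (the bounds) is essentially identical to the paper's: sandwich the integrand in \eqref{res_pr_AB} using $\underline{F}\le F\le\overline{F}$ and monotonicity of $U_\beta$, then invert. For the second part the underlying idea is also the same --- compare the two intervals in which $P_A$ and $P_B$ are confined --- but your execution is actually more complete than the paper's. The paper only derives $D^-\le P_B-P_A\le D^+$ with $D^\pm = a_B-a_A \mp(\underline{F}-\overline{F})$ in your notation, shows $D^+>0$ always and $D^-<0$ iff \eqref{dif_ine}, and stops there; it never verifies that the range of $P_B-P_A$ over interior beliefs actually contains $0$ (the ``if'' direction), nor that the inequalities are strict at interior beliefs (needed for ``only if''). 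You supply both missing pieces: the near-Dirac mixtures plus continuity and the intermediate value theorem show each price sweeps its full open interval, and the equality analysis shows interior beliefs give strict bounds. One small imprecision: equality at an endpoint forces $Q_\beta$ to be supported on the set where $F$ attains its extremum, hence to lie on the \emph{boundary} of $\Delta^K$ (some component vanishes, since $F$ is non-constant), not necessarily at a vertex; the conclusion you draw --- strictness for $Q_\beta\in\mathrm{int}(\Delta^K)$ --- is unaffected.
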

The proof is given in Appendix \ref{proof_lemma_bounds}

\begin{remark} \hfill
\begin{itemize}
\item[(i)] The proof of the previous lemma also ensures that inequality \eqref{dif_ine} holds if there exist some $Q_A,Q_B\in\Delta^K$ such that $P_A(r_A,Q_A)>P_B(r_B,Q_B)$.
\item[(ii)] The right hand side of inequality \eqref{dif_ine} tends to zero as $r_A,r_B\rightarrow 0$ or $w_A,w_B\rightarrow +\infty$, making this a trivial condition in these cases. Moreover, we note that
\begin{eqnarray*}
\sup_{Q_A\in\Delta^K} P_A(r_A,Q_A) &\le& \sup_{Q_B\in\Delta^K} P_B(r_B,Q_B) \nonumber \\ 
\inf_{Q_A\in\Delta^K} P_A(r_A,Q_A) &\le& \inf_{Q_B\in\Delta^K} P_B(r_B,Q_B) \nonumber \ ,
\end{eqnarray*}
the inequalities above being a consequence of the non-negative levels of risk $r_A$ and $r_B$ undertaken by the agents. In particular, positive levels of risk lead to an higher upper bound for the buyer reservation price and a smaller lower bound for the seller reservation price.
\item[(iii)] As a consequence of the previous lemma we obtain that for the agents to agree on a unique price for the contingent, their wealths and levels of risk must satisfy some boundedness conditions. In particular if
 inequality \eqref{dif_ine} is satisfied, we have that the agents levels of risk are such that $r_{\beta} \in R_{\beta}$  where $R_{\beta}$, $\beta=A,B$ are bounded sets. The result follows easily from the observation that the right hand side of inequality \eqref{dif_ine} is an increasing function of $r_{\beta}$, therefore $R_{\beta} \subseteq\R_0^+$ must be also bounded above.
\end{itemize}
\end{remark}

A final result to conclude this section concerns the regularity of the price functions derivatives with respect to the beliefs of the agents under some mild additional assumptions on the agents utility functions, wealth and levels or risk. 

\begin{lemma}\label{DP_Lip}
Assume $U'_{\beta}$ are Lipschitz continuous functions and that $r_{\beta} \in R_{\beta}$ where $R_{\beta}$ is a bounded subset of $\R_0^+$. Then,   the partial derivatives of the price functions $P_{\beta}$ with respect to $(r_{\beta},Q_{\beta})$  are Lipschitz continuous on $\R_0^+\times\Delta^K$, for $\beta=A,B$.
\end{lemma}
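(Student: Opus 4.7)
The plan is to use the explicit formulas for the partial derivatives given in Lemma~\ref{P_dif} and show that, on the bounded set $R_\beta\times\Delta^K$, each formula is a quotient of a Lipschitz and bounded numerator by a Lipschitz denominator that is bounded below by a strictly positive constant. The elementary fact that $f/g$ is Lipschitz whenever $f$ is Lipschitz and bounded and $g$ is Lipschitz with $g\ge m>0$ then delivers the conclusion.

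First I would show that $P_\beta$ itself is Lipschitz on $R_\beta\times\Delta^K$. Lemma~\ref{lemma_bounds} together with the boundedness of $R_\beta$ localizes $P_\beta$ to a compact interval $I_\beta\subset\R$ independent of $(r_\beta,Q_\beta)$. Hence the arguments $w_\beta+\ell_\beta P_\beta-\ell_\beta F[k]$ of $U_\beta$ and $U'_\beta$ stay in a compact interval $J_\beta\subset\R$, on which $U_\beta$ is Lipschitz (continuously differentiable on a compact set) and $U'_\beta$ is bounded and, by hypothesis, Lipschitz. Combining these bounds with the formulas of Lemma~\ref{P_dif} one verifies that the partial derivatives of $P_\beta$ are uniformly bounded on $R_\beta\times\Delta^K$, so $P_\beta$ is Lipschitz there. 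In particular, the composite map $(r_\beta,Q_\beta)\mapsto w_\beta+\ell_\beta P_\beta(r_\beta,Q_\beta)-\ell_\beta F[k]$ is Lipschitz with values in $J_\beta$, so $U_\beta(\cdots)$ and $U'_\beta(\cdots)$ are Lipschitz functions of $(r_\beta,Q_\beta)$ by composition.

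Next I would analyze the denominator
\[
D(r_\beta,Q_\beta):=E_{Q_\beta}\bigl[U'_\beta(w_\beta+\ell_\beta P_\beta-\ell_\beta F)\bigr]=\sum_{k=1}^K Q_\beta^k\,U'_\beta\bigl(w_\beta+\ell_\beta P_\beta-\ell_\beta F[k]\bigr).
\]
As a finite sum of products of Lipschitz bounded functions ($Q_\beta^k$ is Lipschitz in $Q_\beta$ and trivially bounded on $\Delta^K$, and $U'_\beta(\cdots)$ is Lipschitz by the previous step), $D$ is Lipschitz on $R_\beta\times\Delta^K$. The crucial observation is that $D$ is also bounded below by a strictly positive constant: since $U'_\beta$ is continuous and strictly positive on $\R$, it attains a positive minimum $m_\beta>0$ on the compact set $J_\beta$, whence $D\ge m_\beta$. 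Similarly $U_\beta$ is bounded on $J_\beta$, so the numerator of $\partial P_\beta/\partial Q_\beta^k$ is bounded; the numerator of $\partial P_\beta/\partial r_\beta$ is trivially constant.

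The Lipschitz property of the ratios then follows from the identity
\[
\frac{f_1}{g_1}-\frac{f_2}{g_2}=\frac{(f_1-f_2)g_2-f_2(g_1-g_2)}{g_1g_2},
\]
applied with $g_i\ge m_\beta>0$. The main obstacle in the program is precisely the lower bound $D\ge m_\beta>0$, which rests on having $P_\beta$ uniformly bounded so that the arguments of $U'_\beta$ stay in a compact set; without the hypothesis that $R_\beta$ is bounded this would fail, because $U'_\beta$ tends to zero at $+\infty$ by the Inada conditions and $D$ could degenerate.
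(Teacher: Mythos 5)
Your proposal is correct and follows essentially the same route as the paper's own proof: both use the explicit formulas from Lemma \ref{P_dif}, invoke Lemma \ref{lemma_bounds} together with the boundedness of $R_{\beta}$ to confine the arguments of $U_{\beta}$ and $U'_{\beta}$ to a compact interval so that the denominator $E_{Q_{\beta}}[U'_{\beta}(\cdot)]$ is bounded away from zero, and then estimate the difference of two quotients by cross-multiplying and using the Lipschitz continuity of $U'_{\beta}$ and the boundedness of the various factors. Your version is slightly more explicit about first establishing that $P_{\beta}$ itself is Lipschitz, a step the paper absorbs into the continuous differentiability of $P_A$ on a compact set, but the substance is identical.
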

The proof of the Lemma is given in Appendix \ref{proof_DP_Lip}.

\subsection{The minimization problem solvability and solution dependence on parameters}

In this section our focus lies on the minimization problem \eqref{PRIMAL} concerning the choice of optimal beliefs $(Q_A^*,Q_B^*)\in\Delta^K\times\Delta^K$ for fixed fixed positive levels of risk $r_A$ and $r_B$ undertaken by the two agents. The next theorem provides conditions for the existence and uniqueness of a solution for such minimization problem.

\begin{theorem}\label{opt_existence}
Assume that the agents initial beliefs $Q_A^0,Q_B^0\in\Delta^K$ are such that $P(r_A,Q_A^0)>P(r_B,Q_B^0)$ for fixed levels of risk $r_A,r_B\in\R_0^+$ and wealths $w_A,w_B\in\R$.

Then, for every weight $\lambda\in (0, 1)$ the minimization problem \eqref{PRIMAL} has a unique solution $(Q_A^*,Q_B^*)\in\Delta^K\times\Delta^K$ providing a unique price 
\begin{equation*}
P^*(r_A,r_B,\lambda) = P_A(r_A,Q_A^*)= P_B(r_B,Q_B^*) 
\end{equation*}
for the transaction of the contingent claim. 

Furthermore,   the solution $(Q_A^*,Q_B^*)$ depends continuously on the agents relative weight $\lambda$ and risk levels $r_A$ and $r_B$. 
\end{theorem}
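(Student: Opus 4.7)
The plan is to establish the three claims---existence, uniqueness, and continuous dependence on $(\lambda,r_A,r_B)$---in sequence, relying on the lemmas proved earlier in this section.

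For existence, I would first note that by Lemma \ref{P_dif} the reservation prices $P_A, P_B$ are continuous in their arguments, so the feasible set
\[
\mathcal{F}(r_A, r_B) = \{(Q_A, Q_B) \in \Dcal : P_A(r_A, Q_A) \le P_B(r_B, Q_B)\}
\]
is closed, and therefore compact as a subset of the compact product of simplices $\Dcal$. The standing assumption $P_A(r_A, Q_A^0) > P_B(r_B, Q_B^0)$ activates condition \eqref{dif_ine} (remark following Lemma \ref{lemma_bounds}), which guarantees the existence of beliefs with $P_A = P_B$; in particular $\mathcal{F}$ is nonempty. Since $L$ is continuous, Weierstrass's theorem produces a minimizer, and a short argument perturbing toward the unique unconstrained minimum $(Q_A^0, Q_B^0)$ of $L$ shows that the inequality constraint must be binding at any minimizer, delivering a common price $p^* = P_A(r_A, Q_A^*) = P_B(r_B, Q_B^*)$.

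For uniqueness, suppose for contradiction that two distinct minimizers $(Q_A^i, Q_B^i)$, $i = 1, 2$, exist, with common prices $p^i$. Along the segment $\gamma(s) = (1-s)(Q_A^1, Q_B^1) + s(Q_A^2, Q_B^2)$, the strict convexity of $L$---which holds because $\lambda \in (0,1)$ and $\psi_A, \psi_B$ are strictly convex in their first arguments---gives $L(\gamma(s)) < L^*$ on $(0, 1)$. On the other hand, the concavity of $Q_A \mapsto P_A(r_A, Q_A)$ and convexity of $Q_B \mapsto P_B(r_B, Q_B)$ from Lemma \ref{P_wd} imply $P_A(\gamma_A(s)) \ge P_B(\gamma_B(s))$, so $\gamma(s)$ may leave $\mathcal{F}$. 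I would then perturb $\gamma(s)$ back into $\mathcal{F}$: by Lemma \ref{P_dif}, the belief gradients of $P_A$ and $P_B$ are nonzero, so there are simplex-tangent directions (corresponding to shifting probability mass between states with $F[k]$ above and below $w_\beta + p^*$) along which $P_A$ strictly decreases and $P_B$ strictly increases. Scaling this perturbation so as to restore feasibility at $\gamma(s)$ and comparing, via the Hessians of $\psi_\beta$ on the compact simplex, the resulting gain against the cost along the KKT-dictated direction, one reaches a feasible point with $L$ strictly below $L^*$, contradicting optimality.

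For continuous dependence, I would apply Berge's maximum theorem. The objective $L$ is linear in $\lambda$ and jointly continuous in all its arguments; the constraint function $(Q_A, Q_B, r_A, r_B) \mapsto P_A(r_A, Q_A) - P_B(r_B, Q_B)$ is jointly continuous by Lemma \ref{P_dif}; and the feasible-set correspondence $(r_A, r_B) \mapsto \mathcal{F}(r_A, r_B)$ is compact-valued and continuous (both upper and lower hemi-continuous) in a neighborhood of the considered parameters, since strict feasibility is preserved under small perturbations by the strict inequality of the initial beliefs. Combined with the uniqueness of the minimizer, Berge's theorem yields a continuous single-valued dependence of $(Q_A^*, Q_B^*)$ on $(\lambda, r_A, r_B)$. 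The main obstacle is the uniqueness argument, because $\mathcal{F}$ is a sublevel set of the jointly concave function $P_A - P_B$ and thus is not generally convex, causing the textbook midpoint shortcut to fail; the technical heart of the proof is therefore the quantitative construction of a feasibility-restoring perturbation, since both the $L$-improvement along the segment and the infeasibility gap are of the same first order in $s$ and reaching a strict net decrease of $L$ requires carefully exploiting the KKT structure at the candidate minimizers together with the non-vanishing belief gradients from Lemma \ref{P_dif}.
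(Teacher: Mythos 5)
Your existence and continuity arguments are sound and run essentially parallel to the paper's: Weierstrass on the nonempty compact feasible set (nonemptiness via Lemma \ref{lemma_bounds} and condition \eqref{dif_ine}), bindingness of the constraint from the infeasibility of $(Q_A^0,Q_B^0)$, and Berge's maximum theorem combined with uniqueness to upgrade upper hemicontinuity of the solution correspondence to continuity of a single-valued map. The genuine gap is in the uniqueness step, which you correctly flag as the heart of the matter but do not complete. Given two putative minimizers with common prices $p^1,p^2$, Lemma \ref{P_wd} gives $P_A(\gamma_A(s))\ge(1-s)p^1+sp^2\ge P_B(\gamma_B(s))$ along the segment, so $\gamma(s)$ exits the feasible set, and your repair is a feasibility-restoring perturbation whose ``gain'' is to be weighed against its ``cost.'' But the strict-convexity gain $L^*-L(\gamma(s))$ and the infeasibility gap $P_A(\gamma_A(s))-P_B(\gamma_B(s))$ are both of order $s(1-s)\left\|Q^1-Q^2\right\|^2$ (the former controlled by a convexity modulus of the $\psi_\beta$, the latter by the curvature of the $P_\beta$), while the cost of the restoring step is the infeasibility gap multiplied by a ratio of gradient norms that is, in effect, the KKT multiplier. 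Nothing in the hypotheses bounds that multiplier relative to the convexity modulus of the $\psi_\beta$ --- indeed the paper assumes only \emph{strict} convexity of $\psi_\beta$, so no positive lower bound on the gain of order $s\left\|Q^1-Q^2\right\|^2$ is even available, and the invocation of ``Hessians of $\psi_\beta$'' presumes regularity that has not been assumed. The asserted ``strict net decrease of $L$'' is therefore not established, and the argument as written can fail when the constraint surface is strongly curved relative to the level sets of $L$.

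For comparison, the paper's own proof takes a much shorter route at this point: it asserts that the feasible set $\left\{(Q_A,Q_B)\in\Dcal: P_B(Q_B)\ge P_A(Q_A)\right\}$ is convex and concludes uniqueness from strict convexity of the objective over a convex feasible set. Your instinct to distrust that shortcut is correct: Lemma \ref{P_wd} gives $P_A$ concave in $Q_A$ and $P_B$ convex in $Q_B$, so $P_B-P_A$ is jointly convex and the feasible set is a \emph{superlevel} set of a convex function, which need not be convex --- the paper itself records the convexity of $P_B-P_A$ in the sentence immediately preceding the convexity claim for its superlevel set. So your diagnosis of the geometric obstruction is sharper than the paper's treatment, but neither your perturbation sketch nor the paper's convexity assertion actually delivers uniqueness as stated; closing the argument would require either additional structure on the price functions or a genuinely quantitative version of your feasibility-restoring perturbation with explicit control of the multiplier.
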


The proof uses standard arguments is given in Appendix \ref{proof_opt_existence} for completeness.

The next result is a clear consequence of the previous theorem and duality theory for optimization problems.
\begin{corollary}   
Under the conditions of theorem \ref{opt_existence}, the (dual) maximization problem \eqref{DUAL} has a unique solution and at the optimal point $(Q_A^*,Q_B^*)\in\Delta^K\times\Delta^K$ we have that 
\begin{equation*}
P_B(r_B,Q_B^*) - P_A(r_A,Q_A^*) = 0 \ .
\end{equation*}
\end{corollary}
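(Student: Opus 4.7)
The plan is to deduce the corollary from Theorem \ref{opt_existence} via standard primal--dual correspondence, with the parameter $\beta$ in \eqref{DUAL} chosen as the optimal value of the primal, i.e.\ $\beta = L(Q_A^*,Q_B^*\,;\,Q_A^0,Q_B^0,\lambda)$. With this choice I expect the dual optimal value to be exactly zero, attained uniquely at the primal optimizer $(Q_A^*,Q_B^*)$.

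\medskip
\noindent\textbf{Step 1: Feasibility of $(Q_A^*,Q_B^*)$ for the dual, attaining value $0$.} By Theorem \ref{opt_existence}, the primal optimizer satisfies $P_A(r_A,Q_A^*)=P_B(r_B,Q_B^*)$, hence the dual objective value at $(Q_A^*,Q_B^*)$ is $P_B(r_B,Q_B^*)-P_A(r_A,Q_A^*)=0$. The dual constraint $L(Q_A^*,Q_B^*\,;\,Q_A^0,Q_B^0,\lambda)\le \beta$ is satisfied with equality by the choice of $\beta$. So the dual optimal value is at least $0$.

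\medskip
\noindent\textbf{Step 2: No dual-feasible point can give a strictly positive value.} Suppose, for contradiction, that some dual-feasible $(\widetilde Q_A,\widetilde Q_B)\in\Dcal$ satisfies $P_B(r_B,\widetilde Q_B)-P_A(r_A,\widetilde Q_A)>0$. Then $P_A(r_A,\widetilde Q_A)<P_B(r_B,\widetilde Q_B)$, so $(\widetilde Q_A,\widetilde Q_B)$ is \emph{primal}-feasible. Dual feasibility gives $L(\widetilde Q_A,\widetilde Q_B\,;\,Q_A^0,Q_B^0,\lambda)\le \beta = L(Q_A^*,Q_B^*\,;\,Q_A^0,Q_B^0,\lambda)$, i.e.\ it attains the minimum of the primal. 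The uniqueness part of Theorem \ref{opt_existence} then forces $(\widetilde Q_A,\widetilde Q_B)=(Q_A^*,Q_B^*)$, which in turn gives $P_B-P_A=0$, contradicting our assumption. Hence the dual optimal value equals $0$.

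\medskip
\noindent\textbf{Step 3: Uniqueness of the dual maximizer.} Let $(Q_A^\dagger,Q_B^\dagger)$ be any dual maximizer. By Step~2 it achieves objective value $0$, so $P_A(r_A,Q_A^\dagger)=P_B(r_B,Q_B^\dagger)$, i.e.\ it is primal-feasible. Dual feasibility gives $L(Q_A^\dagger,Q_B^\dagger\,;\,Q_A^0,Q_B^0,\lambda)\le \beta$, where $\beta$ is the primal minimum, so equality holds and $(Q_A^\dagger,Q_B^\dagger)$ is a primal optimizer. By the uniqueness in Theorem \ref{opt_existence}, $(Q_A^\dagger,Q_B^\dagger)=(Q_A^*,Q_B^*)$.

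\medskip
The only subtlety, which I expect to be the main (and essentially only) obstacle, is to justify the choice of the constraint level $\beta$ in \eqref{DUAL} used for the correspondence; this is a matter of exposition (tying the parameter $\beta$ in the dual statement to the primal optimal value), after which the argument reduces to two clean feasibility swaps that exploit the uniqueness of the primal solution proved in Theorem \ref{opt_existence}.
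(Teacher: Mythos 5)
Your proof is correct, and it takes exactly the route the paper intends: the paper offers no written proof of this corollary, merely asserting it is "a clear consequence of the previous theorem and duality theory," and your argument is precisely the standard primal--dual correspondence (with $\beta$ tied to the primal optimal value) that this assertion presupposes. The two feasibility-swap steps exploiting the uniqueness in Theorem \ref{opt_existence} are sound, and your observation that the constraint level $\beta$ must be pinned to the primal optimum is the one genuine point of exposition the paper glosses over.
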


\begin{remark}
On account of the fact that the reservation prices $P_{\beta}$  are neither concave nor convex as joint functions of the agents risks and beliefs the treatment of 
the ``enlarged'' optimization problem
\begin{eqnarray*} 
\min_{(r_{\beta},Q_{\beta})\in(\R_0^+\times\Delta^K)^2} \; \lambda r_A +(1-\lambda)r_B ´+ \lambda \psi_A(Q_A,Q_A^0) + (1-\lambda) \psi_B(Q_B,Q_B^0)
\end{eqnarray*} 
subject to the constraint $P_A(r_A,Q_A) \le P_B(r_B,Q_B)$ 
is not a straightforward issue and may not lead to unique solutions.
\end{remark}

\section{A projected gradient scheme in discrete time} \label{DUS}

In this section we introduce a discrete-time dynamical system modeling the bargaining process under which the two agents update their beliefs in order to reach a common price for the contingent claim to be traded. This dynamical system is defined as a projected gradient scheme which approximates solutions of minimization problems of the same form as the minimization problem \eqref{PRIMAL}. Its phase space is $\Dcal:=\Delta^K\times\Delta^K$ and its iterates represent the evolution with time of the pair of beliefs $(Q_A(t),Q_B(t))$, $t\in\N$.

Suppose that at the initial time $t=0$ the eventual seller and buyer of the contingent claim have, respectively, initial beliefs $Q_A(0),Q_B(0)\in\Delta^K$ satisfying $P_A(r_A,Q_A(0))>P_B(r_B,Q_B(0))$. At each instant of time $t\in\N$, the two agents update their beliefs from the probability measures $(Q_A(t),Q_B(t))\in\Dcal$ to new probability measures $(Q_A(t+1),Q_B(t+1))\in\Dcal$ using a projected gradient scheme associated with the penalized unconstrained optimization problem for the function
\begin{equation}\label{LPDEF}
L_{P}(Q_{A},Q_{B}; Q_{A}^0, Q_{B}^0 , \lambda):= L(Q_{A},Q_{B}; Q_{A}^0, Q_{B}^0 , \lambda) + 
\frac{1}{\epsilon}\Phi(P_B(r_B,Q_B) - P_A(r_A,Q_A))
\end{equation}
where the penalty function $\Phi:\R\rightarrow\R$ is assumed to be a strictly convex continuously differentiable function with a (global) minimum $\Phi(0)$ and $\epsilon$ is a small parameter measuring the size of the penalty term. A simple example of a penalty function satisfying these conditions is provided by the quadratic function $\Phi(x)=x^2$, but many other choices are equally suitable for the role of penalty function, e.g. the logarithmic penalty function etc.

We will start by considering the unconstrained optimization problem 
\begin{equation}\label{PRIMAL_time}
 \min_{(Q_A,Q_B)\in\Dcal} \;\;  L_{P}(Q_{A},Q_{B}; Q_{A}^0, Q_{B}^0 , \lambda)
\end{equation}
associated with the penalized functional \eqref{LPDEF} and studying the relation between its solutions and the solutions of the minimization problem \eqref{PRIMAL}. Recall that theorem \ref{opt_existence} provides conditions under which the minimization problem \eqref{PRIMAL} above has a unique solution yielding a common price for the contingent claim. This also justifies the assumptions on the form of the penalty function: it is zero on the codimension one set of beliefs where the two agents reservation prices coincide, and it is positive where the two reservation prices are unequal, thus penalizing choices of beliefs which do not lead to a unique price for the contingent claim. We will prove below that under the conditions of theorem \ref{opt_existence} there exists a small enough $\epsilon_0>0$ such that the penalized problem \eqref{PRIMAL_time} has also a unique solution for every $\epsilon\in(0,\epsilon_0)$. Moreover, we also prove that the unique solution of \eqref{PRIMAL_time}  depends continuously on $\epsilon$ and tends to the solution of \eqref{PRIMAL} as $\epsilon\rightarrow 0$ for fixed initial beliefs $(Q_A^0,Q_B^0)\in\Dcal$.

The dynamical system modeling the bargaining process is then an iterative algorithm for solving the minimization problem \eqref{PRIMAL} by approximating the penalized  problem \eqref{PRIMAL_time} which then approximates \eqref{PRIMAL} for small values of $\epsilon$. Let us introduce the notation
\begin{equation*}
x(t)= (Q_A(t),Q_B(t)) \in \Dcal
\end{equation*}
and let $L_P:\Dcal^2\rightarrow\R$ denote the functional \eqref{LPDEF}, now expressed in terms of the variable $x$ as  $L_P(x(t+1),x(t))$, where the semicolon notation as well as the explicit dependence on $\lambda$ are dropped for simplicity.
The projected dynamical system on $\Dcal$ modeling the bargaining scheme for the agents to reach a unique price for the contingent claim is the implicitly defined discrete-time dynamical system given by
\begin{equation}\label{PGS}
x(t+1)= x(t) + \alpha \left(\Pi_{\Dcal}\left[x(t) - \nabla_{x(t+1)} L_P(x(t+1),x(t))\right] -x(t)\right)\ ,
\end{equation}
where $\nabla_{x(t+1)} L_P(x(t+1),x(t))\in\R^{2K}$ denotes the gradient of $L_P$ with respect to the variable $x(t+1)$ and $\PiD:\R^{2K}\rightarrow \Dcal$ is the projection operator onto $\Dcal$ defined by
\begin{equation*}
\PiD(x) = \underset{y\in\Dcal}{\textrm{argmin}}\left\|x-y\right\|  
\end{equation*}
where $\left\|\cdot\right\|$ denotes the euclidean norm in $\R^{2K}$. We believe it is worth to remark that since ``initial beliefs'' components are changed at each iteration, the dynamical system under consideration does not typically lead to a solution of the optimization problems \eqref{PRIMAL_time} or \eqref{PRIMAL} in a single iteration.

\begin{remark}
We note that for $x=(x_1,\ldots,x_K,x_{K+1},\ldots,x_{2K})\in\Dcal$, the $i$-th component of the projection of $x$ onto $\Dcal$, $\left(\PiD(x)\right)_i$, is given by
\begin{equation}\label{proj_op}
\left(\PiD(x)\right)_{i} =
\begin{cases}
x_i - \frac{1}{K}\sum_{j=1}^K x_j + \frac{1}{K}  &\ , \quad i=1,\ldots,K \\ 
x_i  - \frac{1}{K}\sum_{j=K+1}^{2K} x_j  + \frac{1}{K} &\ , \quad i=K+1,\ldots,2K  \ .
\end{cases} 
\end{equation}
\end{remark}

In section \ref{SSPOP} we provide conditions under which the minimization problem \eqref{PRIMAL_time} has a unique solution and characterize how such solution depends on the parameter $\epsilon$. We devote section \ref{SSPDS} to the study of general properties of a family of dynamical systems which includes \eqref{PGS} as a particular case. We apply these results to the special case of \eqref{PGS} in section \ref{SSBD1}. In section \ref{SSBD2} we consider a modification of \eqref{PGS} which has the key feature that the penalty function depends on the relative bargaining power of the two agents $\lambda$.

\subsection{Penalized optimization problem}\label{SSPOP}

Before moving on to the analysis of the qualitative properties of the dynamical system defined by \eqref{PGS}, we provide conditions ensuring the existence and uniqueness of solutions for the penalized minimization problem \eqref{PRIMAL_time}. Note that even though the functional to minimize in \eqref{PRIMAL_time} may not be convex, we are still able to obtain uniqueness for the solution of such problem, as  stated in the proposition below.

\begin{proposition}
Assume that the agents initial beliefs are such that $(Q_A^0,Q_B^0)\in\Dcal$ are such that $P(r_A,Q_A^0)>P(r_B,Q_B^0)$ for fixed levels of risk $r_A,r_B\in\R_0^+$ and wealths $w_A,w_B\in\R$.

There exists $\epsilon_0>0$ such that for every $\epsilon\in (0,\epsilon_0)$ the optimization problem \eqref{PRIMAL_time} has a unique solution which depends continuously on $\epsilon$. Moreover, this solution converges to the solution of the original constrained problem \eqref{PRIMAL} as $\epsilon\rightarrow 0$.
\end{proposition}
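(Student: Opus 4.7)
The plan is to apply the classical exterior--penalty framework, adapted to the fact that $\Phi$ has a two-sided minimum at $0$ rather than being an increasing barrier. The argument splits into three stages: existence by compactness, convergence as $\epsilon \to 0$ via an a priori penalty estimate, and uniqueness together with continuous dependence via a local analysis near the limit point.

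First I would note that the penalized objective
\begin{equation*}
F_\epsilon(Q_A,Q_B) := L(Q_A,Q_B;Q_A^0,Q_B^0,\lambda) + \frac{1}{\epsilon}\Phi\bigl(P_B(r_B,Q_B)-P_A(r_A,Q_A)\bigr)
\end{equation*}
is continuous on the compact set $\Dcal$ (continuity of $L$ from the $\psi_\beta$, and of the $P_\beta$ from Lemma \ref{P_dif}), so Weierstrass yields a minimizer $x_\epsilon=(Q_A^\epsilon,Q_B^\epsilon)$ for every $\epsilon>0$.

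Next, let $x^*=(Q_A^*,Q_B^*)$ be the unique solution of \eqref{PRIMAL} granted by Theorem \ref{opt_existence}, for which $g(x^*):=P_B(r_B,Q_B^*)-P_A(r_A,Q_A^*)=0$. Using $x^*$ as a comparison in $F_\epsilon(x_\epsilon)\le F_\epsilon(x^*)$ and normalizing $\Phi(0)=0$ (a global minimum) yields
\begin{equation*}
L(x_\epsilon) + \frac{1}{\epsilon}\Phi(g(x_\epsilon)) \le L(x^*),
\end{equation*}
whence $\Phi(g(x_\epsilon)) \le \epsilon\bigl(L(x^*)-L(x_\epsilon)\bigr) = O(\epsilon)$ by compactness. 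Strict convexity of $\Phi$ with unique minimizer $0$ forces $g(x_\epsilon)\to 0$. Any accumulation point $\bar x$ of $\{x_\epsilon\}$ is therefore feasible for \eqref{PRIMAL} and satisfies $L(\bar x)\le L(x^*)$; uniqueness in Theorem \ref{opt_existence} then gives $\bar x=x^*$, and compactness upgrades subsequential to full convergence $x_\epsilon\to x^*$.

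The main obstacle is uniqueness of $x_\epsilon$ for small $\epsilon$, together with its continuous dependence on $\epsilon$. Although $L$ is strictly convex and $g$ is convex in $(Q_A,Q_B)$ (by Lemma \ref{P_wd}, since $-P_A$ and $P_B$ are convex in the respective belief variables), the composition $\Phi\circ g$ need not be globally convex: $\Phi$ is strictly convex but not monotone, having an interior minimum. I would overcome this by restricting to a small neighborhood $V$ of $x^*$ containing $x_\epsilon$ for all $\epsilon$ below some $\epsilon_0$ (by the previous step). On $V$, $g$ is close to $0$ and so $\Phi'(g)$ is small; inspecting
\begin{equation*}
\nabla^2 F_\epsilon = \nabla^2 L + \frac{1}{\epsilon}\Phi''(g)\,\nabla g\,\nabla g^\top + \frac{1}{\epsilon}\Phi'(g)\,\nabla^2 g,
\end{equation*}
the first two contributions are positive (semi)definite, while the signed last term is of lower order in $V$ as $\epsilon\to 0$ (using $C^2$ smoothness of $\Phi$ and the Lipschitz regularity of $\nabla P_\beta$ from Lemma \ref{DP_Lip}). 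This yields strict convexity of $F_\epsilon$ on $V$ for $\epsilon$ small, hence a unique local and therefore global minimizer $x_\epsilon$. Continuous dependence on $\epsilon$ then follows from the implicit function theorem applied to the first-order condition $\nabla L(x) + \epsilon^{-1}\Phi'(g(x))\nabla g(x)=0$, whose Jacobian is non-degenerate on $V$ by the same Hessian estimate. The final claim $x_\epsilon\to x^*$ as $\epsilon\to 0$ is the convergence already established.
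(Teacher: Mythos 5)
Your first two stages are sound. Existence by Weierstrass is exactly what the paper means by ``the existence of a solution to \eqref{PRIMAL_time} is clear,'' and your penalty estimate $\Phi(g(x_\epsilon))\le \epsilon\,L(x^*)$, followed by the accumulation-point argument and the uniqueness from theorem \ref{opt_existence}, is a correct and more self-contained proof of the convergence claim than the paper's bare citation of Bertsekas. (The paper's own uniqueness argument is entirely different: it works with the level sets of $\overline{L}_P$ and an $\epsilon_0$ defined by non-vanishing of $\nabla \overline{L}_P$ off a convex sublevel set, and it gets continuity from Berge's maximum theorem.)

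The gap is in your third stage, at exactly the point where the two-sided minimum of $\Phi$ bites. You claim that on a neighborhood $V$ of $x^*$ the term $\tfrac{1}{\epsilon}\Phi'(g)\,\nabla^2 g$ is ``of lower order.'' It is not. The first-order condition for the penalized problem reads $\nabla L(x_\epsilon)+\tfrac{1}{\epsilon}\Phi'(g(x_\epsilon))\nabla g(x_\epsilon)=0$, and comparison with the Kuhn--Tucker conditions for \eqref{PRIMAL} shows that $\tfrac{1}{\epsilon}\Phi'(g(x_\epsilon))\to -\mu^*$, where $\mu^*$ is the multiplier of the constraint $P_A\le P_B$. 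Since $P_A(r_A,Q_A^0)>P_B(r_B,Q_B^0)$, the unconstrained minimizer of $L$ is infeasible, the constraint is active at $x^*$, and $\mu^*>0$. Hence the third Hessian term converges to $-\mu^*\,\nabla^2 g(x^*)$, an $O(1)$ \emph{negative} semidefinite matrix (by lemma \ref{P_wd}, $g=P_B-P_A$ is convex, so $\nabla^2 g\succeq 0$); equivalently, $g(x_\epsilon)<0$ for small $\epsilon$, so $x_\epsilon$ sits precisely in the region where $\Phi\circ g$ fails to be convex. The rank-one term $\tfrac{1}{\epsilon}\Phi''(g)\nabla g\,\nabla g^{\top}$ only controls the direction $\nabla g$, so strict convexity of $F_\epsilon$ on $V$ requires $v^{\top}\bigl(\nabla^2 L-\mu^*\nabla^2 g\bigr)v>0$ for $v$ with $\nabla g\cdot v=0$ --- a second-order sufficient condition that does not follow from strict convexity of $L$ and convexity of $g$ alone, and which you neither state nor verify. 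Since your continuous-dependence argument invokes the implicit function theorem with ``the same Hessian estimate,'' it inherits the gap. (A secondary point: you also use $\Phi\in C^2$ and twice-differentiable $P_\beta$, whereas the paper only assumes $\Phi\in C^1$ and Lipschitz first derivatives of $P_\beta$.) To close the argument along your lines you would need either to impose and verify the second-order sufficiency condition at $x^*$, or to replace the Hessian computation by a different local uniqueness mechanism, such as the level-set/critical-point argument the paper uses.
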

\begin{proof}
The existence of a solution to \eqref{PRIMAL_time} is clear. We deal only with the uniqueness of such solution in this proof. 

Since the agents initial beliefs $Q_A^0,Q_B^0\in\Delta^K$ are such that $P(r_A,Q_A^0)>P(r_B,Q_B^0)$, then inequality \eqref{dif_ine} holds and lemma \ref{lemma_bounds} guarantees the existence of $Q_A,Q_B\in\Delta^K$ such that $P_B(r_B,Q_B) = P_A(r_A,Q_A)$. A straightforward consequence is that $\Phi(P_B(r_B,Q_B) - P_A(r_A,Q_A))$ is not convex. In fact, $\Phi(P_B(r_B,Q_B) - P_A(r_A,Q_A))$ is convex on the subset of $\Dcal$ defined by
\begin{equation*}
S= \{(Q_A,Q_B)\in\Dcal: P_B(r_B,Q_B) \ge P_A(r_A,Q_A)\} 
\end{equation*}
but this is no longer true on the complement of $S$ on $\Dcal$. Before proceeding with the proof, we introduce some notation. Let $\lambda\in(0,1)$, $Q_A^*\in\Delta^K$ and $Q_B^*\in\Delta^K$ be fixed but arbitrary, and let $\overline{\Phi}:\Dcal\rightarrow\R$ be given by
\begin{equation*}
\overline{\Phi}(Q_A,Q_B) = \Phi(P_B(r_B,Q_B) - P_A(r_A,Q_A)) \ ,
\end{equation*}
define $\overline{\Psi}:\Dcal\rightarrow\R$ as 
\begin{equation*}
\overline{\Psi}(Q_A,Q_B) = \lambda \psi_A(Q_A,Q_A^*) + (1-\lambda) \psi_B(Q_B,Q_B^*) \ ,
\end{equation*}
and let $\overline{L}_P:\Dcal\rightarrow\R$ be given by
\begin{equation}\label{LPproof}
\overline{L}_P(Q_A,Q_B) = \overline{\Psi}(Q_A,Q_B) + \frac{1}{\epsilon}\overline{\Phi}(Q_A,Q_B)  \ .
\end{equation}

Since $\overline{\Psi}$ is a convex function in $\Dcal$, $\overline{L}_P$ is of the form \eqref{LPproof} and has at least one local minimum, then there exists $\delta>0$ such that ${\overline{L}_P}^{-1}(0,\delta)$ is a non-empty convex subset of $\Dcal$. 
Let us define $\epsilon_0$ as
\begin{eqnarray*}
\epsilon_0 = \sup\left\{\epsilon\in\R_0^+: \nabla_{(Q_A,Q_B)} \overline{L}_P \neq 0 \textrm{ for every } (Q_A,Q_B)\in (\Dcal)\backslash{\overline{L}_P}^{-1}(0,\delta) \right\}	\ .
\end{eqnarray*}
Note that $\epsilon_0$ is strictly positive since $\overline{L}_P$ is continuously differentiable on $\Dcal$ and non-constant, and thus it must have at most a finite number of critical values on a compact set. Then, for every $\epsilon\in(0,\epsilon_0)$ the optimization problem \eqref{PRIMAL_time} has a unique solution.

Berge's maximum theorem \cite[Ch. VI, Sec. 3]{Berge} guarantees continuity of the optimal solution with respect to $\epsilon$. Using proposition 4.2.1 of \cite{Bertsekas} we conclude that the solution of \eqref{PRIMAL_time} converges to the solution of the original problem \eqref{PRIMAL}.
\end{proof}
\begin{remark}
If we assume the functional to minimize in \eqref{PRIMAL_time} to be at least twice differentiable with respect to the updated beliefs $(Q_A,Q_B)$, we could use the implicit function theorem to obtain a generic condition under which the unique solution of \eqref{PRIMAL_time} is differentiable with respect to $\epsilon$ on a small neighborhood of the origin of the form $(0,\epsilon_0)$, for some $\epsilon_0>0$.
\end{remark}

\subsection{Projected dynamical systems} \label{SSPDS}
We now consider a discrete-time projected dynamical system implicitly\footnote{The proposed implicit scheme can be thought of as a backward Euler approximation of a continuous time gradient flow. Similar results would hold for an explicit scheme which corresponds to a forward Euler approximation of a continuous gradient flow, of course with the usual drawbacks of the forward Euler scheme inherited by the explicit scheme. On the other hand, if we wish to assign a behavioral interpretation to the gradient scheme we may always revert to the explicit scheme by a power series approximation of the implicit scheme.} defined by
\begin{equation}\label{PDS}
x(t+1) = x(t) + \alpha \left( \PiA [x(t) - \beta G(x(t+1),x(t))] - x(t)\right)  \ ,
\end{equation}
where $\alpha$ and $\beta$ are such that $0<\alpha<1$ and $\beta>0$, $\Dconv$ is a convex compact subset of $\R^n$, $G:\Dconv\times \Dconv\rightarrow\R^n$ is a continuous function on $\Dconv\times \Dconv$ and $\PiA:\R^n\rightarrow \Dconv$ is the projection operator onto $\Dconv$ defined by
\begin{equation*}
\PiA(x) = \underset{y\in \Dconv}{\textrm{argmin}}\left\|x-y\right\|  \ ,
\end{equation*}
where $\left\|\cdot\right\|$ denotes the euclidean norm in $\R^n$. Furthermore, we assume that every point in $\Dconv$ has a unique forward orbit, i.e. there exists a map $F:\Dconv\rightarrow \Dconv$ such that
\begin{equation}\label{PDS2}
x(t+1) = F(x(t)) \ .
\end{equation}
In the case where $\PiA\left[y-\beta G(x,y)\right]$ is a differentiable function, the assumption above reduces to assuming that the inequality 
\begin{eqnarray*}
\det\left(\Id_{n\times n} -\alpha D_{x}\PiA\left[y-\beta G(x,y)\right]\right) \neq 0 
\end{eqnarray*}
holds for every $x,y\in \Dconv$, where $\Id_{n\times n}$ stands for the identity matrix in $\R^n$ and $D_x$ denotes differentiation with respect to $x$.

The next  auxiliary result concerns invariance and stability of the convex set $\Dconv$ under the dynamics defined by \eqref{PDS}.

\begin{lemma}\label{PDS_inv} The following hold: \\
(i)  $\Dconv$ is invariant under the projected dynamical system \eqref{PDS}.\\
 (ii) If $x_0\notin \Dconv$, the orbit through $x_0$ of the projected dynamical system \eqref{PDS} converges exponentially fast to  $\Dconv$ as $t\rightarrow\infty$.
\end{lemma}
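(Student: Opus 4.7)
The plan is to exploit the fact that the update rule \eqref{PDS} expresses $x(t+1)$ as an explicit convex combination of $x(t)$ and a point that is guaranteed to lie in $\Dconv$. Setting
\begin{equation*}
y(t) := \PiA\!\left[x(t) - \beta G(x(t+1),x(t))\right] \in \Dconv,
\end{equation*}
the recursion \eqref{PDS} rewrites as $x(t+1) = (1-\alpha)\,x(t) + \alpha\, y(t)$, with $\alpha \in (0,1)$. Because everything reduces to this identity, both parts of the lemma become very short once it is in hand.

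For part (i), if $x(t) \in \Dconv$ then $x(t+1)$ is a convex combination of the two points $x(t),y(t) \in \Dconv$, and hence $x(t+1) \in \Dconv$ by convexity of $\Dconv$. An immediate induction starting from $x_0 \in \Dconv$ yields the invariance claim.

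For part (ii), I would let $p(t) := \PiA(x(t))$ denote the metric projection of $x(t)$ onto $\Dconv$, so that $\|x(t)-p(t)\| = d(x(t),\Dconv)$. The key observation is that the point
\begin{equation*}
(1-\alpha)\,p(t) + \alpha\, y(t)
\end{equation*}
also lies in $\Dconv$, since $p(t),y(t) \in \Dconv$ and $\Dconv$ is convex. Using this comparison point, a direct computation gives
\begin{equation*}
x(t+1) - \bigl[(1-\alpha)p(t) + \alpha y(t)\bigr] = (1-\alpha)\bigl(x(t) - p(t)\bigr),
\end{equation*}
and therefore
\begin{equation*}
d(x(t+1),\Dconv) \le \bigl\| x(t+1) - \bigl[(1-\alpha)p(t) + \alpha y(t)\bigr] \bigr\| = (1-\alpha)\,d(x(t),\Dconv).
\end{equation*}
Iterating this contraction inequality yields $d(x(t),\Dconv) \le (1-\alpha)^{t}\, d(x_0,\Dconv)$, which is exponential convergence since $1-\alpha \in (0,1)$.

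Honestly, there is no serious obstacle here: no smoothness of $G$, no fixed-point reasoning, and no assumption on $\beta$ is needed, because the geometry of convex combinations does all the work. The only small subtlety is choosing the right witness point in $\Dconv$ against which to measure $x(t+1)$, namely $(1-\alpha)p(t)+\alpha y(t)$ rather than $y(t)$ itself or $p(t)$ alone; this choice is what makes the $G$-dependent term drop out and gives the clean contraction factor $(1-\alpha)$.
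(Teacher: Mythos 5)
Your proof is correct, and for part (i) it coincides with the paper's argument (rewrite the recursion as $x(t+1)=(1-\alpha)x(t)+\alpha\,\PiA[\cdots]$ and invoke convexity). For part (ii) you take a genuinely different, and in fact more robust, route. The paper frames the argument with the Lyapunov function $V(x)=\|x-\PiA x\|^2$ and relies on the identity $\PiA x_{n+1}=(1-\alpha)\PiA x_n+\alpha\,\PiA[x_n-\beta G(x_{n+1},x_n)]$, i.e.\ it treats the metric projection as affine on convex combinations; this is true for the specific affine projection \eqref{proj_op} onto the product of simplices used later, but it is not valid for a general convex compact $\Dconv\subset\R^n$, which is the setting of Lemma \ref{PDS_inv}. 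You instead compare $x(t+1)$ against the witness point $(1-\alpha)p(t)+\alpha y(t)\in\Dconv$ and use only that the distance to $\Dconv$ is bounded by the distance to any of its points, obtaining the inequality $d(x(t+1),\Dconv)\le(1-\alpha)\,d(x(t),\Dconv)$ rather than the paper's exact equality $x_{n+1}-\PiA x_{n+1}=(1-\alpha)(x_n-\PiA x_n)$. Both yield exponential convergence at rate $1-\alpha$, but your version holds for an arbitrary convex compact $\Dconv$ without any affineness assumption on $\PiA$, so it actually closes a small gap in the paper's general statement; the paper's version, where it applies, gives the slightly stronger pointwise identity for the displacement vector.
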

\begin{proof}
(i) Let $x(t)$ be an element of $\Dconv$ and rewrite \eqref{PDS} as
\begin{equation*}
x(t+1) = (1-\alpha)x(t) + \alpha  \PiA [x(t)-\beta G(x(t+1),x(t))]  \ .
\end{equation*}
Since $\PiA$ is a projection operator onto $\Dconv$, we get that $x(t+1)$ is a convex linear combination of two elements of $\Dconv$. Therefore, we conclude that $x(t+1)\in \Dconv$, proving invariance of $\Dconv$ under the dynamical system \eqref{PDS}.

(ii) The continuous function $V:\R^n\rightarrow\R$ given by
\begin{equation*}
V(x) =\left\|x-\PiA x\right\|^2
\end{equation*}
is a Lyapunov function \cite{Agarwal} for the discrete time dynamical system \eqref{PDS}. Note that $V(x)=0$ for every $x\in \Dconv$ and that $V(x)>0$ for every $x\notin \Dconv$. It remains to check that $V$ is strictly decreasing on orbits of \eqref{PDS} through points not in $\Dconv$. 

Take an arbitrary point $x_0\notin \Dconv$ and let $\{x_n:n\in\N\}$ be the sequence of points in the forward orbit of $x_0$ by the dynamics of \eqref{PDS}. We will prove that the quantity
\begin{equation}\label{Lyap1}
\Delta V(n):= V(x_{n+1})-V(x_n) = \left\|x_{n+1}-\PiA x_{n+1}\right\|^2 - \left\|x_{n}-\PiA x_{n}\right\|^2 
\end{equation}
is negative for every $n\in\N$.  We first note that
\begin{equation}\label{Lyap2}
x_{n+1} = (1-\alpha)x_{n} +\alpha \PiA [x_n-\beta G(x_{n+1},x_n)] \ .
\end{equation}
Since $\PiA$ is a projection operator, we get
\begin{eqnarray}\label{Lyap3}
\PiA x_{n+1} &=& \PiA \left[(1-\alpha)x_{n} +\alpha \PiA [x_n-\beta G(x_{n+1},x_n)] \right] \nonumber \\
&=& (1-\alpha) \PiA x_{n}  + \alpha \PiA [x_n-\beta G(x_{n+1},x_n)]   \ .
\end{eqnarray}
Putting together \eqref{Lyap2} and \eqref{Lyap3}, we conclude that 
\begin{equation*}\label{Lyap4}
x_{n+1}-\PiA x_{n+1} = (1-\alpha)\left(x_{n} -  \PiA x_{n}\right) \ .
\end{equation*}
We now use the previous equality to obtain
\begin{equation}\label{Lyap5}
\left\|x_{n+1}-\PiA x_{n+1}\right\|^2 - \left\|x_{n}-\PiA x_{n}\right\|^2  = ((1-\alpha)^2-1)\left\|x_{n}-\PiA x_{n}\right\|^2 \ .
\end{equation}
Substituting \eqref{Lyap5} in \eqref{Lyap1} and recalling that $0<\alpha<1$, we get 
\begin{equation*}
V(x_{n+1})-V(x_n) = ((1-\alpha)^2-1)\left\|x_{n}-\PiA x_{n}\right\|^2 < 0 \ ,
\end{equation*}
thus proving that orbits of \eqref{PDS} through points not in $\Dconv$ converge to $\Dconv$ exponentially fast.
\end{proof}

The next theorem ensures robustness of the dynamical system \eqref{PDS} under sufficiently small perturbations.
\begin{theorem}
Assume that the map $\overline{G}:\R^n\times \R^n\rightarrow \Dconv$ given by $\overline{G}(x,y) = \PiA\left[y-\beta G(x,y)\right]$ is differentiable. Then $\Dconv$ is a normally hyperbolic invariant manifold for the projected dynamical system \eqref{PDS}.
\end{theorem}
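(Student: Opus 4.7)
The plan is to verify the two defining properties of a normally hyperbolic invariant manifold for the implicitly defined map $F$: invariance of $\Dconv$, already supplied by Lemma~\ref{PDS_inv}(i), and the existence of a continuous $DF$-invariant splitting of the ambient tangent bundle along $\Dconv$ in which the normal direction is uniformly contracted at a rate that dominates the tangential dynamics.

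The first step is to obtain $C^1$ regularity of the map $F$ defined by \eqref{PDS2}. Under the stated differentiability of $\overline{G}$, the non-degeneracy condition $\det(\Id_{n\times n} - \alpha D_{x}\overline{G}) \neq 0$ allows the implicit function theorem to be applied to $x(t+1) = (1-\alpha)x(t) + \alpha \overline{G}(x(t+1), x(t))$, yielding $F \in C^1$ in a neighborhood of $\Dconv$ with
\[
DF(x) = \bigl[I - \alpha D_1\overline{G}(F(x),x)\bigr]^{-1}\bigl[(1-\alpha) I + \alpha D_2 \overline{G}(F(x),x)\bigr].
\]
I would then exploit the fact that $\overline{G}$ takes values in $\Dconv$: at any $x \in \Dconv$ the images of $D_1\overline{G}$ and $D_2\overline{G}$ lie in the tangent space $T_{F(x)}\Dconv$. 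Decomposing the ambient space orthogonally as $\R^n = T_x\Dconv \oplus N_x\Dconv$, both factors above become block upper triangular, with lower-right blocks $I$ and $(1-\alpha)I$ respectively. Hence $DF(x)$ itself is block upper triangular, the tangent bundle $T\Dconv$ is $DF$-invariant, and the normal block is exactly $(1-\alpha)I$. This matches the global identity $x_{n+1} - \PiA x_{n+1} = (1-\alpha)(x_n - \PiA x_n)$ derived inside the proof of Lemma~\ref{PDS_inv}(ii), isolating $1-\alpha \in (0,1)$ as the uniform normal contraction rate.

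To close the argument I would invoke the classical invariant-splitting theorem of Hirsch, Pugh and Shub: on the compact invariant set $\Dconv$ the map $DF|_{T\Dconv}$ has norm uniformly bounded by continuity, and the normal block $(1-\alpha)I$ is strictly contracting, which together produce a unique $DF$-invariant stable bundle $E^s$ close to $N\Dconv$ with linear rate $1-\alpha$. The main obstacle I anticipate is the rate-comparison step — ensuring that no tangential direction contracts faster than the normal rate $1-\alpha$, which is the delicate point in normal hyperbolicity. I would resolve it either by choosing the iteration parameter $\alpha$ sufficiently small so that $1-\alpha$ strictly dominates any tangential contraction induced by the gradient scheme, or, more intrinsically, by using the strict convexity of $L$ (inherited by $L_P$ under our hypotheses) to extract an explicit positive lower bound on the co-norm $m(DF|_{T\Dconv})$ that beats $1-\alpha$.
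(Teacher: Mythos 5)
Your core mechanism is the same as the paper's: invariance of $\Dconv$ from Lemma \ref{PDS_inv}(i), plus the exact normal contraction $x_{n+1}-\PiA x_{n+1}=(1-\alpha)(x_n-\PiA x_n)$ extracted from the proof of Lemma \ref{PDS_inv}(ii), which pins the normal rate at $1-\alpha$. The paper's own proof essentially stops there (local trivialization, ``normal eigenvalues inside the unit circle'', partition of unity); you go further by computing $DF$ through the implicit function theorem and exhibiting the block-triangular structure that follows from $\overline{G}$ taking values in $\Dconv$, and — importantly — you are right to single out the rate-domination condition as the delicate point. That condition is part of the Hirsch--Pugh--Shub definition of normal hyperbolicity and the paper's proof does not address it at all, so on this score your write-up is more honest than the original.

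The problem is that both of your proposed resolutions of that point fail. Linearizing, the tangential block of $DF$ is approximately $\left[\Id+\alpha\beta\, P_T D_1G\right]^{-1}\left[\Id-\alpha\beta\, P_T D_2G\right]$, whose co-norm is $1-\alpha\beta\mu+O((\alpha\beta)^2)$ with $\mu$ the largest tangential eigenvalue of $D_1G+D_2G$ along $\Dconv$, while the normal rate is exactly $1-\alpha$. The domination inequality $1-\alpha<1-\alpha\beta\mu$ reduces to $\beta\mu<1$: the parameter $\alpha$ cancels, so taking $\alpha$ small buys you nothing. And strict convexity of the potential pushes in the wrong direction — it bounds the tangential Hessian from \emph{below}, making the tangential map more strongly contracting and the domination harder, not easier, to achieve; what is actually needed is an \emph{upper} bound, i.e.\ a Lipschitz bound on the derivative of $G$ together with $\beta$ small enough that (roughly) $\beta L<1$, a condition in the spirit of, but not implied by, the hypothesis of Theorem \ref{global_conv}. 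One further caveat: your orthogonal splitting $T_x\Dconv\oplus N_x\Dconv$ and the affinity of $\PiA$ used in Lemma \ref{PDS_inv}(ii) are valid only because $\Dconv$ (in the application, $\Dcal$) sits inside a proper affine subspace on which $\PiA$ is affine; for a full-dimensional convex compact $\Dconv\subset\R^n$ the normal bundle is trivial and the statement is empty, so the argument — yours and the paper's — must be read as confined to that affine setting.
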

\begin{proof}
Choose a local trivialization $(x,y)\in\cup_j M_j \times E$, where $\cup_j M_j$ is an open cover of the convex compact set $\Dconv$ and $E$ is a Banach space measuring the displacement from $\Dconv$. Note that in such coordinates, the set invariant set $D$ is defined by the condition $y=0$. Moreover, in the local coordinates $(x,y)$ the spectrum of the normal component must have all eigenvalues inside the unity circle by lemma \ref{PDS_inv}(ii). Using a partition of the unity argument one can prove that this property is global, thus proving normal hyperbolicity of the set $\Dconv$.
\end{proof}
Using the previous theorem, we obtain that the dynamics of \eqref{PDS} are robust under sufficiently small perturbations, i.e. any sufficiently close dynamical system still has a normally hyperbolic invariant manifold which is close to $\Dconv$. Furthermore, if we assume that the perturbation preserves the compact convex set $\Dconv$, we obtain that $\Dconv$ is still a normally hyperbolic manifold for the perturbed dynamics. This property ensures that in a neighborhood of the dynamical system \eqref{PDS}, the dynamics are qualitatively similar, providing robustness to the model under analysis. 

The result below provides a characterization for the fixed points of the projected dynamical system \eqref{PDS}. 

\begin{proposition}\label{fp_car}
  The following statements hold:
\begin{itemize}
\item[(i)] if $G(x^*,x^*)=0$, then $x^*\in \Dconv$ is a fixed point of \eqref{PDS};
\item[(ii)] if $x^*\in \mathrm{int}(\Dconv)$ is a fixed point of \eqref{PDS}, then $G(x^*,x^*)=0$. 
\end{itemize}
\end{proposition}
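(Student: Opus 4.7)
The proof splits naturally into the two implications, and both are quite direct once we recall the defining equation
\begin{equation*}
x(t+1) = x(t) + \alpha\left(\PiA[x(t)-\beta G(x(t+1),x(t))]-x(t)\right)
\end{equation*}
and the elementary properties of the projection $\PiA$ onto the convex compact set $\Dconv$.

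For part (i), the plan is a straightforward substitution. Assuming $x^{*}\in\Dconv$ and $G(x^{*},x^{*})=0$, I would set $x(t)=x(t+1)=x^{*}$ in the defining equation. The projection argument becomes $\PiA[x^{*}]$, which equals $x^{*}$ since $x^{*}\in\Dconv$ and projection fixes the set. The right hand side then collapses to $x^{*}+\alpha(x^{*}-x^{*})=x^{*}$, confirming $x^{*}$ is a fixed point. The only thing to verify is consistency with the implicit character of the scheme (i.e., that $x^{*}$ is an admissible solution of the implicit equation), but this follows directly from the hypothesis \eqref{PDS2} ensuring forward orbits are uniquely defined.

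For part (ii), the plan is to exploit the boundary behavior of Euclidean projection. Starting from the fixed point relation $x^{*}=x^{*}+\alpha(\PiA[x^{*}-\beta G(x^{*},x^{*})]-x^{*})$ and using $\alpha\in(0,1)$, I obtain
\begin{equation*}
\PiA\bigl[x^{*}-\beta G(x^{*},x^{*})\bigr] = x^{*}.
\end{equation*}
The key observation is the following standard fact about convex projection: for any $y\in\R^{n}$, if $y\notin\Dconv$ then $\PiA[y]$ lies on $\bd(\Dconv)$. Since the hypothesis places $x^{*}$ in $\mathrm{int}(\Dconv)$, the point $y^{*}:=x^{*}-\beta G(x^{*},x^{*})$ cannot lie outside $\Dconv$; hence $y^{*}\in\Dconv$ and therefore $\PiA[y^{*}]=y^{*}$. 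Combining with $\PiA[y^{*}]=x^{*}$ yields $x^{*}-\beta G(x^{*},x^{*})=x^{*}$, and since $\beta>0$ we conclude $G(x^{*},x^{*})=0$.

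I do not anticipate a serious obstacle in either step; the whole argument rests on the two elementary facts that $\PiA$ restricted to $\Dconv$ is the identity and that $\PiA$ sends points outside a closed convex set to its boundary. The mild point worth stating carefully is why the interior hypothesis in (ii) cannot be weakened: if $x^{*}$ lies on $\bd(\Dconv)$, the projection can fix $x^{*}$ even when $y^{*}\neq x^{*}$, provided $y^{*}-x^{*}$ belongs to the outward normal cone of $\Dconv$ at $x^{*}$; this is precisely why the interior hypothesis is used to rule out such boundary fixed points which need not correspond to zeros of $G$.
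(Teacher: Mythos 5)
Your proof is correct, and part (i) matches the paper's (which simply calls it a trivial verification). For part (ii), however, you take a genuinely different route. The paper invokes the variational-inequality characterization of the Euclidean projection onto a closed convex set (citing Kinderlehrer--Stampacchia): from $x^*=\PiA[x^*-\beta G(x^*,x^*)]$ it deduces $\left\langle G(x^*,x^*),\eta-x^*\right\rangle\ge 0$ for all $\eta\in\Dconv$, and then uses interiority of $x^*$ to let $\eta-x^*$ range over all directions, forcing $G(x^*,x^*)=0$. You instead argue topologically: since the projection of any point outside the closed convex set $\Dconv$ lands on $\bd(\Dconv)$, and $\PiA[x^*-\beta G(x^*,x^*)]=x^*$ lies in $\mathrm{int}(\Dconv)$, the pre-image point must already belong to $\Dconv$, whence the projection acts as the identity and $\beta G(x^*,x^*)=0$. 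Both arguments are sound and elementary; the paper's buys a sharper statement at boundary fixed points (namely $-G(x^*,x^*)$ must lie in the normal cone of $\Dconv$ at $x^*$, which is exactly the obstruction you describe informally in your closing remark), while yours avoids the variational inequality machinery altogether. One caveat common to both proofs: in the paper's actual application $\Dconv=\Delta^K\times\Delta^K$ has empty interior in $\R^{2K}$, so ``$\mathrm{int}$'' must be read as relative interior there; under that reading your boundary fact needs adjusting (a point off the affine hull can project into the relative interior), and the paper's own argument then only yields that $G(x^*,x^*)$ is orthogonal to the affine hull of $\Dconv$ rather than zero. For the proposition as literally stated, with the ambient interior, your argument is complete.
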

\begin{proof}
The proof of item (i) is trivial verification.

In what concerns the proof of item (ii), we recall that every fixed point $x^*\in \Dconv$ lies on the non-empty subset $S\subset \Dconv$ defined by
\begin{equation*}
S= \left\{x\in \Dconv:  x = \PiA [x - \beta G(x,x)] \right\} \ .
\end{equation*}
Since $\PiA$ is a projection operator (see \cite{Kinderlehrer_Stampacchia}) every $x^*\in S$ must satisfy the inequality
\begin{equation*}
\left\langle x^*,\eta- x^*\right\rangle  \ge \left\langle  x^* - \beta G(x^*,x^*),\eta-x^*\right\rangle 
\end{equation*}
for every $\eta\in \Dconv$. Using the inner product linearity, we obtain that the inequality
\begin{equation*}
\left\langle G(x^*,x^*),\eta- x^*\right\rangle  \ge 0  
\end{equation*}
must be satisfied for every $\eta\in \Dconv$. We conclude that $G(x^*,x^*)=0$ for every $x^*\in S\cap\mathrm{int}(\Dconv)$. 
\end{proof}

The next theorem concerns the existence and stability of fixed points of \eqref{PDS}.

\begin{theorem}\label{global_conv}
 Let $G_D:\Dconv\rightarrow \Dconv$,  $G_D(x):=G(F(x),x)$, (see \eqref{PDS2})  be Lipschitz continuous. 
 
 Then, the projected dynamical system \eqref{PDS} has at least one fixed point which is stable in the sense of Lyapunov  for every $\alpha$ such that $0< \alpha < (1+\beta^2L^2/2)^{-1}$, where $L$ is the Lipschitz constant of $G_D$. 
\end{theorem}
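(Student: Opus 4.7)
The proof naturally splits into two parts: existence of a fixed point, and verification of Lyapunov stability.

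For existence I would invoke Brouwer's fixed point theorem. By hypothesis $F:\Dconv\to\Dconv$ is well defined, by Lemma \ref{PDS_inv}(i) the convex compact set $\Dconv$ is invariant under $F$, and continuity of $F$ follows from the assumed continuity of $G$, the continuity of the projection $\PiA$, and the uniqueness hypothesis on forward orbits. Since $F$ is a continuous self-map of the compact convex set $\Dconv$, Brouwer yields at least one $x^{*}\in\Dconv$ with $F(x^{*})=x^{*}$.

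For stability the natural candidate is the quadratic Lyapunov function $V(x)=\|x-x^{*}\|^{2}$, which vanishes only at $x^{*}$ and is continuous and positive definite at $x^{*}$. Writing $y(x):=\PiA[x-\beta G_{D}(x)]$ so that $F(x)=(1-\alpha)x+\alpha y(x)$, the standard identity for $\|(1-\alpha)a+\alpha b\|^{2}$ gives
\begin{equation*}
V(F(x)) \;=\; (1-\alpha)V(x) \;+\; \alpha\,\|y(x)-x^{*}\|^{2} \;-\; \alpha(1-\alpha)\,\|x-y(x)\|^{2},
\end{equation*}
so that $\Delta V(x):=V(F(x))-V(x)=\alpha\bigl(\|y(x)-x^{*}\|^{2}-V(x)\bigr)-\alpha(1-\alpha)\|x-y(x)\|^{2}$.

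Because $x^{*}$ is a fixed point of $F$, Proposition \ref{fp_car} and its variational characterisation give $x^{*}=\PiA[x^{*}-\beta G_{D}(x^{*})]$. Firm nonexpansivity of $\PiA$ then yields
\begin{equation*}
\|y(x)-x^{*}\|^{2} \;\le\; \bigl\|(x-x^{*})-\beta\bigl(G_{D}(x)-G_{D}(x^{*})\bigr)\bigr\|^{2},
\end{equation*}
and expanding the right-hand side reduces the whole question to controlling the cross term $2\beta\langle x-x^{*},\,G_{D}(x)-G_{D}(x^{*})\rangle$. Applying Cauchy--Schwarz followed by Young's inequality with a weight tuned to match the coefficient $\beta^{2}L^{2}/2$, and using the Lipschitz bound $\|G_{D}(x)-G_{D}(x^{*})\|\le L\|x-x^{*}\|$, I would obtain an estimate of the form $\|y(x)-x^{*}\|^{2}\le (1+\beta^{2}L^{2}/2)\,V(x)$ up to the nonnegative term $-\alpha(1-\alpha)\|x-y(x)\|^{2}$. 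Plugging this into $\Delta V$ and using the standing assumption $\alpha(1+\beta^{2}L^{2}/2)<1$ gives $\Delta V(x)\le 0$ on $\Dconv$, and hence stability of $x^{*}$ in the sense of Lyapunov by the standard discrete-time Lyapunov theorem (see \cite{Agarwal}).

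The main obstacle is exactly the cross term above: since $G_{D}$ is only assumed Lipschitz (no monotonicity, no gradient structure), the inner product $\langle x-x^{*},G_{D}(x)-G_{D}(x^{*})\rangle$ can have either sign, and the only available a priori bound is $L\|x-x^{*}\|^{2}$ in absolute value. Choosing the Young's-inequality parameter in the right way -- so that it combines with the $-\alpha(1-\alpha)\|x-y(x)\|^{2}$ term already present in $\Delta V$ and with the Lipschitz constant in exactly the combination $1+\beta^{2}L^{2}/2$ -- is the delicate calculation and is what pins down the precise range $\alpha\in\bigl(0,(1+\beta^{2}L^{2}/2)^{-1}\bigr)$ stated in the theorem.
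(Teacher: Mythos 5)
Your existence step is fine and coincides with the paper's: invariance of $\Dconv$ (lemma \ref{PDS_inv}) plus Brouwer. The stability step, however, has a genuine gap, and it is exactly at the point you defer to a ``delicate Young's inequality calculation.'' Starting from firm nonexpansivity you can only get
\begin{equation*}
\|y(x)-x^{*}\|^{2}\;\le\;\|x-x^{*}\|^{2}-2\beta\langle x-x^{*},G_{D}(x)-G_{D}(x^{*})\rangle+\beta^{2}\|G_{D}(x)-G_{D}(x^{*})\|^{2}\;\le\;(1+\beta L)^{2}\,V(x),
\end{equation*}
because Cauchy--Schwarz plus the Lipschitz bound is the \emph{optimal} estimate of the cross term (Young's inequality with any weight $t$ gives $t+\beta^{2}L^{2}/t\ge 2\beta L$). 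So $\Delta V(x)\le\alpha\bigl((1+\beta L)^{2}-1\bigr)V(x)-\alpha(1-\alpha)\|x-y(x)\|^{2}$, whose first term is strictly positive, and the second term cannot absorb it: $\|x-y(x)\|$ admits no lower bound in terms of $\|x-x^{*}\|$ (it vanishes at any other fixed point, however far from $x^{*}$). In fact the intermediate statement you are trying to prove --- that the fixed point handed to you by Brouwer is Lyapunov stable with $V(x)=\|x-x^{*}\|^{2}$ nonincreasing --- is false under a bare Lipschitz hypothesis. Take $n=1$, $\Dconv=[-1,1]$, $G(x,y)=-Ly$, so $G_{D}(x)=-Lx$ is $L$-Lipschitz. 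Then $x^{*}=0$ is a fixed point, and for $|x|\le(1+\beta L)^{-1}$ one computes $F(x)=(1+\alpha\beta L)x$, so orbits move away from $0$ and $V$ strictly increases along them; the theorem is rescued only because the \emph{other} fixed points $\pm1$ are stable. Since your argument must apply to whichever fixed point Brouwer produces, it cannot close.

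The paper sidesteps this by not pre-selecting a fixed point at all: its Lyapunov function is the squared increment, $V(x_{n+1},x_{n})=\alpha^{-2}\|x_{n+1}-x_{n}\|^{2}$, and the Lipschitz constant of $G_{D}$ together with nonexpansivity of $\PiA$ is used to show that successive increments contract precisely when $0<\alpha<(1+\beta^{2}L^{2}/2)^{-1}$. The orbit is then Cauchy, converges to some point of $S=\{x:\,x=\PiA[x-\beta G(x,x)]\}$, and that limit is the stable fixed point whose existence the theorem asserts. If you want to salvage your distance-to-equilibrium functional, you would need an additional structural hypothesis such as monotonicity of $G_{D}$ (i.e.\ $\langle x-x^{*},G_{D}(x)-G_{D}(x^{*})\rangle\ge0$), which kills the offending cross term but is not available here.
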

\begin{proof}
Using lemma \ref{PDS_inv} we obtain that $\Dconv$ is invariant under the dynamics of \eqref{PDS}. Moreover, since $\Dconv$ is convex and compact, Brouwer's fixed point theorem ensures the existence of at least one fixed point $x^*$ for the dynamical system \eqref{PDS}. Note that by definition of the dynamical system \eqref{PDS}, every fixed point $x^*\in \Dconv$ lies on the non-empty subset $S\subset \Dconv$ defined by
\begin{equation*}\label{glb_conv_1}
S= \left\{x\in \Dconv:  x = \PiA [x - \beta G(x,x)] \right\} \ .
\end{equation*}

Take an arbitrary point $x_0\in \Dconv$ and let $\{x_n:n\in\N\}$ be the sequence of points in the forward orbit of $x_0$ by the dynamics of \eqref{PDS}. We will now see that the sequence $\{x_n\}_{n\in\N}$ converges to some point $x^*\in S$. For that purpose, we will prove that $V:\Dconv\times \Dconv\rightarrow\R$ given by
\begin{equation*}
V(x,y) =\left\|y-\PiA\left[ y - \beta G(x,y)\right]\right\|^2
\end{equation*}
is a Lyapunov function \cite{Agarwal} for the discrete time dynamical system \eqref{PDS}. We start by noting that $V(x^*,x^*)=0$ for every $x^*\in S$. Moreover, we have that 
\begin{eqnarray*}
V(x_{n+1},x_n) &=&  \left\|x_n-\PiA\left[ x_n - \beta G(x_{n+1},x_n)\right]\right\|^2 \nonumber \\
&=& \frac{1}{\alpha^2}\left\|x_{n+1}-x_n\right\|^2 \ .
\end{eqnarray*}
Therefore, we obtain that $V(x_{n+1},x_n)>0$ for every $n\in\N$ and every trajectory of \eqref{PDS} not containing a fixed point $x^*\in S$. It remains to check that $V$ is strictly decreasing on orbits of \eqref{PDS}, i.e. we will prove that the variation
\begin{equation*}
\Delta V(n) = V(x_{n+1},x_n) - V(x_n,x_{n-1})
\end{equation*}
is negative for every $n\in\N$. Using the projected dynamical system \eqref{PDS} definition, we obtain that
\begin{equation*}
\PiA\left[ x_n - \beta G(x_{n+1},x_n)\right] = \frac{x_{n+1}-(1-\alpha)x_n}{\alpha} \ .
\end{equation*}
Using the previous identity, we get that
\begin{eqnarray*}
V(x_{n+1},x_n) &=& \left\|x_{n} -  \PiA\left[ x_n - \beta G(x_{n+1},x_n)\right] \right\|^2 \nonumber \\
& = & \frac{1}{\alpha^2} \left\|x_{n+1} -x_n \right\|^2 \ .
\end{eqnarray*}
In a similar fashion, we obtain that
\begin{equation*}
V(x_{n},x_{n-1}) = \frac{1}{\alpha^2} \left\|x_{n} -x_{n-1} \right\|^2 \ ,
\end{equation*}
and thus, we get that
\begin{equation}\label{glb_conv_2}
\Delta V(n) = \frac{1}{\alpha^2}\left(\left\|x_{n+1} -x_n \right\|^2 - \left\|x_{n} -x_{n-1} \right\|^2 \right) \ .
\end{equation}
Using \eqref{PDS} twice, we obtain
\begin{eqnarray}\label{glb_conv_3}
x_{n+1} - x_n &=& (1-\alpha)(x_{n}-x_{n-1})\nonumber \\
&& +\alpha\left(\PiA [x_n-\beta G(x_{n+1},x_n)] - \PiA [x_{n-1}-\beta G(x_{n},x_{n-1})]\right) \ .
\end{eqnarray}
Combining \eqref{glb_conv_2} and \eqref{glb_conv_3} with the triangle inequality, we obtain the inequality
\begin{eqnarray*}
\lefteqn{\left\|x_{n+1} - x_n \right\|^2 - \left\|x_{n} -x_{n-1} \right\|^2 \le  -(1-(1-\alpha)^2)\left\|x_{n} -x_{n-1} \right\|^2 } \nonumber \\
&&+\alpha^2\left\|\PiA [x_n-\beta G(x_{n+1},x_n)] - \PiA [x_{n-1}-\beta G(x_{n},x_{n-1})]\right\|^2 
\end{eqnarray*}
Therefore, we have that
\begin{eqnarray}\label{glb_conv_5}
\Delta V (n) &\le&  \left(1-\frac{2}{\alpha}\right)\left\|x_{n} -x_{n-1} \right\|^2 \nonumber \\
&&+\left\|\PiA [x_n-\beta G(x_{n+1},x_n)] - \PiA [x_{n-1}-\beta G(x_{n},x_{n-1})]\right\|^2 	\ .
\end{eqnarray}
Noting that $\PiA$ is a projection and using once more the triangle inequality, we get 
\begin{eqnarray*}
\lefteqn{\left\|\PiA [x_n-\beta G(x_{n+1},x_n)] - \PiA [x_{n-1}-\beta G(x_{n},x_{n-1})]\right\|^2 \le }  \nonumber \\
&&\left\|x_n - x_{n-1}\right\|^2 +\beta^2 \left\|G(x_{n+1},x_n)]  - G(x_{n},x_{n-1})\right\|^2  	\ .
\end{eqnarray*}
Using the representation \eqref{PDS2} for the projected dynamical system \eqref{PDS}, we can write $x_{n+1}=F(x_n)$ and $x_n=F(x_{n-1})$. Substituting into the previous inequality, we obtain
\begin{eqnarray*}
\lefteqn{\left\|\PiA [x_n-\beta G(x_{n+1},x_n)] - \PiA [x_{n-1}-\beta G(x_{n},x_{n-1})]\right\|^2 \le }  \nonumber \\
&&\left\|x_n - x_{n-1}\right\|^2 +\beta^2 \left\|G(F(x_n),x_n)]  - G(F(x_{n-1}),x_{n-1})\right\|^2  	\ .
\end{eqnarray*}
Since we assume $G(F(x),x)$ to be $L$-Lipschitz continuous, we obtain the following estimate
\begin{equation}\label{glb_conv_6}
\left\|\PiA [x_n-\beta G(x_{n+1},x_n)] - \PiA [x_{n-1}-\beta G(x_{n},x_{n-1})]\right\|^2 \le (1+\beta^2L^2)\left\|x_n - x_{n-1}\right\|^2 	\ .
\end{equation}
Combining \eqref{glb_conv_5} with \eqref{glb_conv_6}, we obtain
\begin{equation*}
\Delta V (n) \le \left(2 +\beta^2L^2-\frac{2}{\alpha}\right)\left\|x_{n} -x_{n-1} \right\|^2 
\end{equation*}
We conclude that for every $\alpha\in (0,(1+\beta^2L^2/2)^{-1})$ we have that $\Delta V (n)<0$ for every $n\in\N$. Since this estimate does not depend on the initial point $x_0\in \Dconv$ then, for every $\alpha$ in the interval above, every trajectory of \eqref{PDS} must converge to some fixed point $x^*\in S$.
\end{proof}

The next result is a natural consequence of the two previous results. 

\begin{corollary}
Assume that the map $\tilde{G}:\Dconv\rightarrow \R^n$,  $\tilde{G}(x): = G(x,x)$  is the gradient of a function $V$ with a unique minimum. Then, the projected dynamical system \eqref{PDS} converges to the unique minimizer of $V$ as $t\rightarrow+\infty$. 
\end{corollary}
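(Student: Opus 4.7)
The plan is to combine Theorem~\ref{global_conv} with the variational-inequality characterisation of fixed points that appears inside the proof of Proposition~\ref{fp_car}. Since $\tilde G = \nabla V$ is assumed smooth on the compact set $\Dconv$, the hypothesis that $G_D(x) = G(F(x),x)$ is Lipschitz continuous (inherited from the preceding theorem) is natural, and for $\alpha \in (0, (1+\beta^2 L^2/2)^{-1})$ Theorem~\ref{global_conv} already gives convergence of every forward orbit of \eqref{PDS} to some fixed point $x^* \in S$, where
$$ S = \left\{ x \in \Dconv : x = \PiA[x - \beta G(x,x)] \right\} . $$
So the only thing left is to identify every element of $S$ with the minimiser of $V$ on $\Dconv$.

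The key step is to reuse the variational inequality derived at the end of the proof of Proposition~\ref{fp_car}: any $x^* \in S$ must satisfy
$$ \langle G(x^*, x^*), \eta - x^* \rangle \geq 0 \qquad \text{for every } \eta \in \Dconv . $$
Substituting the assumption $G(x^*, x^*) = \tilde G(x^*) = \nabla V(x^*)$ yields
$$ \langle \nabla V(x^*), \eta - x^* \rangle \geq 0 \qquad \text{for every } \eta \in \Dconv , $$
which is precisely the first-order optimality condition for $x^*$ to minimise $V$ on the convex set $\Dconv$. Because by hypothesis $V$ admits a unique minimum, $S$ must reduce to the singleton containing that minimiser, so every trajectory of \eqref{PDS} converges to the unique minimiser of $V$ as $t \to +\infty$.

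The main obstacle I anticipate is that the variational inequality above is automatically a necessary condition but becomes sufficient for global minimality only when $V$ is convex (or, equivalently, when the constrained first-order condition has at most one solution). In the gradient-scheme setting this is usually treated as implicit in the phrase \emph{with a unique minimum}, but a fully rigorous version of the corollary should either state convexity of $V$ explicitly, or provide a separate argument ensuring that no other critical point of $V$ restricted to $\Dconv$ can also be a fixed point of \eqref{PDS}. Once this is dealt with, the rest is a clean combination of the Lyapunov argument from Theorem~\ref{global_conv} with the variational inequality from Proposition~\ref{fp_car}.
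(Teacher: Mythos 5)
Your proof is correct and follows exactly the route the paper intends: the paper gives no explicit proof, stating only that the corollary is ``a natural consequence of the two previous results,'' and your combination of Theorem \ref{global_conv} (convergence of every orbit to some $x^*\in S$) with the variational inequality from Proposition \ref{fp_car} is precisely that combination. Your caveat about the variational inequality being only a necessary condition for minimality unless $V$ is convex is a legitimate observation about a genuine looseness in the statement; the paper's own remark immediately following the corollary (that the result ``includes the cases of $V$ being convex or quasi-convex'') suggests the authors had such convexity assumptions in mind, but does not fully close the gap you identify, since a non-convex $V$ with a unique global minimizer may still have other critical points in $\interior(\Dconv)$ that belong to $S$ and could attract orbits.
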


We should note that the result above includes the cases of $V$ being convex or quasi-convex function since such functions have a unique minimum in every compact convex set (see \cite{Boyd_Vandenberghe,Mangasarian} for further details on generalizations of the concept of convexity and its consequences for optimization theory)

\subsection{Bargaining dynamics for a $\lambda$-independent penalty function}\label{SSBD1}
We now return to the analysis of the projected dynamical system \eqref{PGS} associated with the penalized optimization scheme \eqref{PRIMAL_time}. 

\begin{lemma}\label{uni_sol}
The discrete-time dynamical system defined implicitly by \eqref{PGS} is well defined, i.e. to every $x_0\in\Dcal$ corresponds exactly one forward trajectory $(x_n)_{n\in\N}$ such that for every $n\in\N$ the pair $(x_{n},x_{n-1})$ satisfies the recurrence \eqref{PGS}.
\end{lemma}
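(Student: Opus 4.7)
The claim reduces to a fixed-point problem: given $w := x(t) \in \Dcal$, I need to show there is exactly one $v := x(t+1) \in \Dcal$ solving the implicit recurrence \eqref{PGS}, equivalently exactly one fixed point in $\Dcal$ of the self-map $T_w(v) := (1-\alpha) w + \alpha \PiD[w - \nabla_v L_P(v, w)]$. The inclusion $T_w(\Dcal) \subseteq \Dcal$ is immediate, since $T_w(v)$ is a convex combination of $w\in\Dcal$ and a point in the image of $\PiD$, which already lies in $\Dcal$.

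For existence, I would invoke Brouwer's fixed point theorem on the compact convex set $\Dcal$. This requires continuity of $T_w$, which follows from (i) the $1$-Lipschitz property of the projection $\PiD$ onto a closed convex set, and (ii) continuity of the map $v \mapsto \nabla_v L_P(v,w)$. The latter is guaranteed by continuous differentiability of $\psi_\beta$ and $\Phi$, together with the $C^1$ dependence of the price functions $P_\beta$ on the beliefs $Q_\beta$ furnished by Lemma \ref{P_dif}.

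For uniqueness, I would run a contraction argument. Suppose $v_1,v_2\in\Dcal$ are both fixed by $T_w$. Subtracting the two fixed-point equations and using non-expansiveness of $\PiD$ yields $\|v_1 - v_2\| \le \alpha\,\|\nabla_v L_P(v_1,w) - \nabla_v L_P(v_2,w)\|$. Next I would show that $v\mapsto \nabla_v L_P(v,w)$ is Lipschitz on $\Dcal$ with some constant $M$: the $\psi_\beta$ part contributes a Lipschitz gradient under the standing smoothness assumption on the distance functions (e.g.\ $C^{1,1}$), while the penalty part has gradient $\epsilon^{-1}\Phi'(P_B-P_A)(\nabla_v P_B - \nabla_v P_A)$, which is Lipschitz on the compact set $\Dcal$ because $\Phi'$ is continuous, $P_B-P_A$ ranges over a bounded interval by Lemma \ref{lemma_bounds}, and the gradients $\nabla P_\beta$ are themselves Lipschitz by Lemma \ref{DP_Lip}. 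Combining these bounds gives $\|v_1 - v_2\| \le \alpha M \|v_1 - v_2\|$, which forces $v_1 = v_2$ whenever $\alpha M < 1$.

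The main obstacle I foresee is reconciling the contraction condition $\alpha M < 1$ with the blanket hypothesis $0 < \alpha < 1$; since $M$ scales like $1/\epsilon$ through the penalty term, some implicit restriction on the product $\alpha/\epsilon$ is really being used here. If this gap is pressing, a cleaner alternative is to observe that a fixed point of $T_w$ is exactly a solution of the variational inequality $\langle \alpha^{-1}(v-w) + \nabla_v L_P(v,w),\,z-v\rangle \ge 0$ for all $z\in\Dcal$, which is the first-order optimality condition for minimizing $v\mapsto L_P(v,w) + (2\alpha)^{-1}\|v-w\|^2$ over $\Dcal$; for $\alpha$ small enough the strongly convex quadratic regularizer dominates the non-convex penalty term $\epsilon^{-1}\Phi(P_B - P_A)$, delivering a unique minimizer and hence a unique fixed point.
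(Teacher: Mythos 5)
Your proposal reaches the same reduction as the paper --- fix $w=x(t)$ and show that the one-step equation has exactly one solution in $\Dcal$ --- but from there the two arguments diverge completely. The paper says nothing about existence and proves uniqueness by substituting the explicit affine formula \eqref{proj_op} for $\PiD$ into the difference of the two hypothetical fixed-point equations, obtaining a linear system for the components of $\nabla_1 L_P(y^1,x)-\nabla_1 L_P(y^2,x)$ whose solution forces $D_{A_i}-D_{A_K}=\alpha^{-1}(y^2_{A_i}-y^1_{A_i})$, and then arguing that the left-hand side cannot depend on $\alpha$; this uses no Lipschitz hypotheses and is asserted for every $\alpha\in(0,1)$. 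You instead obtain existence from Brouwer's theorem (a point the paper's proof simply omits, so this is a genuine addition) and uniqueness from non-expansiveness of $\PiD$ plus a Lipschitz bound on $v\mapsto\nabla_v L_P(v,w)$, i.e.\ a contraction estimate. That route is sound, and the Lipschitz input you need is exactly item (i) of lemma \ref{Lip_lemma} (which does not rely on the present lemma, so there is no circularity), but it costs two things the statement does not grant: the extra regularity hypotheses of lemma \ref{DP_Lip} together with Lipschitz continuity of $\psi_\beta'$ and $\Phi'$, and the restriction $\alpha M<1$ with $M$ of order $1/\epsilon$. So as written you prove a weaker lemma, valid only for $\alpha$ below an $\epsilon$-dependent threshold. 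Within the economy of the paper this is harmless --- lemma \ref{Lip_lemma}(ii) and theorems \ref{thm_conv} and \ref{thm_conv_lambda} already confine $\alpha$ to $(0,\min\{1,\alpha_0\})$ with $\alpha_0$ depending on the Lipschitz constants --- but it is not the statement as claimed, and you are right to flag the discrepancy rather than paper over it.

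One small correction to your fallback argument: because \eqref{PGS} is the relaxed iteration $v=(1-\alpha)w+\alpha\PiD[\,\cdot\,]$ rather than $v=\PiD[w-\alpha\nabla_v L_P(v,w)]$, the projected point is $p=w+\alpha^{-1}(v-w)$, which differs from $v$ unless $v=w$; the variational inequality characterizing the fixed point is therefore $\left\langle \nabla_v L_P(v,w)+\alpha^{-1}(v-w),\,\eta-p\right\rangle\ge 0$ for all $\eta\in\Dcal$, with test direction $\eta-p$ rather than $\eta-v$. The prox-regularization idea still goes through after the change of variable $v=(1-\alpha)w+\alpha p$, but the functional being minimized over $p\in\Dcal$ is $\tfrac12\left\|p-w\right\|^2+\alpha^{-1}L_P((1-\alpha)w+\alpha p,w)$ rather than the one you wrote, and it again delivers uniqueness only for $\alpha$ small enough that the quadratic term dominates.
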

\begin{proof}
Consider the equation
\begin{equation}\label{PGS_2}
y= (1-\alpha)x +  \alpha\;\PiD\left[x - \nabla_1 L_P(y,x)\right] \ ,
\end{equation}
where $\nabla_1 L_P(y,x)$ denotes the gradient of $L_P(y,x)$ with respect to its first component $y\in\Dcal$.

Clearly, to prove the present lemma it is enough to show that for every $x\in\Dcal$, equation \eqref{PGS_2} has a unique solution $y\in\Dcal$. Suppose this is not the case, i.e. there exists $x\in\Dcal$ such that equality \eqref{PGS_2} has two distinct solutions $y^1,y^2\in\Dcal$. Using \eqref{PGS_2}, we obtain immediately that
\begin{equation*}\label{PGS_3}
y^1-y^2=   \alpha\left(\PiD\left[x - \nabla_1 L_P(y^1,x)\right]- \PiD\left[x - \nabla_1 L_P(y^2,x)\right] \right) \ .
\end{equation*}
Let us write $x=(x_A,x_B)$, where $x_A=(x_{A_1},\ldots,x_{A_K})\in\Delta^K$ and $x_B=(x_{B_1},\ldots,x_{B_K})\in\Delta^K$, and similarly for $y^1,y^2$. Combining the previous equality with the definition of the projection operator $\PiD:\R^{2K}\rightarrow\Dcal$ given in \eqref{proj_op}, we get that
\begin{equation}\label{PGS_4}
(K-1)D_{A_i} - \sum_{j\ne i}^K D_{A_j} = \frac{K}{\alpha}(y^2_{A_i}-y^1_{A_i}) \ , \qquad i=1,\ldots, K  \ ,
\end{equation}
where $D_{A_i}$ denotes the $A_i$-component of $\nabla_1 L_P(y^1,x)-\nabla_1 L_P(y^2,x)$. A similar set of conditions hold for the $B$-components of $\nabla_1 L_P(y^1,x)-\nabla_1 L_P(y^2,x)$.

We now note that \eqref{PGS_4} is a system of linear equations on $D_{A_i}$, $i\in\{1,\ldots,K\}$, with rank $K-1$ which admits a solution if and only if 
\begin{equation*}
\sum_{i=1}^K y^2_{A_i}-y^1_{A_i} = 0 \ .
\end{equation*}
Since $y^1_A$ and $y^1_A$ both belong to $\Delta^K$, the previous equality is automatically satisfied. Without loss of generality, we can solve \eqref{PGS_4} for $D_{A_1},\ldots,D_{A_{K-1}}$ with respect to $y^1$, $y^2$ and $D_{A_{K}}$ obtaining 
\begin{equation*}
D_{A_i} = \frac{2}{\alpha}(y^2_{A_i}-y^1_{A_i}) + \frac{1}{\alpha}\sum_{j\ne i}(y^2_{A_j}-y^1_{A_j}) + D_{A_K} \ , \qquad i\in\{1,\ldots,K-1\} \ .
\end{equation*}
Recalling that $y^1_A,y^2_A\in\Delta^K$, we can rewrite the previous set of equalities as
\begin{equation*}
D_{A_i} - D_{A_K} = \frac{1}{\alpha}(y^2_{A_i}-y^1_{A_i})  \ , \qquad i\in\{1,\ldots,K-1\} \ ,
\end{equation*}
which contradicts the fact that $\nabla_1 L_P(y^1,x)-\nabla_1 L_P(y^2,x)$ does not depend on $\alpha$. Thus, we must have that $y^1=y^2$, concluding the proof.
\end{proof}

\begin{lemma}\label{Lip_lemma}
Let $\psi_A$, $\psi_B$ and $\Phi$ be strictly convex functions with Lipschitz continuous derivatives. Assume that the conditions of lemma \ref{DP_Lip} hold. Then:
\begin{itemize}
\item[(i)] the functional $L_P$  (see \eqref{LPDEF}) has Lipschitz continuous partial derivatives;
\item[(ii)] there exists $\alpha_0>0$ such that for every $0<\alpha<\min\{1,\alpha_0\}$ the map $F:\Dcal\rightarrow\Dcal$ implicitly defined by
\begin{equation}\label{PDS_5}
F(x) = (1-\alpha)x + \alpha \PiD [x - \nabla_{1} L_P(F(x),x)]  
\end{equation}
is Lipschitz continuous.
\end{itemize}
\end{lemma}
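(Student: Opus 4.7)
The plan for (i) is to compute the partial derivatives of $L_P$ explicitly via the chain rule and verify each factor is Lipschitz. From the definition \eqref{LPDEF} one gets
\[
\nabla_{Q_A} L_P = \lambda\,\nabla_{Q_A}\psi_A(Q_A,Q_A^0) - \frac{1}{\epsilon}\,\Phi'\!\left(P_B(r_B,Q_B)-P_A(r_A,Q_A)\right)\nabla_{Q_A} P_A(r_A,Q_A),
\]
and an analogous expression for $\nabla_{Q_B} L_P$. Under the hypotheses, each factor is Lipschitz: $\nabla\psi_A$ and $\nabla\psi_B$ by assumption; $\Phi'$ by assumption; the reservation prices $P_A, P_B$ are continuously differentiable on the compact set $R_\beta\times\Delta^K$ (where $R_\beta$ is bounded, by Lemma~\ref{lemma_bounds}) and therefore bounded and Lipschitz by Lemma~\ref{P_dif}; and the gradients $\nabla P_A, \nabla P_B$ are Lipschitz by Lemma~\ref{DP_Lip}. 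Since products and compositions of bounded Lipschitz functions remain Lipschitz, I conclude that $\nabla_1 L_P$ and $\nabla_2 L_P$ are jointly Lipschitz on $\Dcal\times\Dcal$; let $M$ denote a common Lipschitz constant.

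For (ii), my plan is a contraction-plus-perturbation argument. Set $\alpha_0:=1/M$ and fix any $0<\alpha<\min\{1,\alpha_0\}$. For each $x\in\Dcal$ define the auxiliary map $T_x(y)=(1-\alpha)x+\alpha\,\PiD[x-\nabla_1 L_P(y,x)]$. Using the non-expansiveness of the projection onto a convex set and the bound from (i),
\[
\|T_x(y)-T_x(y')\|\le \alpha\,\|\nabla_1 L_P(y,x)-\nabla_1 L_P(y',x)\|\le \alpha M\,\|y-y'\|,
\]
so $T_x$ is a strict contraction and its unique fixed point is $F(x)$ (consistent with Lemma~\ref{uni_sol}). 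To obtain Lipschitz continuity of $F$, I take $x_1,x_2\in\Dcal$, subtract the defining relations $F(x_i)=(1-\alpha)x_i+\alpha\,\PiD[x_i-\nabla_1 L_P(F(x_i),x_i)]$, and apply the triangle inequality, non-expansiveness of $\PiD$, and the joint Lipschitz bound to get
\begin{align*}
\|F(x_1)-F(x_2)\| &\le (1-\alpha)\|x_1-x_2\|+\alpha\|x_1-x_2\| \\
&\quad {}+\alpha\,\|\nabla_1 L_P(F(x_1),x_1)-\nabla_1 L_P(F(x_2),x_2)\| \\
&\le \|x_1-x_2\|+\alpha M\bigl(\|F(x_1)-F(x_2)\|+\|x_1-x_2\|\bigr).
\end{align*}
Rearranging yields
\[
\|F(x_1)-F(x_2)\|\le \frac{1+\alpha M}{1-\alpha M}\,\|x_1-x_2\|,
\]
valid for $\alpha<\alpha_0=1/M$, which is the desired conclusion.

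The main obstacle I anticipate is the bookkeeping needed to produce a single uniform Lipschitz constant $M$ for $\nabla_1 L_P$ on the whole of $\Dcal\times\Dcal$: this rests on combining the Lipschitz regularity of $\nabla P_A,\nabla P_B$ (Lemma~\ref{DP_Lip}), the uniform boundedness of $P_A,P_B$ and their gradients on the compact product of $R_\beta$ with $\Delta^K$ (which in turn uses the boundedness of $R_\beta$ from Lemma~\ref{lemma_bounds}), and the Lipschitz continuity of $\Phi'$ through the composition with $P_B-P_A$. Once that uniform constant is in hand, both the contraction estimate and the final Lipschitz bound on $F$ are routine.
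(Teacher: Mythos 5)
Your proposal is correct and follows essentially the same route as the paper: part (i) by combining the Lipschitz assumptions on $\psi_\beta'$, $\Phi'$ with Lemma \ref{DP_Lip}, and part (ii) by subtracting the two defining relations, using non-expansiveness of $\PiD$ and the triangle inequality, and absorbing the $\alpha L\|F(x^1)-F(x^2)\|$ term to arrive at the same bound $\frac{1+\alpha L}{1-\alpha L}\|x^1-x^2\|$ for $\alpha<\min\{1,L^{-1}\}$. The contraction-mapping remark about $T_x$ is a harmless addition (the paper instead invokes Lemma \ref{uni_sol} for well-posedness of $F$).
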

\begin{proof}
Item (i) follows from the assumptions regarding Lipschitz continuity of the derivatives of $\psi_A$, $\psi_B$ and $\Phi$, together with lemma \ref{DP_Lip}.

We will now prove item (ii). By lemma \ref{uni_sol}, the map $F$ determined by \eqref{PDS_5} is well defined. Let $x^1,x^2\in\Dcal$ be arbitrary and let $\left\|\cdot\right\|$ denote the Euclidean norm. To avoid making the notation cumbersome, we do not distinguish between Euclidean norms in spaces of different dimensions, since this should be clear to the reader.

Using \eqref{PDS_5} we obtain the estimate
\begin{eqnarray*}
\lefteqn{\left\|F(x^2)-F(x^1)\right\| \le (1-\alpha)\left\|x^2-x^1\right\| } \nonumber \\
&&+ \alpha \left\| \PiD [x^2 - \nabla_{1} L_P(F(x^2),x^2)] - \PiD [x^1 - \nabla_{1} L_P(F(x^1),x^1)] \right\| \nonumber 
\end{eqnarray*}
Since $\PiD$ is a contraction, we get 
\begin{eqnarray*}
\lefteqn{\left\|F(x^2)-F(x^1)\right\| \le (1-\alpha)\left\|x^2-x^1\right\| } \nonumber \\
&&+ \alpha \left\| (x^2-x^1) - (\nabla_{1} L_P(F(x^2),x^2)  - \nabla_{1} L_P(F(x^1),x^1)) \right\| \nonumber 
\end{eqnarray*}
By the triangle inequality, we arrive at
\begin{equation*}
\left\|F(x^2)-F(x^1)\right\| \le \left\|x^2-x^1\right\| + \alpha \left\|\nabla_{1} L_P(F(x^2),x^2)  - \nabla_{1} L_P(F(x^1),x^1)\right\| \nonumber 
\end{equation*}
Using item (i) of the present lemma, we get that there exists $L>0$ such that 
\begin{equation*}
\left\|F(x^2)-F(x^1)\right\| \le (1+\alpha L)\left\|x^2-x^1\right\|  + \alpha L \left\|F(x^2)-F(x^1)\right\|  
\end{equation*}
Thus, for every $\alpha< \min\{1,L^{-1}\}$ we have that 
\begin{equation*}
\left\|F(x^2)-F(x^1)\right\| \le \frac{1+\alpha L}{1-\alpha L}\left\|x^2-x^1\right\| \ ,
\end{equation*}
concluding the proof.
\end{proof}

The next result follows from the stability theory for discrete-time projected dynamical systems developed above.
\begin{theorem}\label{thm_conv}
Assume Lipschitz continuity of the functions $U_{\beta}', \psi_{\beta}', \Phi'$, $\beta=A,B$.
Let the agents initial beliefs $Q_{\beta}(0)\in\Delta^K$ be such that $P(r_A,Q_A(0))>P(r_B,Q_B(0))$ for levels of risk $r_{\beta} \in R_{\beta}\subseteq\R_0^+$ and fixed wealths $w_{\beta}\in\R$, $\beta=A,B$. 

Then, for every $\epsilon>0$ there exists $\alpha_0:=\alpha_0(\epsilon)>0$ such that for every $0<\alpha<\min\{1,\alpha_0\}$ every forward orbit of the projected gradient scheme \eqref{PGS} converges to some fixed point  $(Q_A^*,Q_B^*)\in\mathrm{int}(\Dcal)$. Moreover, every fixed point $(Q_A^*,Q_B^*)\in\mathrm{int}(\Dcal)$ is such that 
\begin{equation*}
P_B(r_B,Q_B^*) - P_A(r_A,Q_A^*) = 0 
\end{equation*}
and $(Q_A^*,Q_B^*,Q_A^*,Q_B^*)\in(\Dcal)^2$ is a minimizer of the functional $L_P$ (see \eqref{LPDEF}).
\end{theorem}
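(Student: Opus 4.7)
The plan is to reduce Theorem \ref{thm_conv} to the general convergence result in Theorem \ref{global_conv} applied to the specific projected gradient scheme \eqref{PGS}, and then to characterize the fixed points using Proposition \ref{fp_car} together with the structure of $L_P$.

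First, I would identify \eqref{PGS} with the abstract scheme \eqref{PDS} by setting $\Dconv = \Dcal$, $\beta=1$ and $G(x,y) = \nabla_{1} L_P(x,y)$, where the first argument refers to the ``future'' variable. The well-definedness of the implicit recursion, i.e., the existence of the map $F$ appearing in \eqref{PDS2}, is exactly Lemma \ref{uni_sol}. Next, to invoke Theorem \ref{global_conv} I must verify that $G_D(x) := G(F(x),x) = \nabla_1 L_P(F(x),x)$ is Lipschitz continuous on $\Dcal$. The assumed Lipschitz continuity of $U_\beta'$, $\psi_\beta'$ and $\Phi'$, combined with Lemma \ref{DP_Lip}, gives Lipschitz continuity of $\nabla_1 L_P$ by Lemma \ref{Lip_lemma}(i); Lemma \ref{Lip_lemma}(ii) then yields Lipschitz continuity of $F$ for $\alpha$ sufficiently small, and $G_D$ is the composition of these two Lipschitz maps. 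Denoting by $L(\epsilon)$ the resulting Lipschitz constant (which depends on $\epsilon$ through the penalty term), Theorem \ref{global_conv} gives $\alpha_0(\epsilon) = (1 + L(\epsilon)^2/2)^{-1}$ such that every forward orbit converges to a fixed point of \eqref{PGS} provided $0 < \alpha < \min\{1,\alpha_0(\epsilon)\}$.

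Second, I would characterize the fixed points. By Proposition \ref{fp_car}(ii), any fixed point $x^* = (Q_A^*, Q_B^*) \in \mathrm{int}(\Dcal)$ satisfies $\nabla_1 L_P(x^*,x^*) = 0$. Unpacking this using \eqref{LPDEF} and the fact that $\psi_\beta(\cdot,x_\beta)$ attains its (strict) minimum value $0$ at $x_\beta$, so that $\nabla_{y_\beta}\psi_\beta(x_\beta,x_\beta) = 0$, the condition $\nabla_1 L_P(x^*,x^*)=0$ reduces to
\begin{equation*}
\tfrac{1}{\epsilon}\Phi'(P_B(r_B,Q_B^*) - P_A(r_A,Q_A^*))\,\nabla_{Q_\beta} P_\beta(r_\beta,Q_\beta^*) = 0, \qquad \beta = A,B.
\end{equation*}
By Lemma \ref{P_dif}, the partial derivatives of $P_\beta$ with respect to the components of $Q_\beta$ are all non-zero (each involves $U_\beta$ evaluated at a non-degenerate argument, with $U_\beta'>0$ appearing only in the denominator). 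Strict convexity of $\Phi$ with unique minimum at $0$ forces $\Phi'(s)=0$ only at $s=0$, hence $P_B(r_B,Q_B^*) = P_A(r_A,Q_A^*)$, as claimed. Evaluating $L_P$ at $(Q_A^*,Q_B^*,Q_A^*,Q_B^*)$ gives $\lambda\psi_A + (1-\lambda)\psi_B + \tfrac{1}{\epsilon}\Phi(0) = \tfrac{1}{\epsilon}\Phi(0)$, which is the global minimum of $L_P$ since $\psi_A,\psi_B \ge 0$ and $\Phi(s) \ge \Phi(0)$; the triple $(Q_A^*,Q_B^*,Q_A^*,Q_B^*)$ is therefore a minimizer of $L_P$ on $\Dcal^2$.

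The main obstacle, and the step that will require the most care, is showing that the limit fixed point $(Q_A^*,Q_B^*)$ actually lies in $\mathrm{int}(\Dcal)$ rather than on the boundary of the simplex $\Delta^K\times\Delta^K$. Here I would exploit the Inada conditions: on the boundary of $\Delta^K$ the strict convexity of $\psi_\beta$ (one may take for example a relative entropy-type $\psi_\beta$) combined with $\psi_\beta(Q_\beta,Q_\beta^0)\to +\infty$ as $Q_\beta\to\partial\Delta^K$ would force the gradient flow, projected onto $\Dcal$, to repel trajectories from the boundary. Alternatively, one can argue directly from the unconstrained minimization problem \eqref{PRIMAL_time}: since the initial beliefs satisfy $P_A(r_A,Q_A^0)>P_B(r_B,Q_B^0)$ and by the proposition in Section \ref{SSPOP} the minimizer of \eqref{PRIMAL_time} exists, is unique, and converges as $\epsilon\to 0$ to the solution of \eqref{PRIMAL}, which in turn lies in $\mathrm{int}(\Dcal)$ by Theorem \ref{opt_existence}. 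For $\epsilon$ small enough this minimizer is therefore in the interior, and by the Lyapunov argument of Theorem \ref{global_conv} every orbit must converge to it; the Lyapunov function $V(x,y)=\|y - \PiD[y - \nabla_1 L_P(x,y)]\|^2$ decreases strictly along orbits, so the limit point coincides with the unique interior critical point of $L_P(\cdot,\cdot)$ on the diagonal, completing the proof.
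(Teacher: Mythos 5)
Your proposal follows essentially the same route as the paper's proof: convergence is obtained by feeding Lemmas \ref{uni_sol} and \ref{Lip_lemma} into Theorem \ref{global_conv}, and the fixed-point characterization comes from Proposition \ref{fp_car} combined with the non-vanishing of the gradient vectors $\nabla_{Q_\beta}P_\beta$ (Lemma \ref{P_dif}, using that $F$ is non-constant) and the fact that $\Phi'$ vanishes only at $0$. The interiority issue you flag in your final paragraph is a genuine subtlety, but the paper's own proof passes over it silently, so your (admittedly heuristic) discussion of it goes beyond rather than falls short of the published argument.
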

\begin{proof}
Since the agents initial beliefs $Q_A(0),Q_B(0)\in\Delta^K$ are such that 
\begin{equation*}
P(r_A,Q_A(0))>P(r_B,Q_B(0)) \ ,
\end{equation*}
then equality \eqref{dif_ine} holds and by lemma \ref{lemma_bounds} (see remark (iii) following the lemma) we must have that the agents levels of risk are such that $r_A\in R_A$ and $r_B\in R_B$, where $R_A$ and $R_B$ are bounded above. Recalling that $U_A$ and $U_B$ satisfy the usual Inada conditions with the extra requirement that $U_A$ and $U_B$ have Lipschitz continuous derivatives, then lemma \ref{DP_Lip} guarantees that the reservation price functions $P_A$ and $P_B$ have Lipschitz continuous partial derivatives on a open neighborhood of $\R_0^+\times\Delta^K$. Finally, noticing that $\psi_A$, $\psi_B$ and $\Phi$ are also assumed to be strictly convex functions with Lipschitz continuous derivatives, we obtain that the first part of the statement follows from lemmas \ref{uni_sol} and \ref{Lip_lemma}, and theorem \ref{global_conv}.

For the second part of the statement, note that if $x^*=(Q_A^*,Q_B^*)\in\mathrm{int}(\Dcal)$ is a fixed point of the projected gradient scheme \eqref{PGS}, then $\nabla_{1} L_P(x^*,x^*)=0$ by proposition \ref{fp_car}. From the definition of the functional $L_P$ in \eqref{PGS}, we get that $(Q_A^*,Q_B^*)\in\mathrm{int}(\Dcal)$ must satisfy the equalities
\begin{eqnarray*}
\Phi'\left(P_B(r_B,Q_B^*) - P_A(r_A,Q_A^*)\right)\nabla_{Q_A} P_A (r_A,Q_A^*) &=& 0 \nonumber \\
\Phi'\left(P_B(r_B,Q_B^*) - P_A(r_A,Q_A^*)\right)\nabla_{Q_B} P_B (r_B,Q_B^*) &=& 0 \ ,
\end{eqnarray*}
where $\nabla_{Q_A} P_A$ and $\nabla_{Q_B} P_B$ denote, respectively, the gradient vectors of $P_A$ and $P_B$ with respect to $Q_A$ and $Q_B$. Since the contingent claim $F$ is assumed to be non-constant, we obtain from lemma \ref{P_dif} that such gradient vectors must be non-zero and the equalities above reduce to the single equality $P_B(r_B,Q_B^*) - P_A(r_A,Q_A^*)=0$. Since we assume that $Q_A(0),Q_B(0)\in\Delta^K$ are such that $P(r_A,Q_A(0))>P(r_B,Q_B(0))$ then inequality \eqref{dif_ine} must hold and thus, by lemma \ref{lemma_bounds}, the equality above has a solution $(Q_A^*,Q_B^*)\in\Dcal$. We conclude the proof by noting that under the assumptions on the functions $\Phi$, $\psi_A$ and $\psi_B$, the functional $L_P$ is bounded below by zero and that every point $(Q_A^*,Q_B^*,Q_A^*,Q_B^*)\in(\Dcal)^2$ such that $P_B(r_B,Q_B^*) - P_A(r_A,Q_A^*)=0$ realizes the (global) minimum of $L_P$. 
\end{proof}

Concerning the previous theorem, we remark that under the assumptions of theorem \ref{thm_conv}, the asymptotic beliefs $(Q_A^*,Q_B^*)\in \Dcal$ determining the price $P^*=P_B(r_B,Q_B^*) = P_A(r_A,Q_A^*)$ at which the contingent claim $F$ will eventually be traded depend very strongly on the the agents initial beliefs $(Q_A(0),Q_B(0))\in \Dcal$ and rather weakly on the remaining model parameters. This undesirable behavior is due to the form of the  penalty function.

\subsection{Bargaining dynamics for a $\lambda$-dependent penalty function}\label{SSBD2} We will now introduce a discrete-time projected gradient system which preserves all the desired properties of \eqref{PGS} described above, but has several interesting additional properties. Our motivation for the introduction of yet another projected gradient system is based on the observation that the fixed points $(Q_A^*,Q_B^*)\in\Dcal$ of \eqref{PGS} yielding a unique price for the contingent claim $F$ depend rather weakly on the relative bargaining power of the two agents $\lambda$ and remaining parameters, and that such dependence becomes weaker with smaller (positive) values of $\epsilon$. Clearly, such behavior is due to the fact that the penalty term does not depend explicitly on $\lambda$. The bargaining dynamics described below are based on a a penalty term depending on both $\lambda$ and the previous period prices.

Keeping the notation $x(t)= (Q_A(t),Q_B(t)) \in \Dcal$ we
define the functional $L_P^\lambda:(\Dcal)^2\rightarrow\R$  by
\begin{eqnarray}\label{L_pen_lambda}
\lefteqn{L_P^\lambda(x(t+1),x(t)) := \lambda \psi_A(Q_A(t+1),Q_A(t)) + (1-\lambda) \psi_B(Q_B(t+1),Q_B(t)) }\nonumber\\
&& + \frac{1}{\epsilon}\Phi\left[P_B(r_B,Q_B(t+1)) - \lambda P_A(r_A,Q_A(t))-(1-\lambda)P_B(r_B,Q_B(t))\right]  \\
&& + \frac{1}{\epsilon}\Phi\left[\lambda P_A(r_A,Q_A(t))+(1-\lambda)P_B(r_B,Q_B(t))-P_A(r_A,Q_A(t+1))\right]  \nonumber \ ,
\end{eqnarray}
where $\Phi:\R\rightarrow\R$ is a strictly convex continuously differentiable function with a (global) minimum $\Phi(0)$ and $\epsilon$ is a small parameter measuring the size of the penalty terms.

As in the case of \eqref{PGS},  we define a projected dynamical system on $\Dcal$ by 
\begin{equation}\label{PGS_lambda}
x(t+1)= x(t) + \alpha \left(\PiD\left[x(t) - \nabla_{x(t+1)} L^\lambda_P(x(t+1),x(t))\right] -x(t)\right)\ ,
\end{equation}
where $\nabla_{x(t+1)} L_P^\lambda(x(t+1),x(t))\in\R^{2K}$ denotes the gradient of $L_P^\lambda$ with respect to the variable $x(t+1)$ and $\PiD:\R^{2K}\rightarrow \Dcal$ is the projection operator onto $\Dcal$ defined in \eqref{proj_op}.

\begin{lemma} For every $\lambda \in [0,1]$  it holds that $L_P(x(t+1),x(t)) \le L_P^\lambda(x(t+1),x(t))$
and thus, a global minimum $(x^*,x^*)\in\Dcal$ of $L_P^\lambda$ is also a global minimum of $L_P$.
\end{lemma}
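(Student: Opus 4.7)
The plan is to subtract off the common terms and reduce everything to a single scalar inequality on $\Phi$. Because $\lambda\psi_A(Q_A(t+1),Q_A(t)) + (1-\lambda)\psi_B(Q_B(t+1),Q_B(t))$ appears identically in \eqref{LPDEF} and \eqref{L_pen_lambda}, the claim is equivalent to a comparison of the penalty contributions. Introducing the intermediate price $M := \lambda P_A(r_A,Q_A(t)) + (1-\lambda)P_B(r_B,Q_B(t))$ and setting
\begin{equation*}
u := P_B(r_B,Q_B(t+1)) - M, \qquad v := M - P_A(r_A,Q_A(t+1)),
\end{equation*}
one has $u+v = P_B(r_B,Q_B(t+1)) - P_A(r_A,Q_A(t+1))$, so the lemma collapses to the scalar estimate $\Phi(u+v) \le \Phi(u) + \Phi(v)$, to be extracted from the hypotheses on $\Phi$ (strict convexity, continuous differentiability, global minimum $\Phi(0)$, which we may WLOG normalize to $0$ so that $\Phi\ge 0$).

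I expect this subadditivity step to be the main obstacle. For a purely polynomial penalty such as $\Phi(x)=x^2$ the estimate fails whenever $u$ and $v$ share a sign, so the proof must either appeal to a structural property of the admissible $\Phi$ (for instance a smoothed $L^1$-type penalty, or $\Phi$ with sublinear growth on the relevant range) or exploit some sign compatibility between $u$ and $v$ arising from the bargaining setup, where the intermediate price $M$ sits strictly between the seller's and the buyer's updated reservation prices.

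For the ``and thus'' clause, the plan is the following short argument. Both functionals are bounded below (by $0$ after normalization) and attain their infimum on the set
\begin{equation*}
\mathcal{M} := \{(x^*,x^*)\in\Dcal^2 \;:\; P_A(r_A,Q_A^*) = P_B(r_B,Q_B^*)\},
\end{equation*}
which is non-empty by Lemma \ref{lemma_bounds} under the standing assumption that the initial reservation prices are ordered. On $\mathcal{M}$ the property $\psi_\beta(x,x)=0$ kills the distance terms, and the identity $M = P_A = P_B$ makes every argument of $\Phi$ vanish, so $L_P$ and $L_P^\lambda$ take the same minimum value. Combined with the pointwise bound $L_P \le L_P^\lambda$, any global minimizer $(x^*,x^*)$ of $L_P^\lambda$ satisfies
\begin{equation*}
\min L_P \;\le\; L_P(x^*,x^*) \;\le\; L_P^\lambda(x^*,x^*) \;=\; \min L_P^\lambda \;=\; \min L_P,
\end{equation*}
forcing $(x^*,x^*)$ to realize $\min L_P$ as well, which completes the plan.
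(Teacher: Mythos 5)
Your reduction is precisely the paper's: the $\psi$-terms are common to \eqref{LPDEF} and \eqref{L_pen_lambda}, and with $M$, $u$, $v$ as you define them the lemma is equivalent to the scalar inequality $\Phi(u+v)\le\Phi(u)+\Phi(v)$. The gap is that you stop there. You correctly observe that for $\Phi(x)=x^2$ this fails whenever $u$ and $v$ share a sign, and you name two possible escapes without establishing either. The paper takes your second escape: it asserts that whenever the reservation prices actually change between $t$ and $t+1$, the quantities $u$ and $v$ have \emph{opposite} signs; then $u+v$ lies in the interval with endpoints $u$ and $v$, convexity gives $\Phi(u+v)\le\max\{\Phi(u),\Phi(v)\}$, and $\Phi\ge\Phi(0)=0$ upgrades the maximum to the sum. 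A structural restriction on $\Phi$ is not an option here, since the paper explicitly admits $\Phi(x)=x^2$; what your proposal is missing is exactly the sign statement.

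You should also know that your scepticism about that step is justified: $u$ and $v$ share a sign precisely when $M=\lambda P_A(r_A,Q_A(t))+(1-\lambda)P_B(r_B,Q_B(t))$ falls strictly between the two updated reservation prices, and the paper offers no argument ruling this out. For instance $P_A(t)=2$, $P_B(t)=1$, $\lambda=1/2$, $P_A(t+1)=1.8$, $P_B(t+1)=1.2$ gives $u=v=-0.3$, and with $\Phi(x)=x^2$ one finds $\Phi(u+v)=0.36>0.18=\Phi(u)+\Phi(v)$. So the step you could not fill is asserted rather than proved in the paper, and closing it seems to require an extra hypothesis (e.g.\ that along orbits $M$ never enters the open interval bounded by the updated prices). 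By contrast, your handling of the ``and thus'' clause is complete and in fact more careful than the paper's one-line deduction: the pointwise bound $L_P\le L_P^\lambda$ alone does not transfer global minimizers, and your observation that both functionals vanish on the nonempty set where $P_A=P_B$ (granting $\Phi(0)=0$, which the paper also needs) is exactly what makes the sandwich argument work.
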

\begin{proof}
If $P_A(r_A,Q_A(t+1)) \ne P_A(r_A,Q_A(t))$ or $P_B(r_B,Q_B(t+1)) \ne P_B(r_B,Q_B(t))$, the quantities
\begin{equation*}
P_B(r_B,Q_B(t+1)) - \lambda P_A(r_A,Q_A(t))-(1-\lambda)P_B(r_B,Q_B(t))
\end{equation*}
and
\begin{equation*}
\lambda P_A(r_A,Q_A(t))+(1-\lambda)P_B(r_B,Q_B(t))-P_A(r_A,Q_A(t+1))
\end{equation*}
have opposite signs for every $0\le \lambda\le 1$. Since $\Phi$ is strictly convex with a minimum $\Phi(0)$, we obtain that
\begin{eqnarray*}
\lefteqn{\Phi\left[P_B(r_B,Q_B(t+1)) - P_A(r_A,Q_A(t+1))\right] \le }\nonumber\\
&& \frac{1}{\epsilon}\Phi\left[P_B(r_B,Q_B(t+1)) - \lambda P_A(r_A,Q_A(t))-(1-\lambda)P_B(r_B,Q_B(t))\right]  \\
&& + \frac{1}{\epsilon}\Phi\left[\lambda P_A(r_A,Q_A(t))+(1-\lambda)P_B(r_B,Q_B(t))-P_A(r_A,Q_A(t+1))\right]  \nonumber \ .
\end{eqnarray*}
Therefore, we obtain that
\begin{equation*}
L_P(x(t+1),x(t)) \le L_P^\lambda(x(t+1),x(t))
\end{equation*}
and thus, a global minimum $(x^*,x^*)\in\Dcal$ of $L_P^\lambda$ is also a global minimum of $L_P$.
\end{proof}

The next result is the analogue of lemmas \eqref{uni_sol} and \eqref{Lip_lemma} for the dynamical system \eqref{PGS_lambda}. We skip its proof, since this is identical to those of lemmas \eqref{uni_sol} and \eqref{Lip_lemma}.

\begin{lemma}\label{uni_Lip_sol_lambda} Let $U_{\beta}',\psi_{\beta}',\Phi'$ be Lipschitz continuous and assume that $r_{\beta} \in R_{\beta}$ where $R_{\beta}$ is a bounded subset of $\R_{0}^{+}$. Then,
\begin{itemize}
\item[(i)] The discrete-time dynamical system defined implicitly by \eqref{PGS_lambda} is well posed\footnote{ i.e. to every $x_0\in\Dcal$ corresponds a unique forward trajectory $(x_n)_{n\in\N}$ such that for every $n\in\N$ the pair $(x_{n},x_{n-1})$ satisfies the recurrence \eqref{PGS_lambda}.}.
\item[(ii)] the functional $L_P^\lambda$ has Lipschitz continuous partial derivatives;
\item[(iii)] there exists $\alpha_0>0$ such that for every $0<\alpha<\min\{1,\alpha_0\}$ the map $F:\Dcal\rightarrow\Dcal$ implicitly defined by
$F(x) := (1-\alpha)x + \alpha \PiD [x - \nabla_{1} L_P^\lambda(F(x),x)]$
is Lipschitz continuous.
\end{itemize}
\end{lemma}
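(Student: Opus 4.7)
The plan is to mirror the proofs of Lemmas \ref{uni_sol} and \ref{Lip_lemma}, exploiting the structural observation that $L_P^\lambda$ differs from $L_P$ only by replacing the single penalty $\Phi(P_B-P_A)$ by two penalty terms, each a composition of $\Phi$ with an affine expression in the new reservation prices $P_A(r_A,Q_A(t+1))$ and $P_B(r_B,Q_B(t+1))$ (with $\pm 1$ coefficients) together with the $\lambda$-convex combination $\lambda P_A(r_A,Q_A(t))+(1-\lambda)P_B(r_B,Q_B(t))$ of the previous-period prices. The latter combination depends only on the second argument of $L_P^\lambda$, so $\nabla_{1}L_P^\lambda(y,x)$ has no explicit dependence on $\alpha$, which is the only structural feature exploited by the uniqueness argument of Lemma \ref{uni_sol}.

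For (i), I would fix $x\in\Dcal$, assume the equation $y=(1-\alpha)x+\alpha\PiD[x-\nabla_{1}L_P^\lambda(y,x)]$ has two solutions $y^{1},y^{2}\in\Dcal$, subtract, and apply the componentwise formula \eqref{proj_op} for $\PiD$. Setting $D:=\nabla_{1}L_P^\lambda(y^{1},x)-\nabla_{1}L_P^\lambda(y^{2},x)$, this reproduces verbatim the linear system of Lemma \ref{uni_sol}, namely $(K-1)D_{A_i}-\sum_{j\ne i}D_{A_j}=(K/\alpha)(y^{2}_{A_i}-y^{1}_{A_i})$ together with its $B$-analogue. The compatibility condition $\sum_i(y^{2}_{A_i}-y^{1}_{A_i})=0$ is automatic since $y^{1}_A,y^{2}_A\in\Delta^K$; solving for $D_{A_1},\ldots,D_{A_{K-1}}$ in terms of $D_{A_K}$ gives $D_{A_i}-D_{A_K}=(1/\alpha)(y^{2}_{A_i}-y^{1}_{A_i})$, which contradicts the $\alpha$-independence of $D$ unless $y^{1}=y^{2}$.

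For (ii), I would compute $\nabla_{1}L_P^\lambda$ by the chain rule: the $\psi_\beta$ pieces contribute Lipschitz gradients by hypothesis, and each $\Phi$-term factors as $\Phi'$ (Lipschitz) applied to an affine combination of the prices, multiplied by $\nabla_{Q_\beta}P_\beta$. By Lemma \ref{DP_Lip}, the standing assumptions that $r_\beta\in R_\beta$ is bounded and that $U_\beta'$ is Lipschitz imply that both $P_\beta$ and their partial derivatives are Lipschitz on $R_\beta\times\Delta^K$. Since every factor is Lipschitz and uniformly bounded on the compact phase space $\Dcal$, products and compositions remain Lipschitz, and hence $\nabla_{1}L_P^\lambda$ (and symmetrically $\nabla_{2}L_P^\lambda$) is Lipschitz.

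For (iii), with $L$ the Lipschitz constant of $\nabla_{1}L_P^\lambda$ from (ii), I would apply $F$ at arbitrary $x^{1},x^{2}\in\Dcal$, use that $\PiD$ is a contraction, and invoke the triangle inequality exactly as in Lemma \ref{Lip_lemma}(ii) to obtain
\[
\|F(x^{2})-F(x^{1})\|\le(1+\alpha L)\|x^{2}-x^{1}\|+\alpha L\,\|F(x^{2})-F(x^{1})\|,
\]
which rearranges to $\|F(x^{2})-F(x^{1})\|\le\tfrac{1+\alpha L}{1-\alpha L}\|x^{2}-x^{1}\|$ for every $\alpha<\min\{1,L^{-1}\}$; one may then take $\alpha_{0}:=L^{-1}$. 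The only point deserving more than a routine transcription of Lemmas \ref{uni_sol} and \ref{Lip_lemma} is the verification highlighted in the first paragraph, that the coefficients inside the two $\Phi$-arguments of \eqref{L_pen_lambda} are $\alpha$-independent constants; this is immediate from the definition \eqref{L_pen_lambda} itself, so I do not expect a genuine obstacle.
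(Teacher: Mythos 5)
Your proposal is correct and follows essentially the same route as the paper, which in fact omits the proof entirely with the remark that it is identical to those of Lemmas \ref{uni_sol} and \ref{Lip_lemma}; your transcription of those two arguments, together with the explicit check that the $\lambda$-weighted combination of previous-period prices sits only in the second argument of $L_P^\lambda$ (so that $\nabla_1 L_P^\lambda$ retains the $\alpha$-independent structure and the Lipschitz estimates carry over via Lemma \ref{DP_Lip}), is exactly the verification the authors intend the reader to perform.
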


The following result provides conditions under which the dynamical system \eqref{PGS_lambda} converges to a fixed point yielding: (i) a unique price for the contingent claim to be traded, and (ii) a global minimum of $L_P^\lambda$ and $L_P$.

\begin{theorem}\label{thm_conv_lambda}  Assume Lipschitz continuity of the functions $U_{\beta}', \psi_{\beta}', \Phi'$, $\beta=A,B$.
Let the agents initial beliefs $Q_{\beta}(0)\in\Delta^K$ be such that $P(r_A,Q_A(0))>P(r_B,Q_B(0))$ for levels of risk $r_{\beta} \in R_{\beta}\subseteq\R_0^+$ and fixed wealths $w_{\beta}\in\R$, $\beta=A,B$.

Then, for every $\epsilon>0$ there exists $\alpha_0:=\alpha_0(\epsilon)>0$ such that for every $0<\alpha<\min\{1,\alpha_0\}$ every forward orbit of the projected gradient scheme \eqref{PGS_lambda} converges to some fixed point  $(Q_A^*,Q_B^*)\in\mathrm{int}(\Dcal)$. Moreover, every fixed point $(Q_A^*,Q_B^*)\in\mathrm{int}(\Dcal)$ is such that 
\begin{equation*}
P_B(r_B,Q_B^*) - P_A(r_A,Q_A^*) = 0 
\end{equation*}
and $(Q_A^*,Q_B^*,Q_A^*,Q_B^*)\in(\Dcal)^2$ is a minimizer of both $L_P$ (given in \eqref{LPDEF}) and $L_P^\lambda$ (given in \eqref{L_pen_lambda}).
\end{theorem}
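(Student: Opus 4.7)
The plan is to mirror closely the structure of the proof of theorem \ref{thm_conv}, substituting $L_P^\lambda$ for $L_P$ throughout and invoking lemma \ref{uni_Lip_sol_lambda} in place of lemmas \ref{uni_sol} and \ref{Lip_lemma}. First, from $P_A(r_A,Q_A(0)) > P_B(r_B,Q_B(0))$, inequality \eqref{dif_ine} and the remarks following lemma \ref{lemma_bounds} imply that $r_A \in R_A$ and $r_B \in R_B$ lie in bounded sets; together with the Lipschitz hypothesis on $U_\beta'$, lemma \ref{DP_Lip} gives Lipschitz continuity of the partial derivatives of $P_\beta$. Feeding this into lemma \ref{uni_Lip_sol_lambda}(ii)--(iii), the implicit scheme \eqref{PGS_lambda} is well-posed and the map $F$ (and hence $G_{\Dcal}(x) := \nabla_1 L_P^\lambda(F(x),x)$) is Lipschitz for every $\alpha$ smaller than some $\alpha_0$, with $\alpha_0$ depending on $\epsilon$ since the penalty coefficient $1/\epsilon$ enters the gradient's Lipschitz constant. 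Theorem \ref{global_conv} then yields convergence of every forward orbit to some fixed point $x^*=(Q_A^*,Q_B^*)\in\Dcal$.

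To characterize such an interior fixed point, I would appeal to proposition \ref{fp_car}(ii) to obtain $\nabla_1 L_P^\lambda(x^*,x^*)=0$, and evaluate the gradient explicitly on the diagonal. The key algebraic observation is that, when $x(t+1)=x(t)=x^*$, the two penalty arguments collapse to $\lambda(P_B^*-P_A^*)$ and $(1-\lambda)(P_B^*-P_A^*)$ respectively, where $P_\beta^*:=P_\beta(r_\beta,Q_\beta^*)$, while the $\psi_\beta$ contributions vanish because $\psi_\beta(\cdot,Q_\beta^*)$ is minimized at $Q_\beta^*$. Splitting the vanishing-gradient condition into its $Q_A$- and $Q_B$-blocks gives
\begin{align*}
\Phi'\bigl[(1-\lambda)(P_B^*-P_A^*)\bigr]\,\nabla_{Q_A} P_A(r_A,Q_A^*) &= 0, \\
\Phi'\bigl[\lambda(P_B^*-P_A^*)\bigr]\,\nabla_{Q_B} P_B(r_B,Q_B^*) &= 0.
\end{align*}
Since $F$ is non-constant, lemma \ref{P_dif} ensures the two gradient vectors are non-zero, and strict convexity of $\Phi$ together with $\Phi'(0)=0$ (the first-order condition for the global minimum of $\Phi$ at $0$) forces the scalar coefficients to vanish, yielding $P_A^*=P_B^*$ for $\lambda\in(0,1)$; the edge cases $\lambda\in\{0,1\}$ follow from whichever of the two equations stays non-degenerate.

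Finally, the minimization claim is a short verification. At such a fixed point both penalty terms equal $\Phi(0)$ and both distance terms $\psi_\beta(Q_\beta^*,Q_\beta^*)$ vanish, so $L_P^\lambda(x^*,x^*)=2\Phi(0)/\epsilon$, which is the global infimum of $L_P^\lambda$ over $\Dcal\times\Dcal$ because $\Phi\ge\Phi(0)$ and $\psi_\beta\ge 0$. Applying the preceding lemma, which establishes $L_P\le L_P^\lambda$ pointwise together with the fact that a global minimizer of the majorant is a global minimizer of the minorant, the same point is a global minimizer of $L_P$ as well. The delicate step I expect to be the main obstacle is the tracking of $\alpha_0(\epsilon)$: because the Lipschitz constant of $\nabla_1 L_P^\lambda$ scales like $1/\epsilon$, one obtains $\alpha_0(\epsilon)=O(\epsilon^2)$, which is harmless for the pointwise-in-$\epsilon$ statement of the theorem but would prevent a uniform choice of step size, and this is precisely the price paid for the richer dependence of the $\lambda$-penalty on the previous-period beliefs.
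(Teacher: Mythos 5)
Your proposal is correct and follows essentially the same route as the paper's proof: well-posedness and Lipschitz continuity via lemma \ref{uni_Lip_sol_lambda} feeding into theorem \ref{global_conv} for convergence, then proposition \ref{fp_car} and the explicit diagonal gradient computation yielding exactly the paper's two equations $\Phi'\bigl[(1-\lambda)(P_B^*-P_A^*)\bigr]\nabla_{Q_A}P_A=0$ and $\Phi'\bigl[\lambda(P_B^*-P_A^*)\bigr]\nabla_{Q_B}P_B=0$, hence $P_A^*=P_B^*$. Your added observations (the vanishing of the $\nabla\psi_\beta$ terms on the diagonal, the explicit value $2\Phi(0)/\epsilon$ as the global infimum, and the $\alpha_0(\epsilon)=O(\epsilon^2)$ scaling) are refinements the paper leaves implicit rather than a genuinely different argument.
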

\begin{proof}
The proof is similar to that of theorem \ref{thm_conv}, so we skip most details. 

If $x^*=(Q_A^*,Q_B^*)\in\mathrm{int}(\Dcal)$ is a fixed point of the projected gradient scheme \eqref{PGS_lambda}, then $\nabla_{1} L_P^\lambda(x^*,x^*)=0$ by proposition \ref{fp_car}. From the definition of the functional $L_P^\lambda$ in \eqref{PGS_lambda}, we get that $(Q_A^*,Q_B^*)\in\mathrm{int}(\Dcal)$ must satisfy the equalities
\begin{eqnarray*}
\Phi'\left(\lambda P_A(r_A,Q_A^*)+(1-\lambda)P_B(r_B,Q_B^*) - P_A(r_A,Q_A^*)\right)\nabla_{Q_A} P_A (r_A,Q_A^*) &=& 0 \nonumber \\
\Phi'\left(P_B(r_B,Q_B^*) - \lambda P_A(r_A,Q_A^*)-(1-\lambda)P_B(r_B,Q_B^*)\right)\nabla_{Q_B} P_B (r_B,Q_B^*) &=& 0 \ ,
\end{eqnarray*}
where $\nabla_{Q_A} P_A$ and $\nabla_{Q_B} P_B$ denote, respectively, the gradient vectors of $P_A$ and $P_B$ with respect to $Q_A$ and $Q_B$. Since the contingent claim $F$ is assumed to be non-constant, we obtain from lemma \ref{P_dif} that such gradient vectors must be non-zero and the equalities above reduce to 
\begin{eqnarray*}
P_A(r_A,Q_A^*) = \lambda P_A(r_A,Q_A^*)+(1-\lambda)P_B(r_B,Q_B^*) = P_B(r_B,Q_B^*) \ .
\end{eqnarray*}
From the set of equalities above, we conclude that $(Q_A^*,Q_B^*,Q_A^*,Q_B^*)\in(\Dcal)^2$ is a minimizer of both functionals $L_P$ and $L_P^\lambda$.
\end{proof}

We will assume that the conditions of theorem \ref{thm_conv_lambda} are satisfied for the remaining of this section. Let $\Delta^+$ denote the set
\begin{equation*}
\Delta^+=\left\{(Q_A^0,Q_B^0)\in\Dcal: P_A(r_A,Q_A^0)\ge P_B(r_B,Q_B^0)\right\} \ ,
\end{equation*} 
and let $\Ecal$ denote the set
\begin{equation*}
\Ecal=\Delta^+\times R_A\times R_B\times[0,1]
\end{equation*}
where $R_A$ and $R_B$ are the bounded sets of lemma \ref{lemma_bounds} (see remark (iii) following this lemma). We define the \emph{asymptotic price} function $P^*:\Ecal\rightarrow\R$ by
\begin{equation}\label{PstarA}
P^*(Q_A^0,Q_B^0,r_A,r_B,\lambda) = \lim_{t\rightarrow \infty} P_A(r_A,Q_A(t)) = P_A(r_A,Q_A^*) \ ,
\end{equation}
where $\{Q_A(t),Q_B(t)\}_{t\in\N}$ is the forward orbit of \eqref{PGS_lambda} with initial condition $(Q_A^0,Q_B^0)\in\Delta^+$ and fixed parameters $r_A$, $r_B$ and $\lambda$. Theorem \ref{thm_conv_lambda} ensures that $P^*$ is well defined by \eqref{PstarA}. Moreover, theorem \ref{thm_conv_lambda} provides the following  alternative representation for $P^*$, which will turn out to be useful below
\begin{equation}\label{PstarB}
P^*(Q_A^0,Q_B^0,r_A,r_B,\lambda) = \lim_{t\rightarrow \infty} P_B(r_B,Q_B(t)) = P_B(r_B,Q_B^*) \ .
\end{equation}

The three results below describe the dependence of the asymptotic price function $P^*$ with respect to the agents initial beliefs $(Q_A^0,Q_B^0)\in\Delta^+$, levels of risk $r_A\in R_A$, $r_B\in R_B$ and relative bargaining power $\lambda\in[0,1]$.

\begin{corollary}
Under the assumptions of theorem \ref{thm_conv_lambda}, the asymptotic price function $P^*$ is continuous on $\Ecal$.
\end{corollary}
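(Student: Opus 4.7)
The plan is to deduce continuity of $P^*$ from continuity of the limit point $x^*(\theta)=(Q_A^*(\theta),Q_B^*(\theta))$ as a function of the data $\theta=(Q_A^0,Q_B^0,r_A,r_B,\lambda)\in\Ecal$, combined with continuity of the reservation price functions from lemma \ref{P_dif}. First, I would note that $\Ecal$ is compact: $\Delta^+$ is closed by continuity of $P_A,P_B$, while $R_A,R_B$ are bounded by remark (iii) following lemma \ref{lemma_bounds}. Let $F_\theta:\Dcal\to\Dcal$ denote the one-step map implicitly defined by \eqref{PGS_lambda}. Combining lemma \ref{DP_Lip} with the fixed-point argument of lemma \ref{uni_Lip_sol_lambda}, the map $F_\theta(x)$ depends jointly Lipschitz continuously on $(x,\theta)\in\Dcal\times\Ecal$, so that for each fixed $T\in\N$ the iterate $\theta\mapsto F_\theta^T(Q_A^0,Q_B^0)$ is continuous on $\Ecal$.

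Theorem \ref{thm_conv_lambda} gives that this iterate converges, as $T\to\infty$, to some $x^*(\theta)$ with $P_A(r_A,Q_A^*(\theta))=P_B(r_B,Q_B^*(\theta))$. The crux of the argument is to upgrade this pointwise convergence to uniform convergence in $\theta$ on $\Ecal$. Revisiting the Lyapunov estimate in the proof of theorem \ref{global_conv}, the Lipschitz constant $L$ of the map $x\mapsto G(F_\theta(x),x)$ can be bounded uniformly in $\theta\in\Ecal$ using the joint Lipschitz estimates above and the compactness of $\Ecal$. Consequently, the decrement $\Delta V(n)\le (2+\beta^2L^2-2/\alpha)\|x_n-x_{n-1}\|^2$ holds with constants independent of $\theta$, yielding a tail estimate on $\sum_{n\ge T}\|x_n(\theta)-x_{n-1}(\theta)\|^2$ that is uniform in $\theta$. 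This produces the desired uniformity and identifies $\theta\mapsto x^*(\theta)$ as a uniform limit of continuous functions, hence continuous. Continuity of $P^*$ then follows immediately from the representation $P^*(\theta)=P_A(r_A,Q_A^*(\theta))$ and lemma \ref{P_dif}.

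The main obstacle is precisely the passage from pointwise to uniform convergence, made delicate by the observation that the set of fixed points of $F_\theta$ is the entire codimension-one manifold $\{P_A=P_B\}\subset\Dcal$ rather than an isolated attractor; one therefore cannot rely on a standard hyperbolic contraction estimate at the limit to pin down the specific point selected by the dynamics. Should the Lyapunov-based uniformity argument turn out to be insufficient on its own, an alternative route would be to characterize $x^*(\theta)$ as the unique solution of a parameter-dependent convex optimization problem in the spirit of \eqref{PRIMAL} and theorem \ref{opt_existence}, and then invoke Berge's maximum theorem, as was already exploited in the proof of the proposition of section \ref{SSPOP}.
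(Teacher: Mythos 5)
Your proposal is correct in substance and, in fact, more careful than the paper's own argument, which is a one-sentence citation: the authors invoke theorem \ref{thm_conv_lambda}, the representations \eqref{PstarA}--\eqref{PstarB}, continuity of $P_A,P_B$ in $(r_\beta,Q_\beta)$, and continuity of the one-step map $F$ from lemma \ref{uni_Lip_sol_lambda}(iii) --- and stop there. That sketch silently passes from continuity of each finite iterate $\theta\mapsto F_\theta^T(Q_A^0,Q_B^0)$ to continuity of the $T\to\infty$ limit, which is exactly the interchange-of-limits issue you isolate. Your resolution is the right one and it does close the gap: the Lyapunov decrement $\Delta V(n)\le(2+\beta^2L^2-2/\alpha)\|x_n-x_{n-1}\|^2$ together with $V(x_{n+1},x_n)=\alpha^{-2}\|x_{n+1}-x_n\|^2$ gives geometric decay of the increments, $\|x_{n+1}-x_n\|^2\le q\,\|x_n-x_{n-1}\|^2$ with $q=1-2\alpha+\alpha^2(2+\beta^2L^2)<1$, and once $L$ is bounded uniformly over $\Ecal$ this yields a uniform Cauchy (hence uniform convergence) estimate. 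Two small imprecisions to fix when writing this up: the quantity you need to control $\|x^*(\theta)-x_T(\theta)\|$ is the tail of $\sum_n\|x_n-x_{n-1}\|$, not of the squares --- harmless here because the geometric decay makes both summable --- and joint continuity of $(x,\theta)\mapsto F_\theta(x)$ does not literally follow from lemma \ref{uni_Lip_sol_lambda}, which only asserts Lipschitz continuity in $x$ at fixed parameters, so a sentence justifying continuous dependence of the implicitly defined map on $(Q_A^0,Q_B^0,r_A,r_B,\lambda)$ is needed (also, $R_A,R_B$ are only bounded, not closed, so ``compact'' should be ``bounded, with the Lipschitz bounds of lemma \ref{DP_Lip} uniform there''). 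Your concern about the fixed-point set being the codimension-one manifold $\{P_A=P_B\}$ rather than an isolated attractor is well placed but does not derail the argument, since the uniform Cauchy estimate identifies the selected limit point without any hyperbolicity at the limit; and your fallback via a parametric optimization characterization plus Berge's maximum theorem is precisely the device the paper uses elsewhere (theorem \ref{opt_existence} and the proposition of section \ref{SSPOP}), so it is a legitimate alternative route, though one would still have to show that the dynamical limit coincides with the optimizer.
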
 
\begin{proof}
Follows from theorem \ref{thm_conv_lambda}, the asymptotic price representations $\eqref{PstarA}$ and $\eqref{PstarB}$, continuity of the agents reservation price functions $P_A$ and $P_B$ with respect to $(r_A,Q_A)\in R_A\times\Delta^K$ and $(r_B,Q_B)\in R_B\times\Delta^K$, respectively, and continuity of the map $F$ of item (iii) in lemma \ref{uni_Lip_sol_lambda}.
\end{proof}

\begin{corollary}
Assume that the conditions of theorem \ref{thm_conv_lambda} hold. Fix the agents initial beliefs $(Q_A^0,Q_B^0)\in\Delta^+$ and levels of risk $r_A\in R_A$ and $r_B\in R_B$. Let $P^*(\lambda)$ denote the asymptotic price function as a function of the agents relative bargaining power $\lambda$. The following statements hold:
\begin{itemize}
\item[(i)] $P^*(\lambda)$ is an increasing function of the relative bargaining power of the two agents $\lambda$;
\item[(ii)] if $\lambda=0$ then $Q_B(t)=Q_B(t+1)$ for every $t\in\N$ and $P^*(\lambda) = P_B(r_B,Q_B(0))$;
\item[(iii)] if $\lambda=1$ then $Q_A(t)=Q_A(t+1)$ for every $t\in\N$ and $P^*(\lambda) = P_A(r_A,Q_A(0))$.
\end{itemize}
\end{corollary}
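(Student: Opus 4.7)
Begin with (ii). Plug $\lambda=0$ into the functional \eqref{L_pen_lambda}, so that $L_{P}^{0}(y,x)$ contains only the terms
$\psi_{B}(y_{B},x_{B})$, $\tfrac{1}{\epsilon}\Phi\!\left[P_{B}(r_{B},y_{B})-P_{B}(r_{B},x_{B})\right]$ and $\tfrac{1}{\epsilon}\Phi\!\left[P_{B}(r_{B},x_{B})-P_{A}(r_{A},y_{A})\right]$. Observe that the $B$--block of $\nabla_{y}L_{P}^{0}(y,x)$ vanishes when $y_{B}=x_{B}$: the $\psi_{B}$--contribution is zero because $\psi_{B}(\,\cdot\,,x_{B})$ attains its global minimum at $x_{B}$ (recall $\psi_{B}(x,x)=0$ and $\psi_{B}\ge 0$, so the gradient vanishes there), while the $\Phi$--contribution vanishes because $\Phi$ has its global minimum at $0$, hence $\Phi'(0)=0$. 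Consequently $(y_{A},y_{B})$ with $y_{B}=x_{B}$ and $y_{A}$ equal to the $A$--component of $\PiD[x-\nabla_{y}L_{P}^{0}(y,x)]$ satisfies the implicit equation \eqref{PGS_lambda}; Lemma~\ref{uni_Lip_sol_lambda}(i) (well--posedness) then forces this to be the \emph{unique} solution, so $Q_{B}(t+1)=Q_{B}(t)$ for all $t$, i.e.\ $Q_{B}(t)\equiv Q_{B}(0)$. Theorem~\ref{thm_conv_lambda} then yields convergence to a fixed point $(Q_{A}^{\ast},Q_{B}(0))$ at which $P_{A}(r_{A},Q_{A}^{\ast})=P_{B}(r_{B},Q_{B}(0))$, proving (ii). Part (iii) is obtained by the symmetric argument with the roles of $A$ and $B$ interchanged at $\lambda=1$.

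For (i), the endpoint information from (ii), (iii), together with the standing hypothesis $P_{A}(r_{A},Q_{A}(0))>P_{B}(r_{B},Q_{B}(0))$, already gives $P^{\ast}(0)<P^{\ast}(1)$. To promote this to monotonicity on $[0,1]$, the plan is a coupling argument: fix $\lambda_{1}<\lambda_{2}$, start both dynamical systems \eqref{PGS_lambda} from the same $(Q_{A}(0),Q_{B}(0))$, and prove by induction on $t$ the joint statement
\[
P_{A}\!\bigl(r_{A},Q_{A}^{(\lambda_{2})}(t)\bigr)\ge P_{A}\!\bigl(r_{A},Q_{A}^{(\lambda_{1})}(t)\bigr),\qquad P_{B}\!\bigl(r_{B},Q_{B}^{(\lambda_{2})}(t)\bigr)\ge P_{B}\!\bigl(r_{B},Q_{B}^{(\lambda_{1})}(t)\bigr),
\]
together with the auxiliary ordering $P_{A}(r_{A},Q_{A}^{(\lambda)}(t))\ge P_{B}(r_{B},Q_{B}^{(\lambda)}(t))$ along each trajectory (this last invariance follows from the structure of the two penalty terms, which push $P_{A}(t+1)$ and $P_{B}(t+1)$ toward the common convex combination $T_{\lambda}(t):=\lambda P_{A}(r_{A},Q_{A}^{(\lambda)}(t))+(1-\lambda)P_{B}(r_{B},Q_{B}^{(\lambda)}(t))$ without overshoot). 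Given the inductive hypothesis, $T_{\lambda_{2}}(t)\ge T_{\lambda_{1}}(t)$ because both $\lambda\mapsto \lambda a+(1-\lambda)b$ is increasing in $\lambda$ when $a\ge b$ and each of $P_{A}^{(\lambda)}(t)$, $P_{B}^{(\lambda)}(t)$ is itself monotone in $\lambda$. Passing to $t\to\infty$ and invoking Theorem~\ref{thm_conv_lambda} together with the representations \eqref{PstarA}--\eqref{PstarB} yields $P^{\ast}(\lambda_{2})\ge P^{\ast}(\lambda_{1})$.

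The inductive step requires a comparative--statics lemma for the implicit map defined by \eqref{PGS_lambda}: holding $(Q_{A}(t),Q_{B}(t))$ fixed and viewing the target $T$ as the only parameter entering the penalty terms, the unique solution $(Q_{A}(t+1),Q_{B}(t+1))$ produces reservation prices that are monotone nondecreasing functions of $T$. The plan here is to differentiate the fixed--point equation implicitly (justified by Lemmas~\ref{DP_Lip} and \ref{uni_Lip_sol_lambda}), exploit the strict convexity of $\Phi$ (so $\Phi''>0$) and the sign information on $\nabla_{Q_{A}}P_{A}$ and $\nabla_{Q_{B}}P_{B}$ from Lemma~\ref{P_dif}, and check that the resulting linear system has a nonnegative solution in the relevant direction. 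This last comparative--statics step is the main technical obstacle, because the projection operator $\PiD$ is only piecewise linear and one must verify that, along the coupled trajectories, the sign of the derivative of the update with respect to $T$ is preserved through the projection; the explicit linear form of $\PiD$ given in \eqref{proj_op} makes this tractable, reducing the argument to a componentwise sign analysis.
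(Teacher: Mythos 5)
Your argument for parts (ii) and (iii) is correct and is essentially a careful expansion of what the paper compresses into the phrase ``the form of the penalty term'': at $\lambda=0$ the $B$-block of $\nabla_{x(t+1)}L_P^{\lambda}$ decouples from $y_A$ and vanishes at $y_B=x_B$ (since $\nabla\psi_B(x_B,x_B)=0$ and $\Phi'(0)=0$), so $(y_A^*,x_B)$ satisfies both block equations whenever $(y_A^*,y_B^*)$ does, and well-posedness forces $y_B^*=x_B$; the representation \eqref{PstarB} then gives $P^*(0)=P_B(r_B,Q_B(0))$. The paper's own proof of the corollary is a one-line appeal to theorem \ref{thm_conv_lambda} and the structure of \eqref{L_pen_lambda}, so here you are supplying detail the paper omits, and doing it correctly (modulo the mild implicit assumption that $\psi_\beta(\cdot,x)$ is differentiable with vanishing gradient at its minimum $x$, which holds for the quadratic example used in the paper).

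Part (i) is where there is a genuine gap. Your coupling induction propagates only the scalar orderings $P_A^{(\lambda_2)}(t)\ge P_A^{(\lambda_1)}(t)$ and $P_B^{(\lambda_2)}(t)\ge P_B^{(\lambda_1)}(t)$, but the update map \eqref{PGS_lambda} depends on the full belief vectors $(Q_A(t),Q_B(t))\in\Dcal$, not just on the two reservation prices: the $\psi_\beta$ terms, the gradients $\nabla_{Q_\beta}P_\beta$, and the base point of the projection $\PiD$ all vary with the beliefs. Your comparative-statics lemma is formulated for a \emph{common} current state with the target $T$ as the only varying parameter; that applies at $t=0$, but after one step the two trajectories occupy different points of $\Dcal$, and the inductive hypothesis gives you no control over how those belief vectors compare. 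Since for $K>1$ the level sets of $P_\beta$ on $\Delta^K$ are $(K-2)$-dimensional, ordering the prices does not come close to pinning down the states, so the induction does not close as stated. To repair it you would need either a stronger invariant (an ordering of the beliefs themselves that is preserved by the dynamics, or the restriction to $K=2$ where price level sets are points), or a different route entirely, e.g.\ a direct characterization of the selected fixed point as a function of $\lambda$ — note that the fixed-point \emph{set} $\{P_A=P_B\}$ identified in theorem \ref{thm_conv_lambda} is the same for all $\lambda$, so monotonicity of $P^*(\lambda)$ is irreducibly a statement about which point the trajectory selects. To be fair, the paper's own justification of (i) is no more than the same heuristic you start from (the penalty pulls both prices toward $\lambda P_A+(1-\lambda)P_B$, which is increasing in $\lambda$ on $\Delta^+$); your proposal is more ambitious but does not yet close the argument.
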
 
\begin{proof}
Follows from theorem \ref{thm_conv_lambda} and the form of the penalty term in \eqref{L_pen_lambda} and \eqref{PGS_lambda}.
\end{proof}

\begin{corollary}
Assume that the conditions of theorem \ref{thm_conv_lambda} hold. Fix the agents initial beliefs $(Q_A^0,Q_B^0)\in\Delta^+$ and relative bargaining power $\lambda$. Let $P^*(r_A,r_B)$ denote the asymptotic price function as a function of the agents levels of risk $r_A\in R_A$ and $r_B\in R_B$. The following statements hold:
\begin{itemize}
\item[(i)] for fixed values of $r_B$, $P^*(r_A,r_B)$ is a decreasing function of $r_A$;
\item[(ii)] for fixed values of $r_A$, $P^*(r_A,r_B)$ is a decreasing function of $r_B$.
\end{itemize}
\end{corollary}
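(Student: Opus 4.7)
The plan is to deduce both statements from Theorem \ref{thm_conv_lambda}, which characterises the asymptotic fixed point by the single equation $P_A(r_A,Q_A^*) = P_B(r_B,Q_B^*) = P^*$, combined with the explicit sign of the partial derivatives of the reservation prices with respect to the risk parameters given in Remark (ii) following Lemma \ref{P_dif}, namely $\partial P_A/\partial r_A<0$ and $\partial P_B/\partial r_B>0$. The continuity of $P^*$ established in the preceding corollary then promotes an infinitesimal comparison between nearby risk levels to a global monotonicity statement.

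For part (i), I would fix $r_B\in R_B$ and initial beliefs $(Q_A^0,Q_B^0)\in\Delta^+$, and compare the two orbits of \eqref{PGS_lambda} with risk levels $r_A^{(1)}<r_A^{(2)}$. Since $P_A(r_A^{(2)},Q_A)<P_A(r_A^{(1)},Q_A)$ pointwise in $Q_A$, the convex-combination target $\lambda P_A(r_A,Q_A(t))+(1-\lambda)P_B(r_B,Q_B(t))$ appearing inside $\Phi$ in the penalty terms of \eqref{L_pen_lambda} is strictly smaller, at every iteration $t$, along the orbit with $r_A^{(2)}$. The one-step update \eqref{PGS_lambda} is a gradient step for a functional that drives $P_A(r_A,Q_A(t+1))$ \emph{down} toward this target and $P_B(r_B,Q_B(t+1))$ \emph{up} toward it; shifting the target downward therefore produces a pair of new iterates whose reservation prices both lie strictly below those of the $r_A^{(1)}$-orbit. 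An induction on $t$ propagates this ordering along the full orbit, and passing to the limit together with strict convexity of $\Phi,\psi_A,\psi_B$ yields $P^*(r_A^{(2)},r_B)<P^*(r_A^{(1)},r_B)$. Part (ii) is handled by the symmetric argument, now using $\partial P_B/\partial r_B>0$ and tracking the corresponding shift of the convex-combination target as $r_B$ is varied.

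The main obstacle is making the inductive comparison rigorous in the presence of the implicit definition of the update and of the projection $\PiD$; in particular, one must exclude the possibility that the two orbits ``cross'' in the price coordinate before they reach their respective fixed points. Two routes appear viable. The cleanest is to differentiate implicitly the reduced fixed-point relation $P_A(r_A,Q_A^*(r_A,r_B))=P_B(r_B,Q_B^*(r_A,r_B))$ using the first-order conditions derived in the proof of Theorem \ref{thm_conv_lambda}, with the strict convexity of $\psi_A,\psi_B,\Phi$ ensuring invertibility of the relevant Jacobian; the sign of $\partial P^*/\partial r_A$ (respectively $\partial P^*/\partial r_B$) then follows from Lemma \ref{P_dif}. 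Alternatively, one can set up a monotone-coupling argument anchored on the Lyapunov function constructed in Theorem \ref{global_conv} to control the relative position of the two orbits along the iteration; this route avoids differentiability assumptions beyond those already imposed.
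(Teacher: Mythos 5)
Your overall strategy is the same as the paper's: the paper's entire proof is a one-line citation of theorem \ref{thm_conv_lambda}, the representations \eqref{PstarA}--\eqref{PstarB}, and the signs $\partial P_A/\partial r_A<0$, $\partial P_B/\partial r_B>0$ from lemma \ref{P_dif}. What you add --- correctly identifying that the fixed point $(Q_A^*,Q_B^*)$ itself moves with $(r_A,r_B)$, so the pointwise monotonicity of $P_\beta$ in $r_\beta$ does not by itself settle the sign of $P^*$ --- is precisely the step the paper glosses over, and your two proposed repairs (an orbit-comparison induction tracking the convex-combination target in \eqref{L_pen_lambda}, or implicit differentiation of the reduced fixed-point relation) are both reasonable ways to close it. Be aware, though, that each is only sketched: the induction needs an actual monotonicity property of the implicit projected update with respect to a pointwise shift of the target (plausible from strict convexity of $\Phi$, but not free), and the implicit-differentiation route requires differentiability of $(Q_A^*,Q_B^*)$ in $(r_A,r_B)$, whereas the paper only establishes continuity.

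The genuine problem is part (ii). Your ``symmetric argument'' with $\partial P_B/\partial r_B>0$ shifts the target $\lambda P_A(r_A,Q_A(t))+(1-\lambda)P_B(r_B,Q_B(t))$ \emph{upward} as $r_B$ increases, and hence --- by exactly the mechanism you use in part (i) --- yields $P^*$ \emph{increasing} in $r_B$, which is the opposite of what the corollary literally asserts. You cannot prove statement (ii) as written by this route. In fact the statement itself appears to carry a sign error: it contradicts the paper's own property 6) in section \ref{exp_ut} (``$P^*$ is an increasing function of $r_B$'') and the numerical evidence in Figs.\ \ref{PrB_rAeq0_fig} and \ref{PrB_rAeq02_fig}. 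You should have made the sign your argument actually produces explicit rather than appealing to symmetry; as it stands, your write-up silently claims to establish a monotonicity that your own mechanism refutes.
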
 
\begin{proof}
Follows from theorem \ref{thm_conv_lambda}, the asymptotic price representations $\eqref{PstarA}$ and $\eqref{PstarB}$ and lemma \ref{P_dif}.
\end{proof}

\section{The case of exponential utility functions}\label{exp_ut}

In this section we restrict our attention to the case where the two agents have exponential utility functions
\begin{equation}\label{EUF}
U_{\beta}(w) = \frac{1-\exp(-\lambda_{\beta} w)}{\lambda_{\beta}} \ , \qquad \beta=A,B,
\end{equation}
where $\lambda_{\beta}\in\R^+$ are the agents risk aversion coefficients. It is clear that this class of utility function satisfy the Inada conditions. Moreover, as $\lambda_{\beta}\rightarrow 0$ we recover the linear utility function which corresponds to risk neutral agents. From \eqref{res_pr_AB}  we obtain that the reservation prices for agent $\beta$ are given by
\begin{eqnarray}\label{EU_prices}
P_{\beta}(r_{\beta},Q_{\beta}) &=& \ell_{\beta} \frac{1}{\lambda_{\beta}}\ln\left(\frac{E_{Q_{\beta}}[\exp(\ell_{\beta}\lambda_{\beta} F)]}{1+\lambda_{\beta} r_{\beta}\exp(\lambda_{\beta} w_{\beta})}\right),\,\,\,\, \beta=A,B. 
\end{eqnarray}
Note that when we restrict our attention to this particular class of utility functions, the corresponding price functions $P_{\beta}$  are much more regular than what is claimed in lemmas \ref{P_dif} and \ref{DP_Lip}. In this particular setup the price functions are $C^\infty$ functions of $(r_{\beta},Q_{\beta})$ on an open neighborhood of $\R_0^+\times\Delta^K$. Its partial derivatives are given by 
\begin{eqnarray*}
\frac{\partial P_{\beta}}{\partial Q_{\beta}^k}(r_{\beta},Q_{\beta})  = \ell_{\beta} \frac{1}{\lambda_{\beta}}\left(\frac{\exp(\ell_{\beta} \lambda_{\beta} F[k])}{E_{Q_{\beta}}[\exp(\ell_{\beta} \lambda_{\beta} F)]} - \frac{\exp(\lambda_{\beta} w_{\beta})}{1+ \lambda_{\beta} r_{\beta}\exp(\lambda_{\beta} w_{\beta})}\right)\ ,  
\end{eqnarray*}
where $k=1,\ldots, K$, $\beta=A,B$  and 
\begin{eqnarray*}
\frac{\partial P_{\beta}}{\partial r_{\beta}}(r_{\beta},Q_{\beta})  = -\ell_{\beta} \frac{\exp(\lambda_{\beta} w_{\beta})}{1+ \lambda_{\beta}r_{\beta}\exp(\lambda_{\beta} w_{\beta})}, \,\,\, \beta=A,B.
\end{eqnarray*}
We remark that one must use lemma \ref{P_dif} to obtain properly defined partial derivatives of $P_A$ and $P_B$ on a neighborhood of $\R^+\times\Delta^K$. Since $\Delta^K$ is a closed subset of $\R^K$, differentiation of the explicit expression of the reservation prices $P_A$ and $P_B$ with respect to the agents beliefs $Q_A$ and $Q_B$, should either be avoided or an extension of $P_A$ and $P_B$ to a neighborhood of $\Delta^K$ should be carefully chosen to preserve relations \eqref{res_pr_AB}.

\subsection{Beliefs optimization for fixed levels of risk}
Due to the convexity properties of the reservation prices $P_A$ and $P_B$ stated in lemma \ref{P_wd},  theorem \ref{opt_existence} ensures the existence of a unique solution to the minimization problem \eqref{PRIMAL} with fixed levels of risk $r_A$ and $r_B$ when the agents make their decisions using the exponential utility functions $U_A$ and $U_B$ in \eqref{EUF}. Figure \ref{Minimization_fig} provides some geometric intuition for the solution of such minimization problem when the number of states of the world is $K=2$. Given a contingency claim $F$ on $\Omega$ and initial beliefs $Q_A^0,Q_B^0\in\Delta^K$, the tangency point between the minimum level set of $\lambda\psi_A(Q_A,Q_A^0)+(1-\lambda)\psi_B(Q_B,Q_B^0)$ and the manifold $P_A(r_A,Q_A)= P_B(r_B,Q_B)$ determines the solution $(Q_A^*,Q_B^*)\in\Dcal$ to the minimization problem \eqref{PRIMAL} for fixed levels of risk $r_A$ and $r_B$, wealths $w_A$ and $w_B$, and relative bargaining power $\lambda$. 

\begin{figure}[h!]
	\centering
      \psfrag{QA1}[cc][][0.75][0]{$Q_A^1$}%
      \psfrag{QB1}[cc][][0.75][0]{$Q_B^1$}%
      \psfrag{0}[cc][][0.75][0]{$0$}
      \psfrag{0.2}[cc][][0.75][0]{$0.2$}
      \psfrag{0.4}[cc][][0.75][0]{$0.4$}
      \psfrag{0.6}[cc][][0.75][0]{$0.6$}
      \psfrag{0.8}[cc][][0.75][0]{$0.8$}
      \psfrag{1}[cc][][0.75][0]{$1$}
      \psfrag{PAPB}[cc][][0.75][0]{$P_A=P_B$}
      \psfrag{QA0QB0}[cc][][0.75][0]{$(Q_A^0,Q_B^0)$}
      \psfrag{QAs}[cc][][0.75][0]{$Q_A^*$}
      \psfrag{QBs}[cc][][0.75][0]{$Q_B^*$}
      \psfrag{psiApsiBc0}[cb][][0.75][0]{$\lambda\psi_A+(1-\lambda)\psi_B = c_0$}
   		\includegraphics[width=80mm,angle=-90]{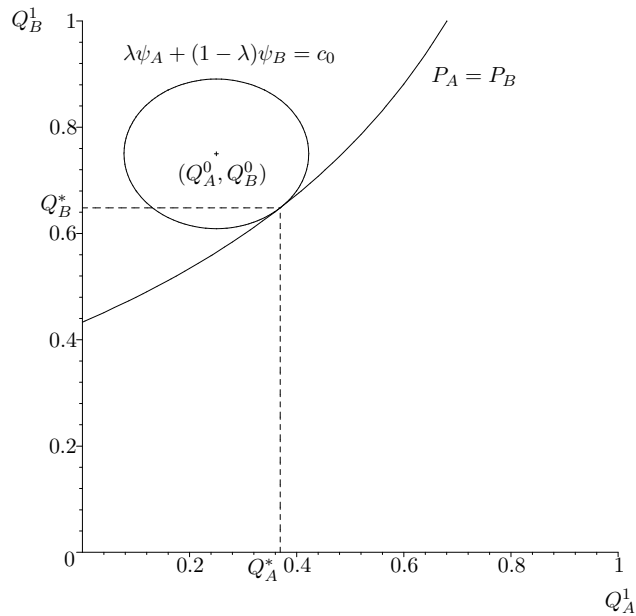}
   		\caption{The optimal solution for the minimization problem \eqref{PRIMAL}. The horizontal axis represents the first component of agent A beliefs' $Q_A=(Q_A^1,Q_A^2)\in\Delta^2$, while the vertical axis represents the first component of agent B beliefs' $Q_B=(Q_B^1,Q_B^2)\in\Delta^2$. The first components of the optimal solution are denoted as $Q_A^*$ and $Q_B^*$. This figure is based on a numerical simulation with the following values of parameters: $K=2$, $F=(1,2)$, $w_A=w_B=0$, $r_A=0.4$, and $r_B=0.3$. The strictly convex ``distance functions'' used are $\psi_A(Q_A,Q_A^0)=\left\|Q_A-Q_A^0\right\|^2/2$ and $\psi_B(Q_B,Q_B^0)=\left\|Q_B-Q_B^0\right\|^2/2$, and the relative bargaining power of the two agents is $\lambda=0.4$.  The agents utility functions are of the form \eqref{EUF} with $\lambda_A=2$ and $\lambda_B=1$. The initial agents beliefs are $Q_A^0=(0.25,0.75)$ and $Q_B^0=(0.75,0.25)$.}\label{Minimization_fig}
\end{figure}

For values of the initial beliefs $Q_A^0,Q_B^0\in\Delta^K$ on the set 
\begin{equation*}
\Delta^+=\left\{(Q_A^0,Q_B^0)\in\Dcal: P_A(r_A,Q_A^0)\ge P_B(r_B,Q_B^0)\right\} \ ,
\end{equation*} 
the minimum level set of the functional $\lambda\psi_A(Q_A,Q_A^0)+(1-\lambda)\psi_B(Q_B,Q_B^0)$ will be contained in $\Delta^+$ and thus, the optimal solution $(Q_A^*,Q_B^*)\in\Dcal$ contained on the manifold $P_A(r_A,Q_A)=P_B(r_A,Q_A)$ will depend continuously on the parameters $\lambda$, $r_A$ and $r_B$, as well as on the initial beliefs in $\Delta^+$.

The unique solution $(Q_A^*,Q_B^*)\in\Dcal$ determines then a unique price for the contingent claim. As seen in Theorem \ref{opt_existence}, such unique optimal price is also a continuous function of the parameters $\lambda$, $r_A$ and $r_B$. Figures \ref{Plambda_fig}, \ref{Prarb_fig3D} and \ref{Prarb_fig2} illustrate this behavior. A few remarks about these figures seem to be appropriate now. 

In what concerns the dependence of the optimal price $P^*$ on the relative bargaining power of the two agents $\lambda$ in Fig. \ref{Plambda_fig}, we have that $P^*$ is an increasing function of $\lambda$. This is consistent with the fact that as $\lambda$ increases, so does the bargaining power of the seller of the contingent claim, thus increasing the value of the optimal price $P^*$. Note also that $\lambda$ measures the ``eccentricity'' of the level sets of the functional $\lambda\psi_A(Q_A,Q_A^0)+(1-\lambda)\psi_B(Q_B,Q_B^0)$ in Fig. \ref{Minimization_fig}.

\begin{figure}[h!]
	\centering
      \psfrag{l}[cc][][0.75][0]{$\lambda$}%
      \psfrag{Ps}[cc][][0.75][0]{$P^*(\lambda)$}%
      \psfrag{0}[ct][][0.75][0]{$0$}
      \psfrag{0.2}[ct][][0.75][0]{$0.2$}
      \psfrag{0.4}[ct][][0.75][0]{$0.4$}
      \psfrag{0.6}[ct][][0.75][0]{$0.6$}
      \psfrag{0.8}[ct][][0.75][0]{$0.8$}
      \psfrag{1}[ct][][0.75][0]{$1$}
      \psfrag{1.45}[rc][][0.75][0]{$1.45$}
      \psfrag{1.5}[rc][][0.75][0]{$1.5$}
      \psfrag{1.55}[rc][][0.75][0]{$1.55$}
   		\includegraphics[width=60mm,angle=-90]{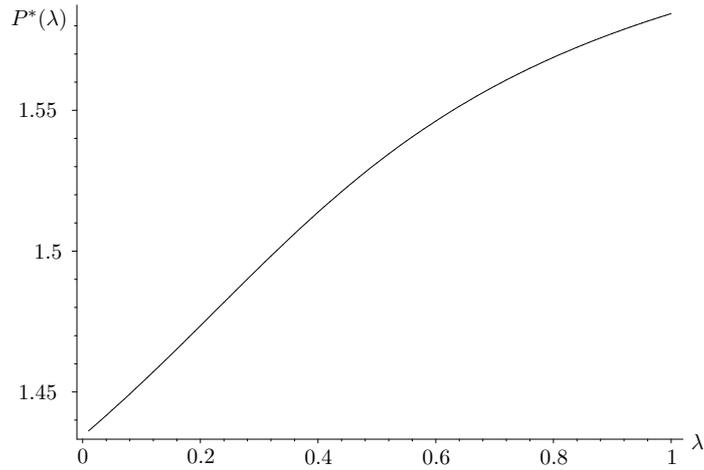}
   		\caption{The optimal price $P^*(\lambda)$ as a function of the relative bargaining power of the two agents $\lambda$, obtained from the minimization problem \eqref{PRIMAL}. The values of the remaining parameters are as in Fig. \ref{Minimization_fig}.}\label{Plambda_fig}
\end{figure}

In what concerns the dependence of the optimal price function $P^*$ on the agents levels of risk $r_A$ and $r_B$, we first recall that its domain of definition is bounded by lemma \ref{lemma_bounds} (see remark (iii) following this lemma), with the upper bounds on $r_A$ and $r_B$ determined by condition \eqref{dif_ine}. It is particularly interesting to note that $P^*$ is not a monotone function of $r_A$ for small values of $r_B$ (see Fig. \ref{Prarb_fig3D} and Fig. \ref{PrA_rBeq0_fig}). If agent B level of risk $r_B$ is sufficiently small, even though agent B bargaining power is larger than agent A bargaining power, agent A is able to increase the value of the optimal price $P^*$ by taking positive values of risk up to a given amount, from where the optimal price starts to decrease with agent A level of risk $r_A$. For larger values of $r_B$, $P^*$ is a strictly decreasing function of $r_A$ (see Figs. \ref{Prarb_fig3D} and  \ref{PrA_rBeq02_fig}) and $P^*$ is always an increasing function of $r_B$ for fixed values of $r_A$ (see Figs. \ref{Prarb_fig3D}, \ref{PrB_rAeq0_fig} and \ref{PrB_rAeq02_fig}).

\begin{figure}[h!]
	\centering
      \psfrag{rA}[ct][][0.75][0]{$r_A$}%
      \psfrag{rB}[ct][][0.75][0]{$r_B$}%
      \psfrag{Ps}[rc][][0.75][0]{$P^*(r_A,r_B)$}%
      \psfrag{0}[ct][][0.75][0]{$0$}
      \psfrag{0.1}[ct][][0.75][0]{$0.1$}
      \psfrag{0.2}[ct][][0.75][0]{$0.2$}
      \psfrag{0.3}[ct][][0.75][0]{$0.3$}
      \psfrag{0.4}[ct][][0.75][0]{$0.4$}
      \psfrag{1.4}[rc][][0.75][0]{$1.4$}
      \psfrag{1.5}[rc][][0.75][0]{$1.5$}
      \psfrag{1.6}[rc][][0.75][0]{$1.6$}
      \psfrag{1.7}[rc][][0.75][0]{$1.7$}
   		\includegraphics[width=60mm,angle=-90]{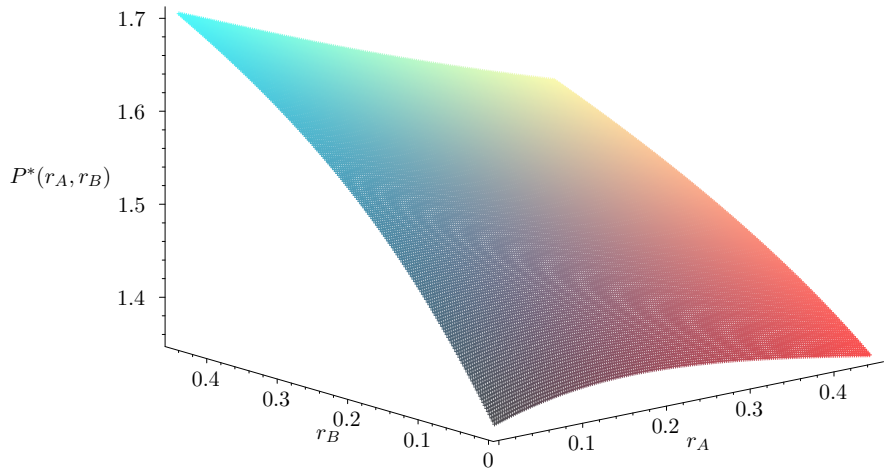}
   		\caption{The optimal price $P^*(r_A,r_B)$ as a function of the agents levels of risk $r_A$ and $r_B$, obtained from the minimization problem \eqref{PRIMAL}. The values of the remaining parameters are as in Fig. \ref{Minimization_fig}.}\label{Prarb_fig3D}
\end{figure}

We should remark that the typical situation is the one observed in Figs. \ref{PrA_rBeq02_fig}, \ref{PrB_rAeq0_fig} and \ref{PrB_rAeq02_fig}, i.e. if all other parameters are kept fixed, $P^*$ decreases with the level of risk of the seller of the contingent claim and increases with the level of risk of the buyer of the contingent claim. The behavior of Fig. \ref{PrA_rBeq0_fig} is observed only for a small set of suitably chosen parameters.

\begin{figure}[h!]
	\centering
      \psfrag{PsrA}[rc][][0.65][0]{$P^*(r_A)$}%
      \psfrag{PsrB}[rb][][0.65][0]{$P^*(r_B)$}%
      \psfrag{rA}[cc][][0.65][0]{$r_A$}%
      \psfrag{rB}[cc][][0.65][0]{$r_B$}%
      \psfrag{0}[ct][][0.65][0]{$0$}
      \psfrag{0.1}[ct][][0.65][0]{$0.1$}
      \psfrag{0.2}[ct][][0.65][0]{$0.2$}
      \psfrag{0.3}[ct][][0.65][0]{$0.3$}
      \psfrag{0.4}[ct][][0.65][0]{$0.4$}
      \psfrag{1.36}[rc][][0.65][0]{$1.36$}
      \psfrag{1.37}[rc][][0.65][0]{$1.37$}
      \psfrag{1.38}[rc][][0.65][0]{$1.38$}
      \psfrag{1.4}[rc][][0.65][0]{$1.4$}
      \psfrag{1.46}[rc][][0.65][0]{$1.46$}
      \psfrag{1.5}[rc][][0.65][0]{$1.5$}
      \psfrag{1.54}[rc][][0.65][0]{$1.54$}
      \psfrag{1.6}[rc][][0.65][0]{$1.6$}
      \psfrag{1.7}[rc][][0.65][0]{$1.7$}
   		\subfigure[][]{\includegraphics[width=40mm,angle=-90]{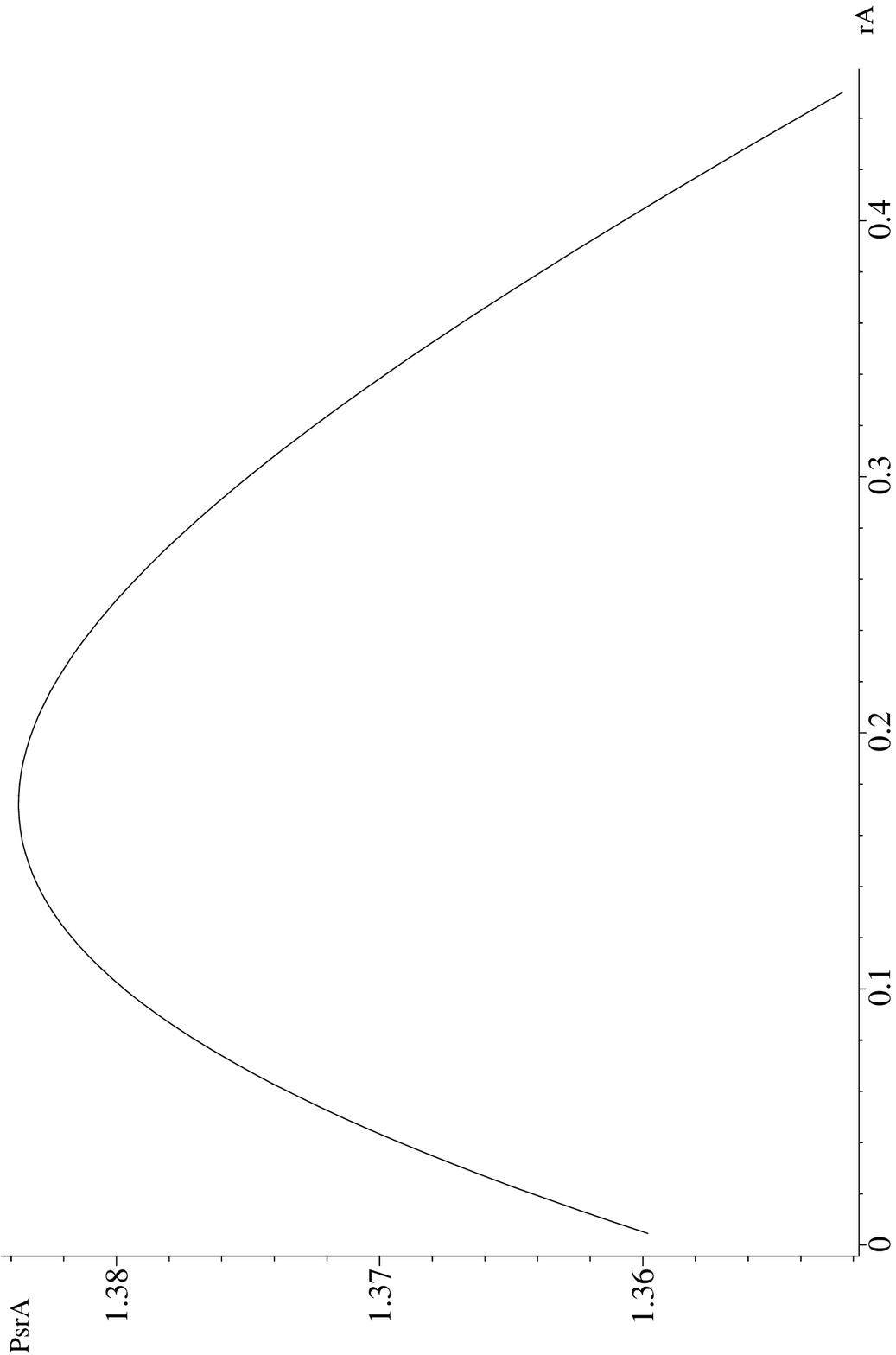}\label{PrA_rBeq0_fig}} \qquad
   		\subfigure[][]{\includegraphics[width=40mm,angle=-90]{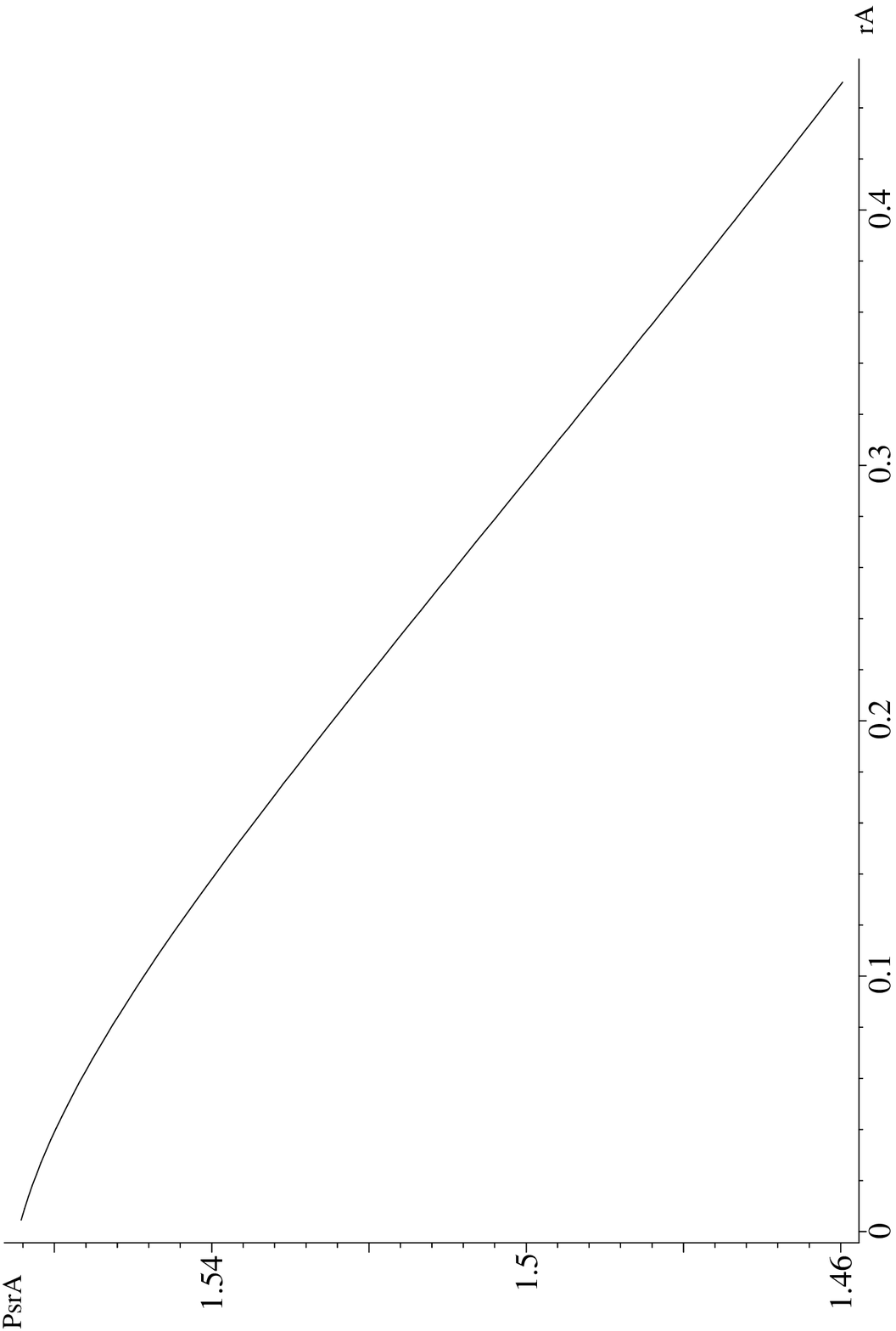}\label{PrA_rBeq02_fig}} \\
   		\subfigure[][]{\includegraphics[width=40mm,angle=-90]{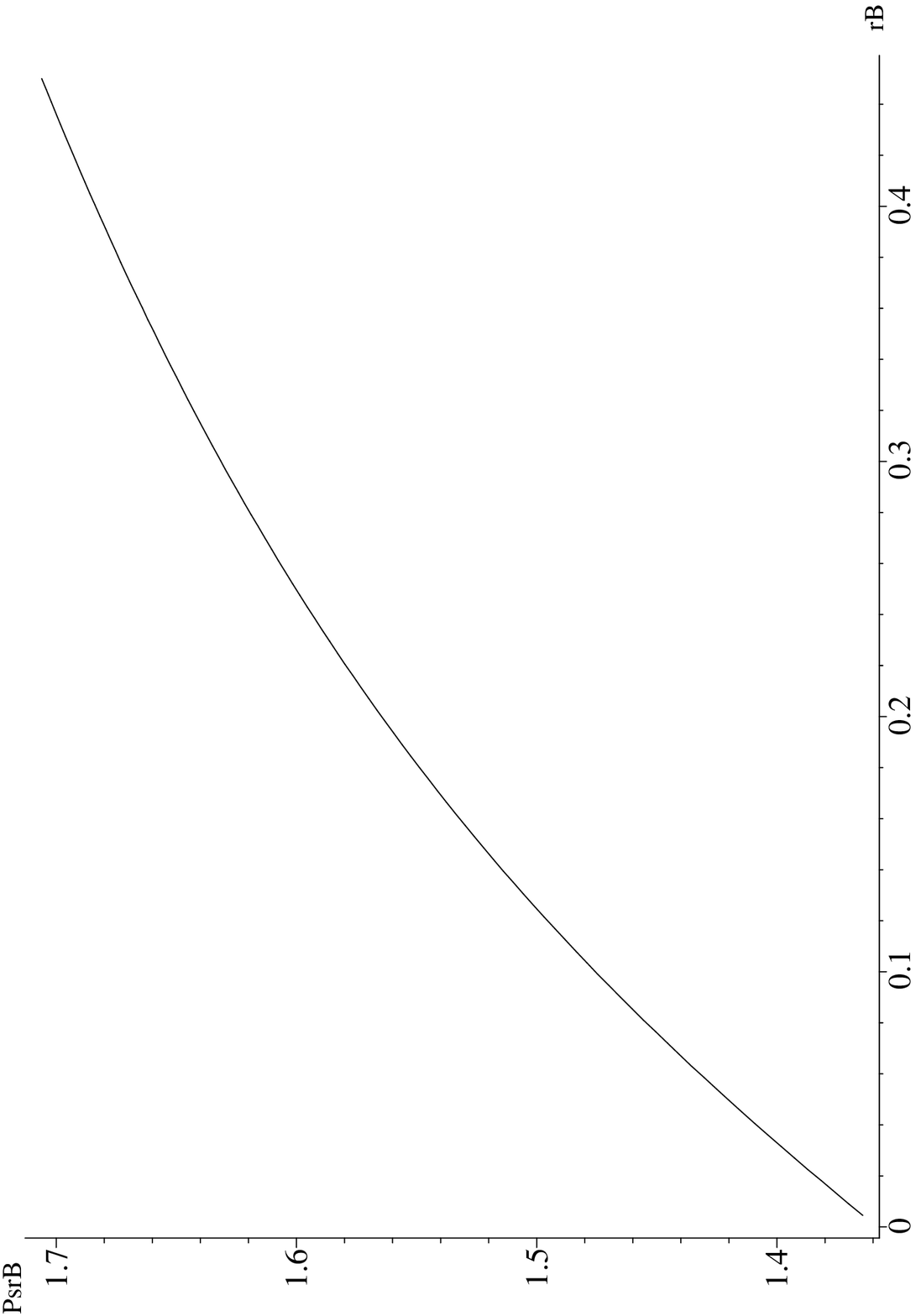}\label{PrB_rAeq0_fig}} \qquad
   		\subfigure[][]{\includegraphics[width=40mm,angle=-90]{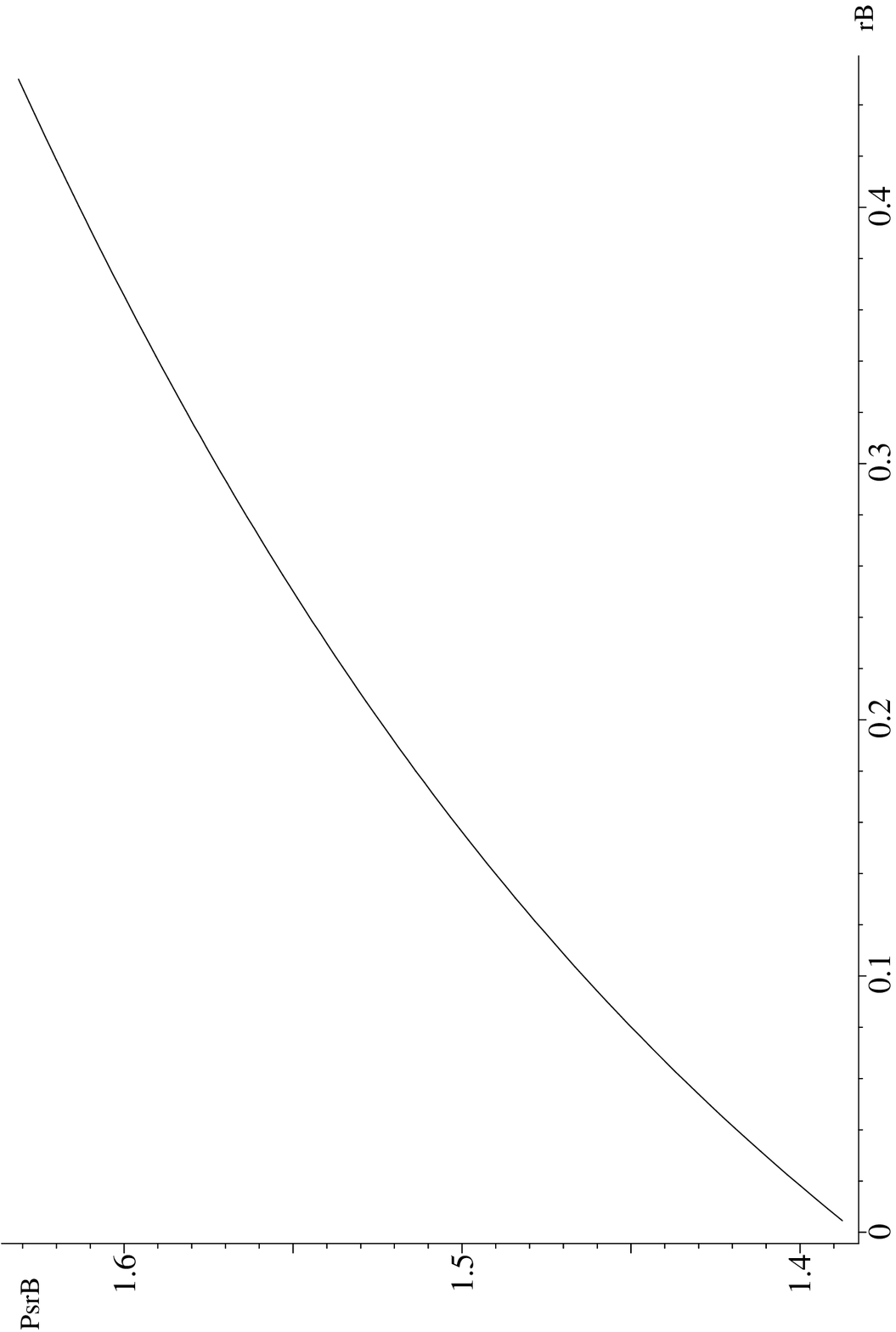}\label{PrB_rAeq02_fig}} 
   		\caption{Sections of the three-dimensional graph of $P^*(r_A,r_B)$ in Fig. \ref{Prarb_fig3D}. The figures on the top are plots of the optimal price $P^*(r_A)$ as a function of agent A level of risk $r_A$ when agent B level of risk is fixed at $r_B=0$ (left) and $r_B=0.2$ (right). The figures on the bottom are plots of the optimal price $P^*(r_B)$ as a function of agent B level of risk $r_B$ when agent A level of risk is fixed at $r_A=0$ (left) and $r_A=0.2$ (right).}\label{Prarb_fig2}
\end{figure}

\subsection{The bargaining dynamics with exponential utility functions} We will now focus our attention on the behavior of the projected dynamical systems \eqref{PGS} and \eqref{PGS_lambda} modeling the bargaining process under which the two agents exchange beliefs and eventually agree on a unique price to trade the contingent claim.

We start with projected dynamical systems \eqref{PGS}. In Fig. \ref{IPGS_fig} we plot the evolution of the two agents reservation prices $P_A(t)$ and $P_B(t)$ determined by \eqref{EU_prices} evaluated along orbits 
\begin{equation*}
(Q_A(t),Q_B(t))_{t\ge 0}
\end{equation*}
of \eqref{PGS} for three different values of the agents relative bargaining power $\lambda$. We note that even though \eqref{PGS} achieves its primary goal of modeling an exchange of beliefs under which the agents agree on a unique price for the contingent claim, we believe it to have several flaws, namely:
\begin{itemize}
\item[(i)] the dynamics of \eqref{PGS} depend very weakly on the agents relative bargaining power $\lambda$, which can immediately be seen from a comparison of Figs. \ref{IPGS_l0_fig}, \ref{IPGS_l1_fig} and \ref{IPGS_l04_fig};
\item[(ii)] the reservation price of the seller is not necessarily a decreasing function of time $t$;
\item[(iii)] the reservation price of the buyer is not necessarily an increasing function of time $t$;
\item[(iv)] the asymptotic price does not need to lie on the interval bounded by $P_A(0)$ and $P_B(0)$;
\item[(v)] for extreme values of the agents relative bargaining power $\lambda=0$ and $\lambda=1$, the functions $P_B(t)$ and $P_A(t)$ do not need to be constant with $t$.
\end{itemize}
The main reason for the projected gradient system \eqref{PGS} to have the qualitative behavior described in items (i)-(v) above is the lack of dependence of the penalty function in \eqref{LPDEF} on the the agents relative bargaining power $\lambda$, as well as the agents reservation prices at the previous period of time. 

\begin{figure}[h!]
	\centering
      \psfrag{PAt}[cc][][0.75][0]{$P_A(t)$}%
      \psfrag{PBt}[cc][][0.75][0]{$P_B(t)$}%
      \psfrag{t}[lc][][0.75][0]{$t$}%
      \psfrag{0}[ct][][0.65][0]{$0$}
      \psfrag{10}[ct][][0.65][0]{$10$}
      \psfrag{20}[ct][][0.65][0]{$20$}
      \psfrag{30}[ct][][0.65][0]{$30$}
      \psfrag{40}[ct][][0.65][0]{$40$}
      \psfrag{50}[ct][][0.65][0]{$50$}
      \psfrag{1.2}[rc][][0.65][0]{$1.2$}
      \psfrag{1.3}[rc][][0.65][0]{$1.3$}
      \psfrag{1.4}[rc][][0.65][0]{$1.4$}
      \psfrag{1.5}[rc][][0.65][0]{$1.5$}
   		\subfigure[][]{\includegraphics[width=40mm,angle=-90]{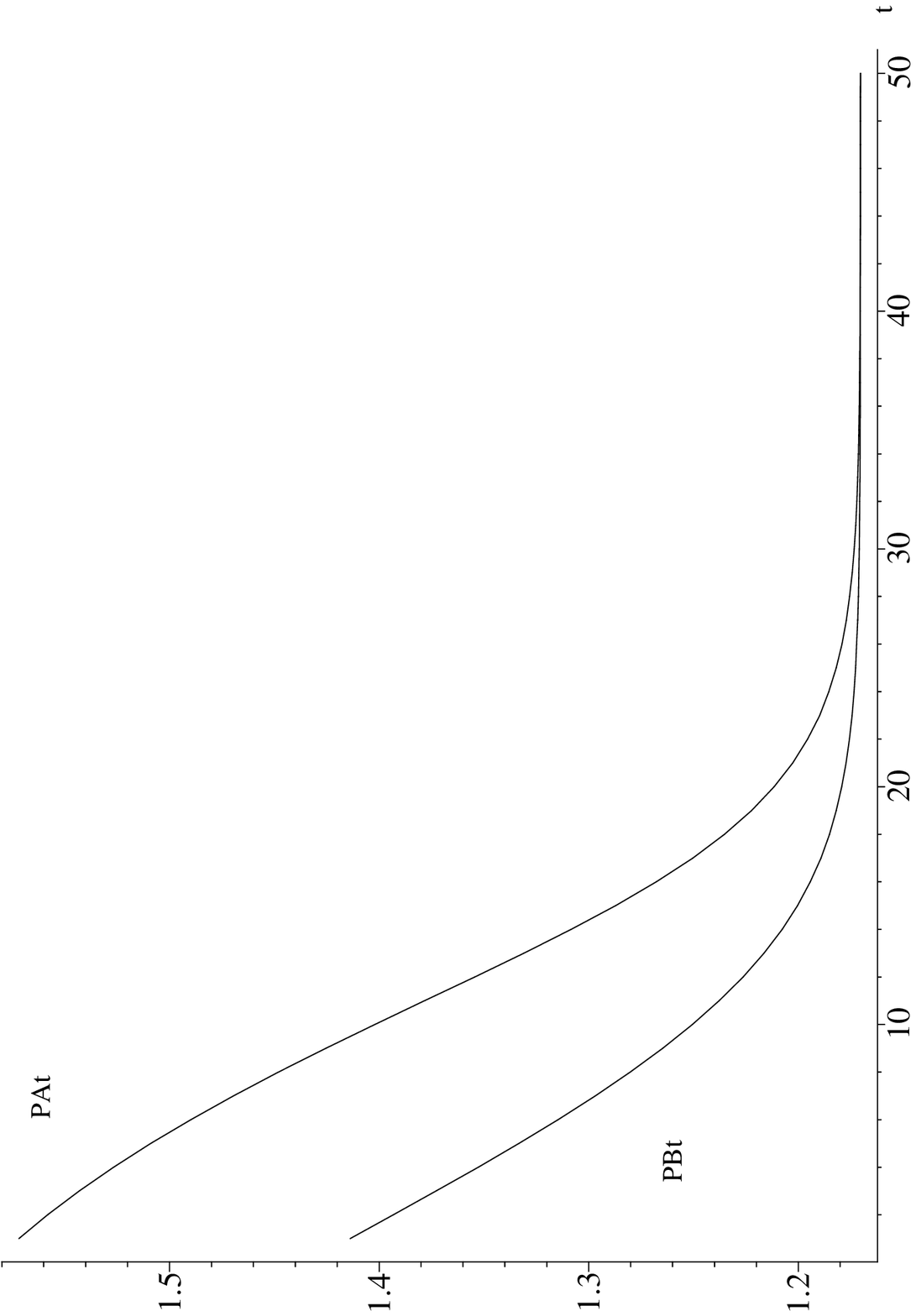}\label{IPGS_l0_fig}} \qquad
   		\subfigure[][]{\includegraphics[width=40mm,angle=-90]{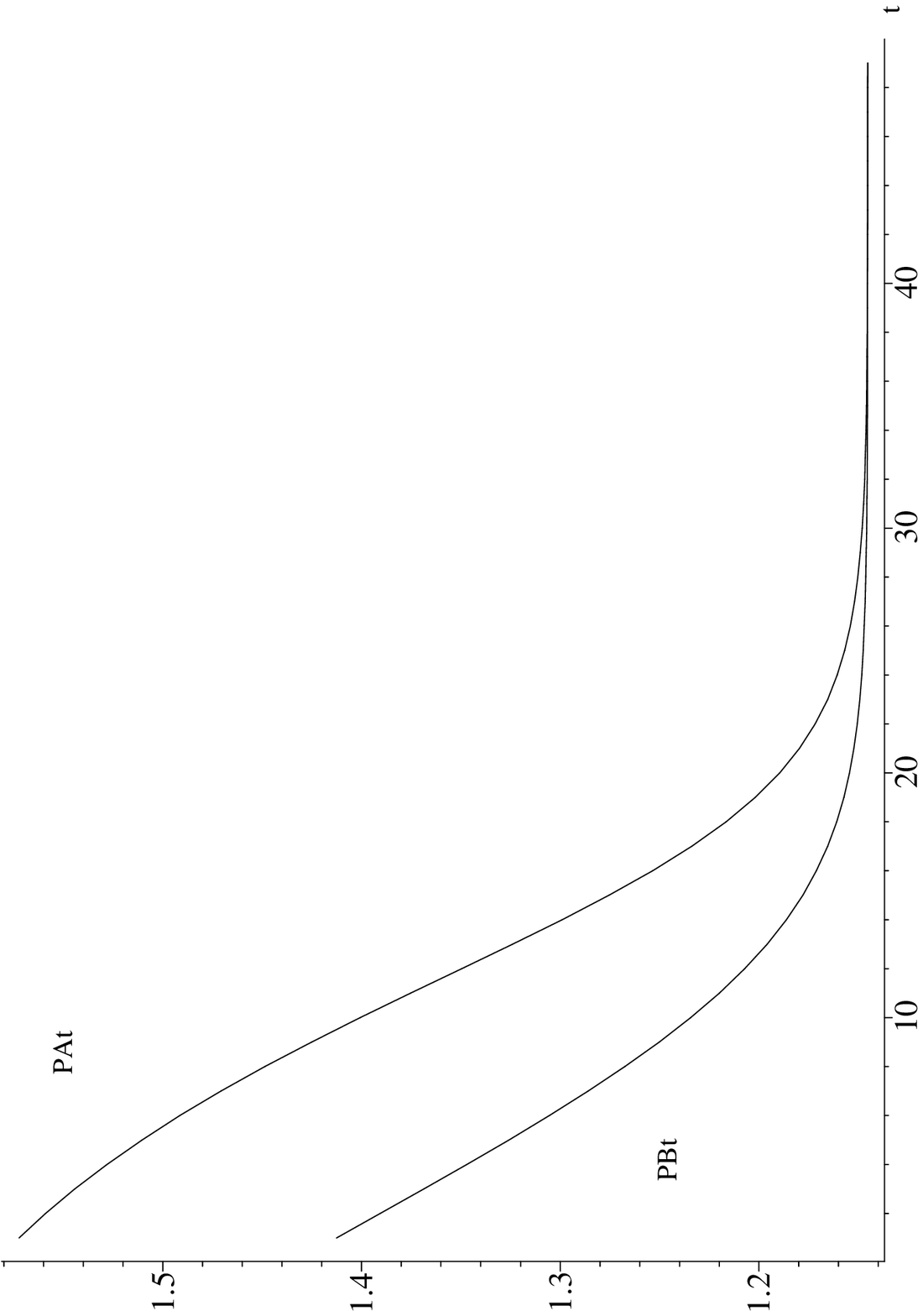}\label{IPGS_l1_fig}} \\
   		\subfigure[][]{\includegraphics[width=40mm,angle=-90]{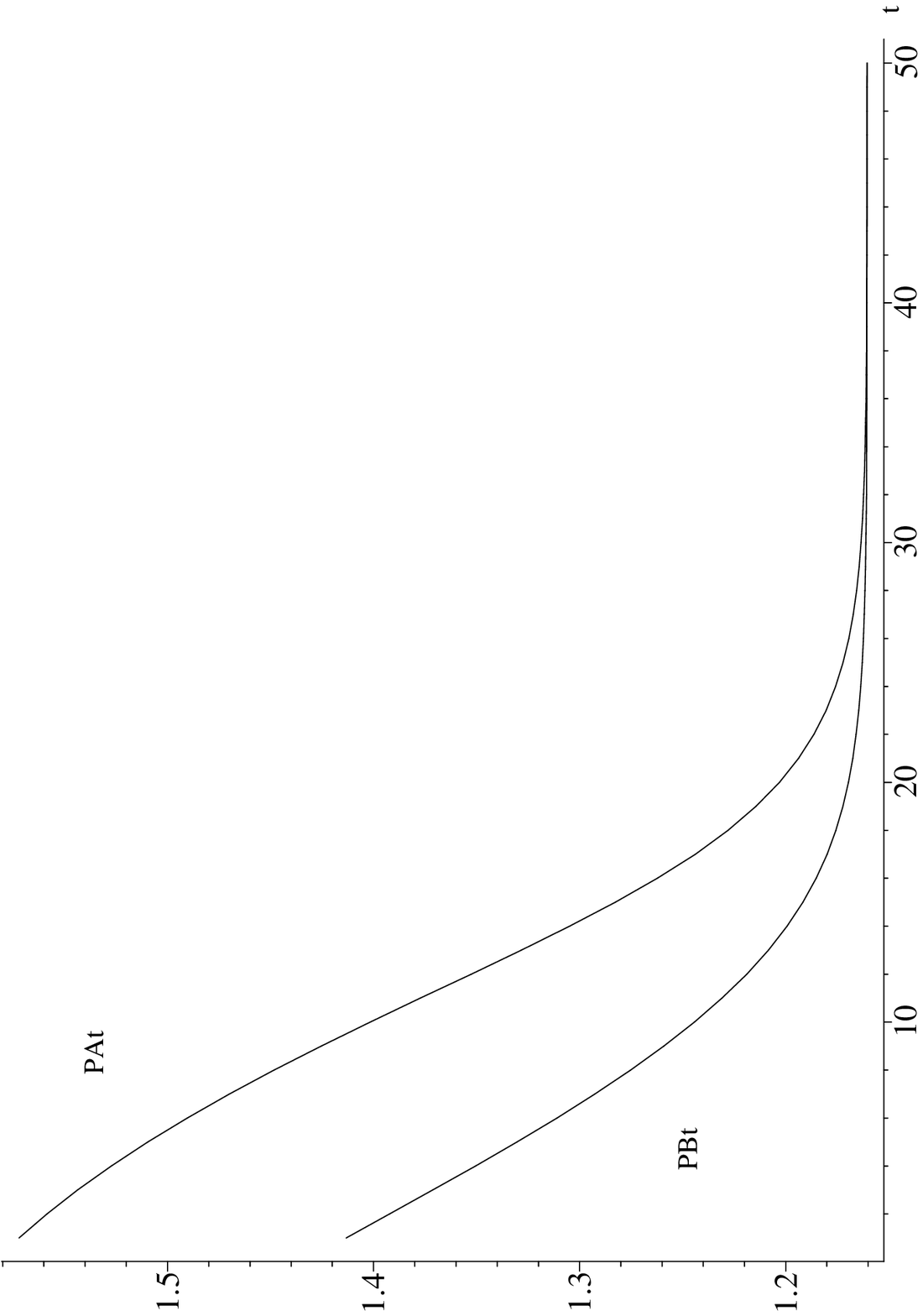}\label{IPGS_l04_fig}} \quad
   		   		\caption{The reservation price functions of both agents computed along orbits of the projected gradient system \eqref{PGS} modeling the bargaining process for different values of the relative bargaining power of the two agents $\lambda$: $\lambda=0$ on the top left, $\lambda=1$ on the top right, and $\lambda=0.4$ on the bottom. For all three figures $\epsilon=0.1$ and $\alpha=0.05$. The remaining values of parameters are as in Fig. \ref{Minimization_fig}.}\label{IPGS_fig}
\end{figure}

As described in section \ref{SSBD2}, the introduction of the projected gradient system \eqref{PGS_lambda} addresses the problems of \eqref{PGS} listed in items (i)-(v) above. Therefore, we believe that \eqref{PGS_lambda} may be a particularly good dynamical model for the bargaining process modeling the exchange of beliefs between two agents who are willing to trade a given asset. In particular, the introduction of \eqref{PGS_lambda} provides us with a dynamical system that, under the reasonable assumptions of theorem \ref{thm_conv_lambda}, satisfies the following properties:
\begin{itemize}
\item[1)] the asymptotic price $P^*$ is such that the inequalities $P_B(0)\le P^*\le P_A(0)$ (see Fig. \ref{DPGS_fig});
\item[2)] the asymptotic price $P^*$ increases with $\lambda$ (see Fig. \ref{DPGS_fig});
\item[3)] if $\lambda=0$ then $P_B(t)$ is constant in $t\in\Z$ and $P^*=P_B(r_B,Q_B(0))$ (see Fig. \ref{DPGS_l0_fig}); 
\item[4)] if $\lambda=1$ then $P_A(t)$ is constant in $t\in\Z$ and $P^*=P_A(r_A,Q_A(0))$ (see Fig. \ref{DPGS_l1_fig});
\item[5)] all other parameters being fixed, the asymptotic price $P^*$ is a decreasing function of $r_A$;
\item[6)] all other parameters being fixed, the asymptotic price $P^*$ is an increasing function of $r_B$;
\item[7)] $P_A(t)$ is a decreasing function of $t$ and $P_B(t)$ is an increasing function of $t$ (see Fig. \ref{DPGS_fig}).
\end{itemize}

\begin{figure}[h!]
	\centering
      \psfrag{PAt}[cc][][0.75][0]{$P_A(t)$}%
      \psfrag{PBt}[cc][][0.75][0]{$P_B(t)$}%
      \psfrag{t}[lc][][0.75][0]{$t$}%
      \psfrag{0}[ct][][0.65][0]{$0$}
      \psfrag{20}[ct][][0.65][0]{$20$}
      \psfrag{40}[ct][][0.65][0]{$40$}
      \psfrag{60}[ct][][0.65][0]{$60$}
      \psfrag{80}[ct][][0.65][0]{$80$}
      \psfrag{1.44}[rc][][0.65][0]{$1.44$}
      \psfrag{1.48}[rc][][0.65][0]{$1.48$}
      \psfrag{1.46}[rc][][0.65][0]{$1.46$}
      \psfrag{1.5}[rc][][0.65][0]{$1.5$}
      \psfrag{1.52}[rc][][0.65][0]{$1.52$}
      \psfrag{1.54}[rc][][0.65][0]{$1.54$}
      \psfrag{1.56}[rc][][0.65][0]{$1.56$}
      \psfrag{1.58}[rc][][0.65][0]{$1.58$}
   		\subfigure[][]{\includegraphics[width=40mm,angle=-90]{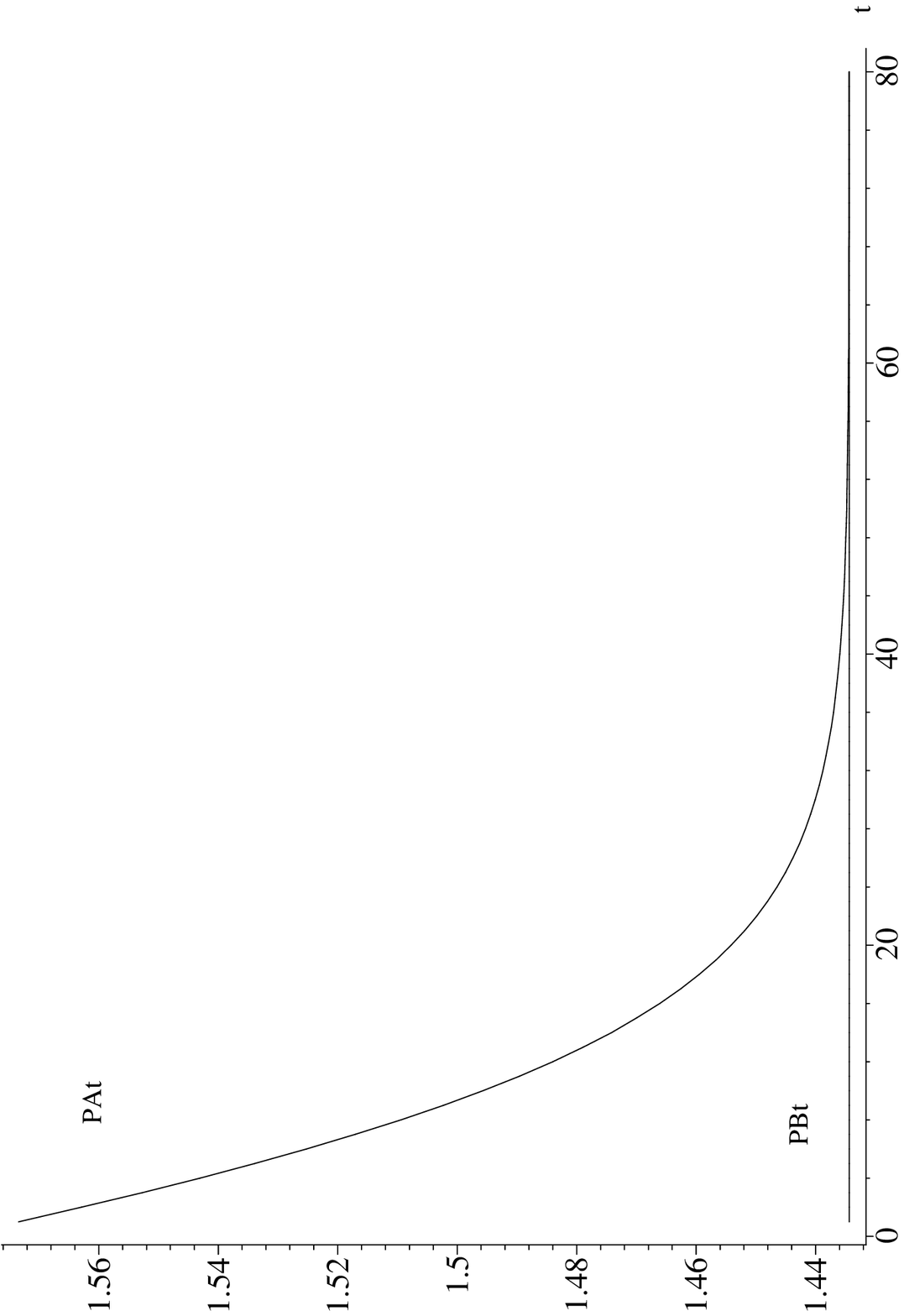}\label{DPGS_l0_fig}} \qquad
   		\subfigure[][]{\includegraphics[width=40mm,angle=-90]{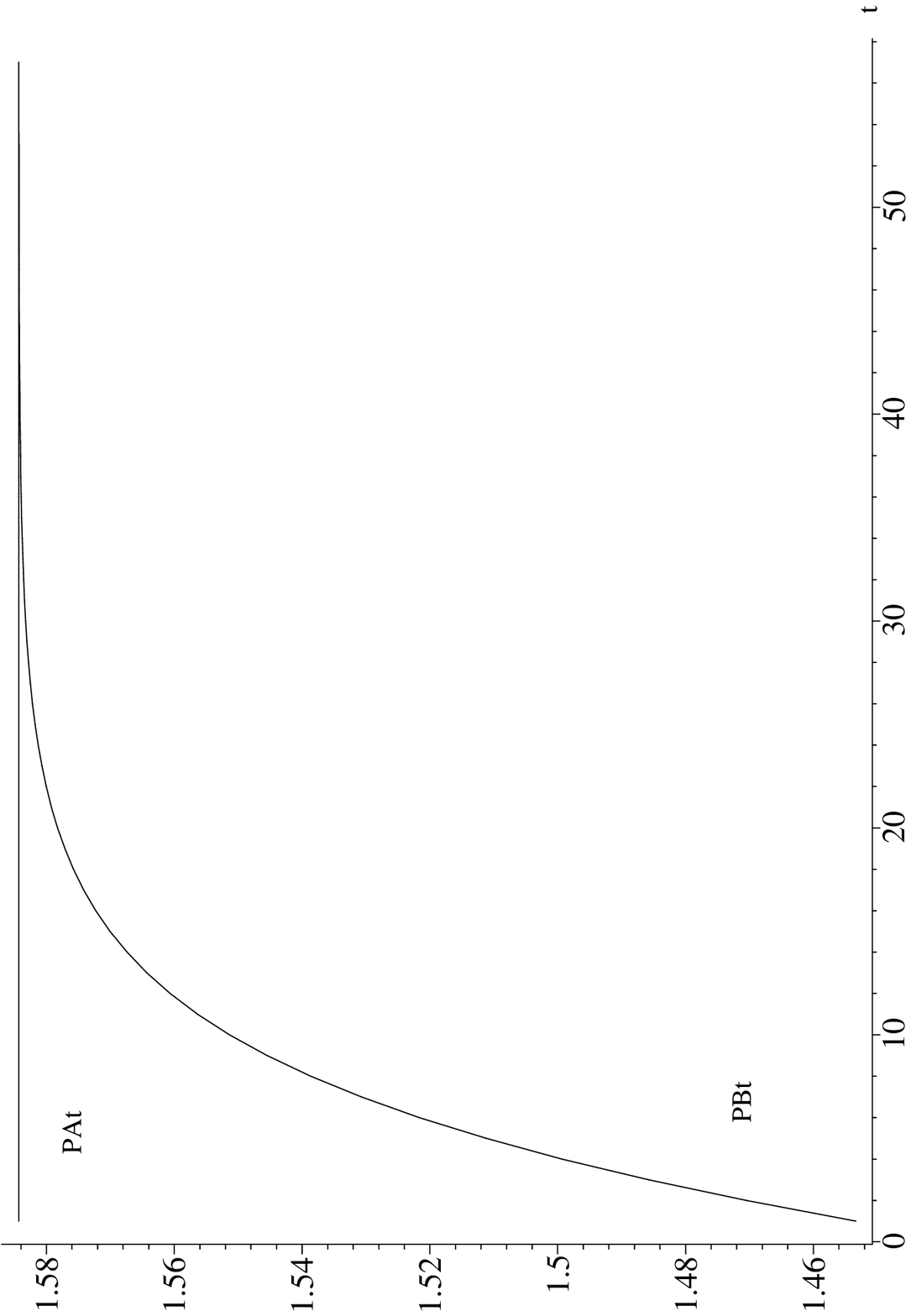}\label{DPGS_l1_fig}} \\
   		\subfigure[][]{\includegraphics[width=40mm,angle=-90]{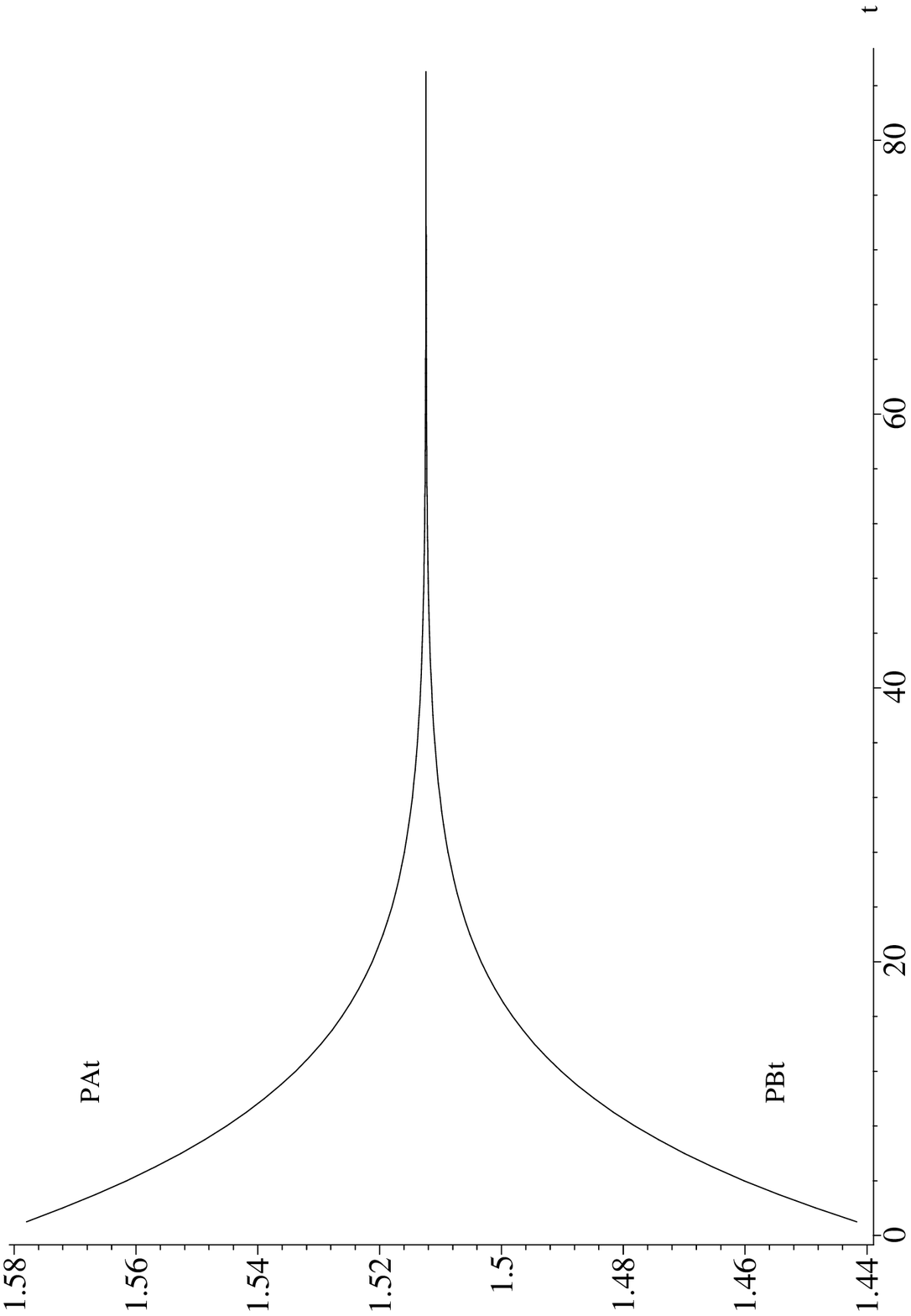}\label{DPGS_l04_fig}} 
   		   		\caption{The reservation price functions of both agents computed along orbits of the projected gradient system \eqref{PGS_lambda} modeling the bargaining process for different values of the relative bargaining power of the two agents $\lambda$: $\lambda=0$ on the top left, $\lambda=1$ on the top right, and $\lambda=0.4$ on the bottom. For all three figures $\epsilon=0.1$ and $\alpha=0.05$. The remaining values of parameters are as in Fig. \ref{Minimization_fig}.}\label{DPGS_fig}
\end{figure}

\section{Conclusion}
We have described a class of dynamical systems modeling a ``bargaining period'' in which the buyer and the seller of a contingent claim, whose real value depends on the future state of the economy, agree on a common price for its transaction. Such dynamical systems model the evolution with time of the agents' beliefs about the future states of the world, i.e. probability measures representing each agent estimate for the occurrence of each possible state of the world. Moreover, this class of dynamical systems has the form of a discrete-time gradient dynamical system, which is related to the minimization problem of the joint belief deviation function. Our gradient dynamical system is shown to converge to a common price for the asset under mild assumptions.

\appendix

\section{Proof of Lemma \ref{P_wd}}

\begin{proof}
(1) We prove the statement for the price function of agent A, the proof for agent B being similar.
Recall that the expected utility function of agent A for the contingent claim $F$, which we will denote by $\overline{U}_A$, is given by 
\begin{equation*}
\overline{U}_A(x) = E_{Q_A}[U_A(w_A + x - F)] = \sum_{k=1}^K Q_A^k U_A(w_A + x - F[k]) \ ,
\end{equation*}
where $Q_A=(Q_A^1,\ldots,Q_A^K)\in\Delta^K$. Since $U_A(w_A + x - F[k])$ is a strictly increasing concave function of $x$ for every $k\in\{1,\ldots,K\}$, we get that so is $\overline{U}_A(x)$, being a (convex) linear combination of such functions. Thus, the expected utility $\overline{U}_A$ is invertible. Therefore, the reservation price $P_A$ of agent A is well-defined as the (unique) solution of the equality
\begin{equation}\label{P_der_1}
U(w_A)-r_A = \overline{U}_A(P_A(Q_A)) \ ,
\end{equation}
for each $(r_A,Q_A)\in\R_0^+\times\Delta^K$.

(2) We will only prove the case $\beta=A$, the proof of the case $\beta=B$ being similar. Let $Q_A\in\Delta^K$ be fixed, and let $r_A^1,r_A^2\in\R_0^+$ and $0\le\nu\le 1$. To make the notation lighter, we introduce the following definitions
\begin{equation*}
P_A^1:=P_A(r_A^1) \ , \quad P_A^2:=P_A(r_A^2) \ , \quad P_A^\nu := P_A(\nu r_A^1+(1-\nu)r_A^2) \ .
\end{equation*}
From \eqref{res_pr_AB} we obtain that
\begin{eqnarray*}
U_A(w_A) -r_A^i &=& E_{Q_A}	\left [U_A(w_A + P_A^i - F)\right ] \ , \qquad i=1,2 \\
U_A(w_A) -\nu r_A^1-(1-\nu)r_A^2 &=& E_{Q_A}	\left [U_A(w_A + P_A^\nu - F)\right ] \ .
\end{eqnarray*}
Using the equalities above and the expected value linearity, we get that
\begin{equation*}
E_{Q_A}	\left [U_A(w_A + P_A^\nu - F)\right ] =  E_{Q_A}	\left [\nu U_A(w_A + P_A^1 - F) + (1-\nu) U_A(w_A + P_A^2 - F)\right ]  \ .
\end{equation*}
Since the equality above is satisfied for every $Q_A\in\Delta^K$, we obtain that
\begin{equation}\label{LC_1}
U_A(w_A + P_A^\nu - F) =  \nu U_A(w_A + P_A^1 - F) + (1-\nu) U_A(w_A + P_A^2 - F)  \ .
\end{equation}
Since $U_A$ is a concave function, we get the inequality
\begin{eqnarray}\label{LC_2}
\lefteqn{\nu U_A(w_A + P_A^1 - F) + (1-\nu) U_A(w_A + P_A^2 - F)}  \nonumber \\
&\le& U_A(\nu(w_A + P_A^1 - F) + (1-\nu)(w_A + P_A^2 - F)) \\
&=& U_A(w_A + \nu P_A^1 +(1-\nu)P_A^2 - F)  \nonumber \ .
\end{eqnarray}
Putting together conditions \eqref{LC_1} and \eqref{LC_2}, we obtain that
\begin{equation*}
U_A(w_A + P_A^\nu - F) \le U_A(w_A + \nu P_A^1 +(1-\nu)P_A^2 - F) \ .
\end{equation*}
Recalling that $U_A$ has positive first derivative, from the previous inequality we get 
\begin{equation*}
P_A^\nu \le \nu P_A^1 +(1-\nu)P_A^2 \ ,
\end{equation*}
which concludes the proof of convexity of agent's A price function $P_A$ with respect to its level of risk $r_A$.

(3) We will prove item (i), the proof of item (ii) being similar. We let $r_A\in\R_0^+$ be fixed, $Q_A^1,Q_A^2\in\Delta^K$ and $0\le\nu\le 1$. We introduce the following simplification to the  notation
\begin{equation*}
P_A^1=P_A(Q_A^1) \ , \quad P_A^2=P_A(Q_A^2) \ , \quad P_A^\nu = P_A(\nu Q_A^1+(1-\nu)Q_A^2) \ .
\end{equation*}
From equality \eqref{res_pr_AB} we get that
\begin{eqnarray*}
U_A(w_A) -r_A &=& E_{Q_A^i}	\left [U_A(w_A + P_A^i - F)\right ] \ , \qquad i=1,2 \\
U_A(w_A) -r_A &=& E_{\nu Q_A^1+(1-\nu)Q_A^2}	\left [U_A(w_A + P_A^\nu - F)\right ] \ .
\end{eqnarray*}
From the equalities above, we immediately obtain that
\begin{eqnarray*}
\lefteqn{E_{\nu Q_A^1+(1-\nu)Q_A^2}	\left [U_A(w_A + P_A^\nu - F)\right ]}\nonumber \\
&=& \nu E_{Q_A^1}	\left [U_A(w_A + P_A^1 - F)\right ] + (1-\nu) E_{Q_A^2}	\left [U_A(w_A + P_A^2 - F)\right ] 
\end{eqnarray*}
which is clearly equivalent to 
\begin{eqnarray}\label{LC_6}
\lefteqn{E_{\nu Q_A^1+(1-\nu)Q_A^2}	\left [U_A(w_A + P_A^\nu - F)\right ]}\nonumber \\
&=& E_{\nu Q_A^1+(1-\nu)Q_A^2}	\left [U_A(w_A + \nu P_A^1 + (1-\nu)P_A^2 - F)\right ] + G_{Q_A^1,Q_A^2}(\nu) \ ,
\end{eqnarray}
where $G_{Q_A^1,Q_A^2}(\lambda)$ is given by
\begin{eqnarray*}
G_{Q_A^1,Q_A^2}(\nu) & = & \nu E_{Q_A^1}	\left [U_A(w_A + P_A^1 - F)\right ] + (1-\nu) E_{Q_A^2}	\left [U_A(w_A + P_A^2 - F)\right ] \nonumber \\
&& - E_{\nu Q_A^1+(1-\nu)Q_A^2}	\left [U_A(w_A + \nu P_A^1 + (1-\nu)P_A^2 - F)\right ]  \ . \nonumber
\end{eqnarray*}
Assume for a moment that $G_{Q_A^1,Q_A^2}(\nu)$ is non-negative for every $Q_1,Q_2\in\Delta^K$ and every $\nu\in[0,1]$. Then, we obtain from \eqref{LC_6} that
\begin{eqnarray*}
\lefteqn{E_{\nu Q_A^1+(1-\nu)Q_A^2}	\left [U_A(w_A + P_A^\nu - F)\right ] }\nonumber \\
&\ge & E_{\nu Q_A^1+(1-\nu) Q_A^2}	\left [ U_A(w_A + \nu P_A^1 + (1-\nu)P_A^2 - F)\right ] \nonumber \\
\end{eqnarray*}
Since the previous inequality holds for arbitrary measures $Q_A^1$ and $Q_A^2$, we get that
\begin{equation*}
U_A(w_A + P_A^\nu - F) \ge  U_A(w_A + \nu P_A^1 + (1-\nu)P_A^2 - F)  
\end{equation*}
and by monotonicity of the utility function $U_A$ we obtain from the previous inequality that
\begin{equation*}
P_A^\nu \ge  \nu P_A^1 + (1-\nu)P_A^2 \ ,
\end{equation*}
thus proving concavity of the seller price function $P_A$.

We will now see that $G_{Q_A^1,Q_A^2}(\nu)$ is non-negative for every $Q_1,Q_2\in\Delta^K$ and $\nu\in[0,1]$. We first note that if $P_A^1=P_A^2$, then $G_{Q_A^1,Q_A^2}(\nu)$ is identically zero. It remains to see that $G_{Q_A^1,Q_A^2}(\nu)$ is non-negative provided $P_A^1\ne P_A^2$. Without loss of generality, we assume that $P_A^1>P_A^2$. Rearranging terms, we write $G_{Q_A^1,Q_A^2}(\nu)$ as
\begin{eqnarray}\label{G_1}
G_{Q_A^1,Q_A^2}(\nu) & = & \nu E_{Q_A^1}	\left [U_A(w_A + P_A^1 - F)-U_A(w_A + \nu P_A^1 + (1-\nu)P_A^2 - F)\right ]  \\
&&+ (1-\nu) E_{Q_A^2}	\left [U_A(w_A + P_A^2 - F) - U_A(w_A + \nu P_A^1 + (1-\nu)P_A^2 - F)\right ] \nonumber \ . 
\end{eqnarray}
Since $U_A$ is differentiable, we obtain that there exists $C_1\in(\nu P_A^1 + (1-\nu)P_A^2,P_A^1)$ such that 
\begin{equation}\label{G_2}
U_A(w_A + P_A^1 - F)-U_A(w_A + \nu P_A^1 + (1-\nu)P_A^2 - F) = (1-\nu)(P_A^1-P_A^2)U_A'(w_A + C_1 - F) \ .
\end{equation}
Similarly, there exists $C_2\in(P_A^2,\nu P_A^1 + (1-\nu)P_A^2)$ such that 
\begin{equation}\label{G_3}
U_A(w_A + P_A^2 - F)-U_A(w_A + \nu P_A^1 + (1-\nu)P_A^2 - F) = -\nu(P_A^1-P_A^2)U_A'(w_A + C_2 - F) \ .
\end{equation}
Substituting \eqref{G_2} and \eqref{G_3} in \eqref{G_1}, we get
\begin{eqnarray}\label{G_6}
G_{Q_A^1,Q_A^2}(\nu) &=&  \nu(1-\nu)(P_A^1-P_A^2) \nonumber \\
&&\times\left( E_{Q_A^1}	\left [U_A'(w_A + C_1 - F)\right ] - E_{Q_A^2}\left [U_A'(w_A + C_2 - F)\right ] \right) \ .
\end{eqnarray}
Recalling that $U_A$ is increasing, we obtain that for every $C_1\in(\nu P_A^1 + (1-\nu)P_A^2,P_A^1)$ we have 
\begin{equation}\label{G_4}
E_{Q_A^1}	\left [U_A(w_A + C^1 - F)\right] < E_{Q_A^1}	\left [U_A(w_A + P_A^1 - F)\right] = U_A(w_A) - r_A \ ,
\end{equation}
and for every $C_2\in(P_A^2,\nu P_A^1 + (1-\nu)P_A^2)$ we have 
\begin{equation}\label{G_5}
E_{Q_A^2}	\left [U_A(w_A + C^2 - F)\right] > E_{Q_A^2}	\left [U_A(w_A + P_A^2 - F)\right] = U_A(w_A) - r_A \ . 
\end{equation}
Combining concavity of $U_A$ with \eqref{G_4} and \eqref{G_5}, we obtain the inequality
\begin{equation*}
E_{Q_A^1}	\left [U_A'(w_A + C^1 - F)\right] > E_{Q_A^2}	\left [U_A'(w_A + C^2 - F)\right] \ ,
\end{equation*}
which, combined with \eqref{G_6}, guarantees that $G_{Q_A^1,Q_A^2}(\nu)$ is non-negative.
\end{proof}

\section{ Proof of Lemma \ref{lemma_bounds}  }\label{proof_lemma_bounds}

\begin{proof}
Since $\Omega$ is a finite set and the contingent claim $F$ is non-constant, we have that
\begin{equation*}
\underset{\omega\in\Omega}{\min}\; F[\omega] \le  F[\omega]  \le  \underset{\omega\in\Omega}{\max}\; F[\omega] \ , \quad \omega\in\Omega \ .
\end{equation*}
From the previous inequalities and strict monotonicity of the utility function $U_A$, we obtain that for every $\omega\in\Omega$
\begin{equation*}
U_A\left(w_A+P_A -\underset{\omega\in\Omega}{\max}\; F[\omega]\right) \le  U_A\left(w_A+P_A -F[\omega]\right)  \le  U_A\left(w_A+P_A -\underset{\omega\in\Omega}{\min}\; F[\omega]\right) \ ,
\end{equation*}
and thus
\begin{equation}\label{PB_1}
U_A\left(w_A+P_A -\underset{\omega\in\Omega}{\max}\; F[\omega]\right) \le  E_{Q_A}\left[U\left(w_A+P_A -F\right)\right]  \le  U_A\left(w_A+P_A -\underset{\omega\in\Omega}{\min}\; F[\omega]\right) \ .
\end{equation}
Recalling the definition of agent A reservation price, we get that
\begin{equation}\label{PB_2}
U_A(w_A)-r_A = E_{Q_A}\left[U\left(w_A+P_A -F\right)\right] \ .
\end{equation}
Combining \eqref{PB_1} and \eqref{PB_2}, we obtain
\begin{equation*}
U_A\left(w_A+P_A -\underset{\omega\in\Omega}{\max}\; F[\omega]\right) \le  U_A(w_A)-r_A  \le  U_A\left(w_A+P_A -\underset{\omega\in\Omega}{\min}\; F[\omega]\right) \ .
\end{equation*}
Solving the previous inequalities with respect to $P_A$, we get the first set of inequalities in the statement. Clearly, one can obtain the second set of inequalities from an analogous computation. 

We now prove the second part of the lemma. Using the bounds for the reservation prices $P_A$ and $P_B$ on the first part of the statement, we obtain that
\begin{equation*}
D^- \le P_B-P_A \le D^+ \ ,
\end{equation*}
where
\begin{eqnarray*}
D^+ &=& G(w_A,w_B,r_A,r_B)  + \underset{\omega\in\Omega}{\max}\; F[\omega] - \underset{\omega\in\Omega}{\min}\; F[\omega] \nonumber \\
D^- &=& G(w_A,w_B,r_A,r_B) -\left( \underset{\omega\in\Omega}{\max}\; F[\omega] - \underset{\omega\in\Omega}{\min}\; F[\omega]\right)  
\end{eqnarray*}
and 
\begin{equation*}
G(w_A,w_B,r_A,r_B) = w_B - U_B^{-1}\left(U_B(w_B)-r_B\right) + w_A - U_A^{-1}\left(U_A(w_A)-r_A\right) \ .
\end{equation*}

Since the risk levels $r_A$ and $r_B$ are assumed to be strictly positive and $F$ is non-constant, we trivially get the following inequalities
\begin{eqnarray*}
w_i - U_i^{-1}\left(U_i(w_i)-r_i\right) & \ge & 0 \ , \quad i= A,B \nonumber \\
\underset{\omega\in\Omega}{\max}\; F[\omega] - \underset{\omega\in\Omega}{\min}\; F[\omega] & > & 0   \  .
\end{eqnarray*}
Using the previous inequalities it is clear that $D^+$ is strictly positive and that $D^-$ is negative if and only if 
\begin{equation*}
\underset{\omega\in\Omega}{\max}\; F[\omega] - \underset{\omega\in\Omega}{\min}\; F[\omega] > w_B - U_B^{-1}\left(U_B(w_B)-r_B\right) + w_A - U_A^{-1}\left(U_A(w_A)-r_A\right) \ ,
\end{equation*}
as required.
\end{proof}

\section{Proof of Lemma \ref{DP_Lip}}\label{proof_DP_Lip}

\begin{proof}
We will prove that the partial derivatives of $P_A$ with respect to the components of $Q_A$ are Lipschitz continuous, the remaining proofs being analogous or easier, such as for the case of the partial derivatives with respect to $r_A$ and $r_B$).

By lemma \ref{P_dif} we have that $P_A$ is a continuously differentiable function of $(r_A,Q_A)$ and that its partial derivatives with respect to the components of $Q_A$ are given by
\begin{equation*}
\frac{\partial P_A}{\partial Q_A^k}(r_A,Q_A) =-\frac{U_A(w_A+P_A-F[k])}{E_{Q_A}[U_A'(w_A+P_A-F)]} \ , \quad k=1,\ldots, K  \ .
\end{equation*}

Fix $k\in\{1,\ldots,K\}$ and $r_A\in R_A$, and let $Q_1,Q_2$ be arbitrary elements of $\Delta^K$. We have that
\begin{eqnarray*}
\lefteqn{\left|\frac{\partial P_A}{\partial Q_A^k}(Q_1) - \frac{\partial P_A}{\partial Q_A^k}(Q_2)\right| =}\\ &&\left|\frac{U_A(w_A+P_A(Q_1)-F[k])}{E_{Q_1}[U_A'(w_A+P_A(Q_1)-F)]}-\frac{U_A(w_A+P_A(Q_2)-F[k])}{E_{Q_2}[U_A'(w_A+P_A(Q_2)-F)]} \right| \nonumber \ ,
\end{eqnarray*}
where we have dropped the dependence of the partial derivatives on $r_A$ to simplify notation. Let us introduce the quantities
\begin{eqnarray*}
D &=& U_A(w_A+P_A(Q_1)-F[k])E_{Q_2}[U_A'(w_A+P_A(Q_2)-F)] \nonumber \\
&&-U_A(w_A+P_A(Q_2)-F[k])E_{Q_1}[U_A'(w_A+P_A(Q_1)-F)] \\
N &=& E_{Q_1}[U_A'(w_A+P_A(Q_1)-F)]E_{Q_2}[U_A'(w_A+P_A(Q_2)-F)] 
\end{eqnarray*}
Clearly, we have that
\begin{equation*}
\left|\frac{\partial P_A}{\partial Q_A^k}(Q_1) - \frac{\partial P_A}{\partial Q_A^k}(Q_2)\right| =\frac{|D|}{|N|} \nonumber \ .
\end{equation*}

By lemma \ref{lemma_bounds}, we get that
\begin{equation}\label{DP_Lip_1}
U_A^{-1}\left(U_A(w_A)-r_A\right) - \Delta[F] \le w_A+P_A-F \le U_A^{-1}\left(U_A(w_A)-r_A\right) + \Delta[F]
\end{equation}
where $\Delta[F]$ is defined as
\begin{equation*}
\Delta[F] = \underset{\omega\in\Omega}{\max}\; F[\omega] -\underset{\omega\in\Omega}{\min}\; F[\omega] \ .
\end{equation*}
Since $\Delta[F]$ is such that $0<\Delta[F]<\infty$, $w_A$ is fixed and $R_A$ is bounded, we obtain that $w_A+P_A-F$ is bounded above. As a consequence $U_A'(w_A+P_A-F)$ is bounded away from zero and we obtain that there exists $M_1>0$ such that $|N|>M_1$.

We will now prove that there exists $M_2>0$ such that $D\leq M_2\left\|Q_1-Q_2\right\|$, where $\left\|\cdot\right\|$ denotes the Euclidean norm in $\R^K$. We start by rewriting $D$ as 
\begin{eqnarray*}
D &=& U_A(w_A+P_A(Q_1)-F[k]) \\
&&\times \left(E_{Q_2}[U_A'(w_A+P_A(Q_2)-F)]-E_{Q_1}[U_A'(w_A+P_A(Q_1)-F)]\right) \nonumber \\
&&+\left(U_A(w_A+P_A(Q_1)-F[k])-U_A(w_A+P_A(Q_2)-F[k])\right)  \\
&& \times E_{Q_1}[U_A'(w_A+P_A(Q_1)-F)] \ .
\end{eqnarray*}
Since $U_A(w_A+P_A(Q_1)-F[k])$ is a continuous function of $Q_1\in\Delta^K$ and $\Delta^K$ is a compact set, we obtain that $|U_A(w_A+P_A(Q_1)-F[k])|$ is bounded above. Moreover, using the inequalities \eqref{DP_Lip_1}, we obtain that $|E_{Q_1}[U_A'(w_A+P_A(Q)-F)]|$ is also bounded above. Recall that $U_A(w_A+P_A(Q)-F[k])$ is continuously differentiable with respect to $Q$ to obtain the existence of $C_1>0$ such that 
\begin{equation*}
|U_A(w_A+P_A(Q_1)-F[k])-U_A(w_A+P_A(Q_2)-F[k])| \le C_1\left\|Q_1-Q_2\right\| \ . 
\end{equation*}
We now note that 
\begin{eqnarray*}
\lefteqn{\left|E_{Q_2}[U_A'(w_A+P_A(Q_2)-F)]-E_{Q_1}[U_A'(w_A+P_A(Q_1)-F)]\right| \le} \nonumber \\
&&\left|E_{Q_2}[U_A'(w_A+P_A(Q_2)-F)-U_A'(w_A+P_A(Q_1)-F)]\right|\\
&&+\left|E_{Q_2}[U_A'(w_A+P_A(Q_1)-F)]-E_{Q_1}[U_A'(w_A+P_A(Q_1)-F)]\right|
\end{eqnarray*}
Since $P_A$ is continuously differentiable and we are assuming that $U_A'$ is Lipschitz continuous, we obtain that there exists $C_2>0$ such that
\begin{eqnarray*}
\left|E_{Q_2}[U_A'(w_A+P_A(Q_2)-F)-U_A'(w_A+P_A(Q_1)-F)]\right| \le  C_2\left\|Q_1-Q_2\right\| \ .
\end{eqnarray*}
Moreover, we get that
\begin{eqnarray*}
\lefteqn{\left|E_{Q_2}[U_A'(w_A+P_A(Q_1)-F)]-E_{Q_1}[U_A'(w_A+P_A(Q_1)-F)]\right|}\\
&& = \left|\sum_{k=1}^K (Q_2^K-Q_1^K)U_A'(w_A+P_A(Q_1)-F[k]) \right|  \\
&& \le \sum_{k=1}^K \left|Q_2^K-Q_1^K\right|\left|U_A'(w_A+P_A(Q_1)-F[k])\right|
\end{eqnarray*}
Using the inequalities \eqref{DP_Lip_1} once again, we obtain that $U_A'(w_A+P_A(Q_1)-F)$ is bounded above. Thus, there exists $C_3>0$ such that 
\begin{eqnarray*}
\left|E_{Q_2}[U_A'(w_A+P_A(Q_1)-F)]-E_{Q_1}[U_A'(w_A+P_A(Q_1)-F)]\right| \le C_3\left\|Q_1-Q_2\right\| \ .
\end{eqnarray*}
We conclude that there exists $M_2>0$ such that $|D|\le M_2\left\|Q_1-Q_2\right\|$. C

Combining this inequality with the bound obtained previously for $|N|$, we get the desired result. 
\end{proof}

\section{Proof of Theorem \ref{opt_existence}}\label{proof_opt_existence}

\begin{proof}
Let us fix $\lambda\in(0,1)$, $r_A,r_B\in\R_0^+$, $w_A,w_B\in\R$ and $Q_A^0,Q_B^0\in\Delta^K$ such that $P(r_A,Q_A^0)>P(r_B,Q_B^0)$, and consider the minimization problem \eqref{PRIMAL}. 

Note that $\psi_A(Q_A,Q_A^0)$ is a convex function of $Q_A$ and $\psi_B(Q_B,Q_B^0)$ is a convex function of $Q_B$. Thus, the function to be minimized 
\begin{equation*}
\lambda \psi_A(Q_A,Q_A^0) + (1-\lambda) \psi_B(Q_B,Q_B^0)
\end{equation*} 
being a convex combination of convex functions, is also convex. Using lemma \ref{P_wd}, we get that for fixed levels of risk $r_A$ and $r_B$, the seller price function $P_A$ is a concave function of $Q_A$ and the buyer price function $P_B$ is a convex function of $Q_B$. Thus, we obtain that $P_B(Q_B)-P_A(Q_A)$ is a convex function of $(Q_A,Q_B)\in\Dcal$ and that
\begin{equation*}
\left\{(Q_A,Q_B)\in\Dcal: P_B(Q_B)\ge P_A(Q_A) \right\}
\end{equation*}
is a convex subset of $\Dcal$. From the convexity of the function to be minimized and the convexity of the constraint set, we obtain that the minimization problem \eqref{PRIMAL} has a unique solution which is attained at $(Q_A^*,Q_B^*)\in\Dcal$.

Using lemma \ref{P_dif} we obtain that the price functions $P_A$ and $P_B$ are differentiable with respect to $Q_A$ and $Q_B$, respectively. One can then use the Lagrange multiplier method to solve \eqref{PRIMAL}. Define the Lagrangian function $L_{r_A,r_B,\lambda}:\Dcal\times\R^+\rightarrow\R$ associated with the minimization problem \eqref{PRIMAL} as
\begin{eqnarray*}
L_{r_A,r_B,\lambda}(Q_A,Q_B,\mu) &=&  \lambda \psi_A(Q_A,Q_A^0) + (1-\lambda) \psi_B(Q_B,Q_B^0) \\
&& + \mu(P_A(r_A,Q_A) - P_B(r_B,Q_B)) \ ,
\end{eqnarray*}
where $r_A$ and $r_B$ are fixed and $\mu$ is the Lagrange multiplier associated with the constraint inequality $P_A(r_A,Q_A) \le P_B(r_A,Q_B)$ in the minimization problem under consideration. We then look for points $(Q_A,Q_B,\mu)\in\Dcal\times\R$ satisfying the following $2K+1$ equalities
\begin{eqnarray}\label{Lagrange_mult}
\frac{\partial L_{r_A,r_B,\lambda}}{\partial Q_A^j}(Q_A,Q_B,\mu) &=& 0 \ , \quad j=1,\ldots, K \nonumber \\
\frac{\partial L_{r_A,r_B,\lambda}}{\partial Q_B^j}(Q_A,Q_B,\mu) &=& 0 \ , \quad j=1,\ldots, K \\
\frac{\partial L_{r_A,r_B,\lambda}}{\partial \mu}(Q_A,Q_B,\mu) &=& 0 \ . \nonumber
\end{eqnarray}
Since we have already proved the existence of a solution $(Q_A^*,Q_B^*)\in\Dcal$ for the minimization problem \eqref{PRIMAL}, Kuhn-Tucker theorem implies the existence of $\mu^*\in\R$ such that $(Q_A^*,Q_B^*,\mu^*)\in\Dcal\times\R$ is a solution of \eqref{Lagrange_mult}. In particular, from the last equation in \eqref{Lagrange_mult} we must have that
\begin{equation*}
P_A(Q_A^*)=P_B(Q_B^*) \ .
\end{equation*}
Since $Q_A^0,Q_B^0\in\Delta^K$ are such that $P(r_A,Q_A^0)>P(r_B,Q_B^0)$, then inequality \eqref{dif_ine} holds and thus the existence of a pair $(Q_A^*,Q_B^*)\in\Delta^K\times \Delta^K$ satisfying the previous equality is guaranteed by lemma \ref{lemma_bounds}.

The continuity of the solution of the optimization problem \eqref{PRIMAL} with respect to $\lambda$, $r_A$ and $r_B$ is a consequence of Berge's maximum theorem \cite[Ch. VI, Sec. 3]{Berge}, which guarantees continuity of the minimal functional 
\begin{equation*}
\lambda \psi_A(Q_A^*,Q_A^0) + (1-\lambda) \psi_B(Q_B^*,Q_B^0)
\end{equation*} 
with respect to $r_A$, $r_B$ and $\lambda$ and upper semicontinuity of the correspondence given by
\begin{equation}\label{Q_cor}
(\lambda,r_A,r_B)\mapsto \left(Q_A^*(\lambda,r_A,r_B),Q_B^*(\lambda,r_A,r_B)\right) \ .  
\end{equation}
Since the solution of the problem is unique by theorem \ref{opt_existence}, we get that correspondence \eqref{Q_cor} is single-valued and therefore continuous.

\end{proof}

\section*{Acknowledgments}

We thank the Calouste Gulbenkian Foundation, PRODYN-ESF, POCTI, and POSI by FCT and Minist\'erio da Ci\^encia, Tecnologia e Ensino Superior, CEMAPRE, LIAAD-INESC Porto LA, Centro de Matem\'atica da Universidade do Minho and Centro de Matem\'atica da Universidade do Porto for their financial support. D. Pinheiro's research was supported by FCT - Funda\c{c}\~ao para a Ci\^encia e Tecnologia program ``Ci\^encia 2007''. D. Pinheiro would also like to acknowledge the financial support from ``Programa Gulbenkian de Est\'imulo \`a Investiga\c{c}\~ao 2006'' and FCT - Funda\c{c}\~ao para a Ci\^encia e Tecnologia grant with reference SFRH / BPD / 27151 / 2006. Parts of this work were done during visits of D. Pinheiro to Athens University of Economics and Business (Greece) and Universit\'e Toulouse III - Paul Sabatier (France), who thanks them for their hospitality. S. Xanthopoulos would like to acknowledge that this project is co-funded by the European Social Fund and National Resources - (EPEAEK-II) PYTHAGORAS.

\bibliography{biblio}
\bibliographystyle{plain} 

\end{document}